\documentclass[a4paper,english]{lipics-v2016}

\usepackage{microtype}
\usepackage{mathtools}
\usepackage{booktabs}
\usepackage{bm}
\usepackage{tikz}
\usetikzlibrary{arrows,shapes,automata,backgrounds,calc,positioning,fit,petri,decorations.pathmorphing}
\usepackage{thmtools, thm-restate}
\theoremstyle{plain}
\usepackage[linesnumbered,ruled,vlined]{algorithm2e}

\graphicspath{{./graphics/}}

\bibliographystyle{plainurl}

\newcommand{\places}{\mathcal{P}}
\newcommand{\sysplaces}{\mathcal{P}_S}
\newcommand{\envplaces}{\mathcal{P}_E}
\newcommand{\transitions}{\mathcal{T}}
\newcommand{\net}{\mathcal{N}}
\newcommand{\cut}{\mathcal{C}}
\newcommand{\marking}{\mathcal{M}}
\newcommand{\game}{\mathcal{G}}
\newcommand{\reachablemarkings}[1]{\mathcal{R}(#1)}
\newcommand{\post}[1]{\mathit{post}(#1)}
\newcommand{\pre}[1]{\mathit{pre}(#1)}
\newcommand{\past}[1]{\mathit{past}(#1)}
\newcommand{\flow}{\mathcal{F}}
\newcommand{\flows}{\mathrel{\mathcal{F}}}
\newcommand{\badmarkings}{\mathcal{B}}
\newcommand{\conflict}{\mathrel{\sharp}}
\newcommand{\markingstep}[1]{\mathrel{| #1 \rangle}}
\newcommand{\initialmarking}{\mathit{In}}
\newcommand{\innet}[1]{\bm{{#1}}}
\newcommand{\optg}[1]{\textcolor{red}{#1}}

\title{Synthesis in Distributed Environments\footnote{Supported by the European Research Council (ERC) Grant OSARES (No. 683300).}}

\author[1]{Bernd Finkbeiner}
\author[2]{Paul G\"olz}
\affil[1]{Saarland University, Saarbr\"ucken, Germany\\
  \texttt{finkbeiner@react.uni-saarland.de}}
\affil[2]{Saarland University, Saarbr\"ucken, Germany\\
  \texttt{pgoelz@cs.cmu.edu}}
\authorrunning{Bernd Finkbeiner and Paul G\"olz}

\Copyright{Bernd Finkbeiner and Paul G\"olz}

\subjclass{F.3.1 Specifying and Verifying and Reasoning about Programs}\keywords{reactive synthesis, distributed information, causal memory, Petri nets}

\EventEditors{}
\EventNoEds{0}
\EventLongTitle{}
\EventShortTitle{}
\EventAcronym{}
\EventYear{}
\EventDate{}
\EventLocation{}
\EventLogo{}
\SeriesVolume{}
\ArticleNo{}

\begin{document}

\maketitle

\begin{abstract}
Most approaches to the synthesis of reactive systems study the problem in terms of a two-player game with complete observation.  
In many applications, however, the system's environment consists of several distinct entities, and the system must actively communicate with these entities in order to obtain information available in the environment.
In this paper, we model such environments as a team of players and keep track of the information known to each individual player.
This allows us to synthesize programs that interact with a distributed environment and leverage multiple interacting sources of information.

The synthesis problem in distributed environments corresponds to
solving a special class of Petri games, i.e., multi-player games
played over Petri nets, where the net has a distinguished token
representing the system and an arbitrary number of tokens representing
the environment. While, in general, even the decidability of Petri games
is an open question, we show that the synthesis problem in distributed
environments can be solved in polynomial time for nets with up to two
environment tokens. For an arbitrary but fixed number of three or
more environment tokens, the problem is NP-complete. If the number of
environment tokens grows with the size of the net, the problem is
EXPTIME-complete.
\end{abstract}

\section{Introduction}
Automating the creation of programs is one of the most ambitious goals
in computer science.  Given a specification, a synthesis algorithm
either generates a program that satisfies the specification or
determines that no such program exists.  The promise of synthesis is
to let programmers work on a more abstract level and thus to
fundamentally simplify the development of complex software.

Most current synthesis approaches (cf. \cite{conf/cav/JobstmannGWB07,conf/tacas/Ehlers11,conf/cav/BohyBFJR12,journals/corr/JacobsBBK0KKLNP16,DBLP:conf/tacas/FaymonvilleFRT17}) are based on the
game-theoretic approach, originally introduced by Büchi and Landweber
\cite{buchilandweber}, in which the synthesis problem is seen as a
two-player game with complete observation, played between a
\emph{system} player and an \emph{environment} player.  The goal of
the system player is to ensure that the specification is satisfied;
the goal of the environment player is to ensure a violation.  A
winning strategy for the system player defines a control program that
reads in the decisions of the environment as its inputs and produces
the decisions of the system as its outputs.

A fundamental limitation of the standard game-theoretic formulation is that the environment is a monolithic block. In many applications, however, the environment consists of several distinct entities, and the system must actively communicate with these entities in order to obtain information available in the environment. In this paper, we introduce the \emph{synthesis problem in distributed environments}.
As in the standard approach, we view the synthesis problem as a game between the system and the environment.
However, rather than considering the environment as a single \emph{player} in this game, we consider it as a \emph{team} consisting of several players that may carry different information.
Both the individual environment players and the system player can increase their knowledge by interacting with other players.

The problem is related to, but very different from, the \emph{distributed synthesis} problem~\cite{pnuelirosnerundec}.
In distributed synthesis, it is the \emph{system} that is partitioned into multiple players, corresponding to multiple processes.
The key difficulty here is to coordinate the strategies of the system players.
In the synthesis problem in distributed environments, it is instead the \emph{environment} that consists of multiple entities.
The key difficulty here is for the system player to synchronize with the right environment players at the right points in time.

We study the synthesis problem in distributed environments in the framework of \emph{Petri games}~\cite{petrigames}.
The players of a Petri game are represented as the tokens of a Petri net, partitioned into the system and environment players. Synthesis in distributed environments corresponds to Petri games with a single system token and multiple environment tokens. We assume that the underlying Petri net is bounded, i.e., only a bounded number of players can be generated over the course of a game.
For unbounded nets, Petri games are known to be undecidable~\cite{petrigames}.

The players of a Petri game advance asynchronously except for synchronous interactions, in which players exchange knowledge.
We assume that, whenever multiple players interact, they exchange information both truthfully and maximally.
This model of knowledge is called \emph{causal memory}.
In this paper, we restrict our synthesis to \emph{safety} specifications, i.e., the system must prevent the global state from entering certain bad configurations.

We illustrate our setting with a small access control example. Suppose you would like to synthesize a lock controller for a safe that contains sensitive business information.
Corporate policy mandates that the safe may only be jointly opened by two employees and that both must previously have confirmed their identity with a corresponding authentication authority.
The environment of the lock controller thus consists of four independent players: the employees $e^1$ and $e^2$ and their authenticators $a^1$ and $a^2$.
These entities interact with each other (when $a^1$ authenticates $e^1$ or $a^2$ authenticates $e^2$) and with the system player (when $e^1$ or $e^2$ request the safe to open). Since there is no direct interaction between the lock controller and the authenticators, the knowledge about the authentication must be provided to the lock controller by the employees.\footnote{In Petri games, all players are truthful. Think of the tokens as carriers of information, e.g., a cryptographically secured smart card carried by the employee.}

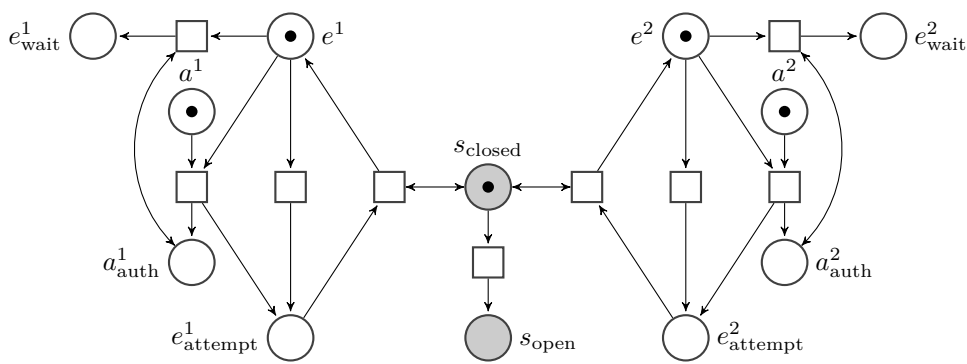
\begin{figure}
\center
\begin{tikzpicture}[on grid,node distance=1cm and 1.3cm,label distance=-.03cm,>=stealth',bend angle=45,auto]
        \tikzstyle{env}=[circle,thick,draw=black!75,minimum size=6mm]
        \tikzstyle{sys}=[env,fill=black!20]
        \tikzstyle{trans}=[rectangle,thick,draw=black!75,minimum size=4mm]

    \node[env] (e21) [tokens=1,label=left:$e^2\vphantom{e^2_\text{wait}}$] {};
		\node[trans] (te23) [right=of e21] {}
			edge [pre] (e21);
		\node[env] (e2p) [right=of te23,label=right:$e_\text{wait}^2$] {}
			edge [pre] (te23);
    \node[env] (e2a1) [tokens=1,below=of te23,label=above:$a^2$] {};
		\node[trans] (te2a) [below=of e2a1] {}
			edge [pre] (e2a1)
			edge [pre] (e21);
		\node[env] (e2a2) [below=of te2a,label=right:$a_\text{auth}^2$] {}
			edge [pre] (te2a)
			edge [<->,bend right] (te23);
		\node[trans] (te21) [left=of te2a] {}
			edge [pre] (e21);
		\node (phantom2) [below=of e2a2] {};
		\node[env] (e22) [left=of phantom2,label=right:$e_\text{attempt}^2$] {}
			edge [pre] (te21)
			edge [pre] (te2a);
		\node[trans] (te22) [left=of te21] {}
			edge [pre] (e22)
			edge [post] (e21);
			
		\node[sys] (s1) [tokens=1,left=of te22,label=above:$s_\text{closed}$] {}
			edge [<->] (te22);
		\node[trans] (ts) [below=of s1] {}
			edge [pre] (s1);
    \node[sys] (s2) [below=of ts,label=right:$s_\text{open}\vphantom{^1}$] {}
			edge [pre] (ts);
			
    \node[env] (e11) [tokens=1,label=right:$e^1\vphantom{e^1_\text{wait}}$] at ($(e21)+(-5.2cm,0)$) {};
		\node[trans] (te13) [left=of e11] {}
			edge [pre] (e11);
		\node[env] (e1p) [left=of te13,label=left:$e_\text{wait}^1$] {}
			edge [pre] (te13);
		\node[env] (e1a1) [tokens=1,below=of te13,label=above:$a^1$] {};
		\node[trans] (te1a) [below=of e1a1] {}
			edge [pre] (e1a1)
			edge [pre] (e11);
		\node[env] (e1a2) [below=of te1a,label=left:$a_\text{auth}^1$] {}
			edge [pre] (te1a)
			edge [<->,bend left] (te13);
		\node[trans] (te11) [right=of te1a] {}
			edge [pre] (e11);
		\node (phantom1) [below=of e1a2] {};
		\node[env] (e12) [right=of phantom1,label=left:$e_\text{attempt}^1$] {}
			edge [pre] (te11)
			edge [pre] (te1a);
		\node[trans] (te12) [right=of te11] {}
			edge [pre] (e12)
			edge [post] (e11)
			edge [<->] (s1);
\end{tikzpicture}
  \caption{Petri game of the access control example. If the system player lies in $s_\text{open}$ while there is still a player in $a^1$ or $a^2$, the system immediately loses the game.}
\label{fig:access}
\end{figure}
Figure~\ref{fig:access} shows how our access control scenario can be modeled as a Petri game.
Players are represented by tokens (dots) that move between places (circles) using transitions (squares).
The system player, who only moves between places marked in gray, starts in a place indicating that the safe is closed.
The game allows her to consult with any employee and remain in her position, or to move to the place $s_\text{open}$ to open the safe.
The first employee starts in $e^1$ and can either directly move to $e_\text{attempt}^1$ or can synchronize with her authenticator to move there.
In the latter case, the authenticator simultaneously moves to $a_\text{auth}^1$, where she cannot authenticate $e^1$ a second time.
When the employee is in $e_\text{attempt}^1$, the system player can choose to synchronize with her, moving the employee back to $e^1$ and exchanging all knowledge between the players.
In particular, the system player learns whether the employee was authenticated.
Afterwards, the employee can attempt to open the safe again, for example to make up for not being authenticated the last time.
If the employee has already authenticated, she can alternatively move to $e_\text{wait}^1$ and remain there.
This possibility forces the locking mechanism to stop waiting for communication once it knows enough and to unlock the safe instead.
The second employee is modeled symmetrically.
To prevent the system from unlocking prematurely, we declare that all situations in which the safe is open but in which one authenticator has not moved yet as losing for the system.

A winning strategy for this game, as found by our synthesis algorithm, would be to allow communication with $e^1$ and nothing else until (possibly never) the system learns that the employee has authenticated.
Then, it allows communication with $e^2$ until the same is true for the second employee.
Finally, it opens the safe.

\subparagraph*{Related work}
Synthesis in distributed environments is related to planning under partial observation \cite{aimodern} in that our strategies also combine information gathering and action.
However, the classical partial-information setting does not capture the knowledge of different actors.
With causal memory, a player's knowledge naturally refers to past observations and to the knowledge of other players.
Synthesis in distributed environments can be expressed as a control problem \cite{causalmemory,acyclicarchitectures} for Zielonka's asynchronous automata \cite{zielonka}.
Because this model is very expressive, all known decidability results assume strong restrictions on the communication architecture.
Since our environment players are allowed to freely interact with each other and with the system, we cannot apply these results.
Petri games were introduced in \cite{petrigames} and there is growing tool support for solving Petri games~\cite{adam,synt2017}. The decidability of general Petri games is an open question. The only previously known decision procedure is restricted to the case of a single environment token~\cite{petrigames}. In this paper, we solve the complementary case, where the number of environment tokens is unbounded (but there is only one system token). There is also a semi-algorithm for solving Petri games~\cite{bounded}. This approach finds finitely representable winning strategies, but does not terminate if no winning strategy exists.

\subparagraph*{Contributions}
Our main technical contribution is an EXPTIME algorithm for deciding bounded Petri games with one system player and an arbitrary number of environment players.
Previously, the synthesis problem for Petri games with more than one environment player was open.
We provide a matching lower bound to show that our algorithm is asymptotically optimal.
If the number of environment players is kept constant, we show that the problem can be solved in polynomial time for up to two environment players whereas it is NP-complete for three or more environment players.
The following table sums up the complexity of deciding $k$-bounded Petri games with one system player and $e$ environment players, for any $k \geq 1$:
\begin{center}
  \begin{tabular}{l l}
    \toprule
    $e \leq 2$ & P \\
    $e \geq 3$ & NP-complete \\
    $e$ grows with net & EXPTIME-complete \\
    \bottomrule
  \end{tabular}
\end{center}

\section{Petri nets}
\label{sec:petrinets}
We recall notions from the theory of Petri nets as used in \cite{petrigames}.
A tuple $\net = (\places, \transitions, \flow, \initialmarking)$ is called a \emph{Petri net} if it satisfies the following conditions:
\begin{itemize}
  \item The set of \emph{places} $\places$ and the set of
    \emph{transitions} $\transitions$ are disjoint;
  \item The \emph{flow relation} $\flow$ is a multiset over $(\places \times \transitions) \cup (\transitions \times \places)$, i.e., $\net$ is a directed, bipartite multigraph with nodes
    $\places \cup \transitions$ and edges given by $\flow$. We use the term \emph{nodes} to refer to places and transitions simultaneously.
    For nodes $x,y$, we write $x \flows y$ to denote $(x,y) \in \flow$;
  \item The \emph{initial marking} $\initialmarking$ is a finite multiset over $\places$;
  \item We require finite synchronization and nonempty pre- and postconditions:
    For a node $x$, define the \emph{precondition} as a multiset $\pre{x}$ such that $\pre{x}(y) = \flow(y,x)$ for all nodes $y$ and similarly define the \emph{postcondition} by $\post{x}(y) = \flow(x,y)$.
    Then, all transitions $t$ must satisfy $0 < |\pre{t}| < \infty$ and $0 < |\post{t}| < \infty$.
\end{itemize}
A net is called \emph{finite} if it contains finitely many nodes.

By convention, the components of a net $\net$ are named $\places$, $\transitions$, $\flow$ and $\initialmarking$, and similarly for nets named $\net_1$, $\net^\sigma$, $\net^U$, etc.
We graphically specify Petri nets as multigraphs, where places are represented by circles, transitions by squares and the flow relation by arrows.
In addition, the number of dots in a place reflects the multiplicity of this place in the initial marking.
Apart from the gray color of certain places, Fig.~\ref{fig:access} shows a Petri net with named places.

A \emph{marking} $\marking$ of $\net$ is a finite multiset over $\places$.
We think of the Petri net as a board on which a finite number of \emph{tokens} moves between places by using transitions.
A marking then represents a certain configuration by listing the current number of tokens on every place.
We can move from one marking to another by firing a transition $t$, i.e., by removing tokens in $\pre{t}$ and putting tokens into $\post{t}$ instead.
If the total number of tokens changes in this process, we think of such transitions as generating or consuming tokens.
We say that $t$ is \emph{enabled} in a marking $\marking$ if $\pre{t} \subseteq \marking$.
If this is the case, we can obtain a new marking $\marking' \coloneqq \marking - \pre{t} + \post{t}$ by \emph{firing} $t$, and we write $\marking \markingstep{t} \marking'$ to denote that $\marking'$ can be constructed from $\marking$ and $t$ in this way.
A marking is said to be \emph{reachable} if it can be reached from the initial marking by firing a finite sequence of transitions.
We generalize preconditions and postconditions to sets $S$ of nodes by defining $\pre{S} \coloneqq \biguplus_{x \in S} \pre{x}$ and analogously for $\post{S}$.
A Petri net is \emph{$k$-bounded} for a natural number $k \geq 1$ if, for all reachable markings $\marking$ and places $p$, $\marking(p) \leq k$ holds.
We call a net \emph{bounded} if it is $k$-bounded for some $k$.

We are mainly interested in Petri nets as a model for the causal dependencies between events.
These dependencies are made explicit in \emph{occurrence nets}, certain acyclic nets in which each place has a unique causal history.
Before giving their definition, we introduce notation to capture different kinds of causal relationships between nodes.
We denote the transitive closure of the support of $\flow$ by $<$ and its reflexive and transitive closure by $\leq$.
We call $x$ and $y$ \emph{causally related} if $x \leq y$ or $y \leq x$.
The \emph{causal past} of a node $x$ is the set $\past{x} \coloneqq \{y \in \places \cup \transitions \mid y \leq x\}$.
We extend this notion to sets $S$ of nodes by setting $\past{S} \coloneqq \bigcup_{x \in S} \past{x}$.
Apart from being causally related, two nodes $x, y$ might also be mutually exclusive, i.e., they might be the result of alternative, nondeterministic choices.
We say that $x$ and $y$ are \emph{in conflict}, for short $x \conflict y$,
if there exists a place $p$, $p \neq x, p \neq y$, such that $x$ and $y$ can be reached following the flow relation from $p$ via different outgoing transitions.
If $x$ and $y$ are neither causally related nor in conflict, we call them \emph{concurrent}.

An \emph{occurrence net} is a net $\net$ that satisfies all of the following conditions:
the pre- and postconditions of transitions are sets, not general multisets;
each place has at most one incoming transition;
the initial marking is the set $\{p \in \places \mid \pre{p} = \emptyset\}$;
the inverse flow relation $\flow^{-1}$ is well-founded, i.e., if we start from any node and follow the flow relation backwards, we eventually reach a place in the initial marking;
no transition is in conflict with itself.
Occurrence nets are $1$-bounded, i.e., their reachable markings are sets.

We call a maximal set $\cut$ of pairwise concurrent places in an occurrence net a \emph{cut}.
In Appendix~\ref{sec:cutsmarkings}, we prove that the \emph{finite} cuts of an occurrence net are exactly its reachable markings.
We further prove that the occurrence nets that we will work with only have finite cuts.
Thus, for our purposes, we can use the terms interchangeably (Corollary~\ref{lem:boundedbranchingcutsmarkings}).

A \emph{homomorphism} from a Petri net $\net_1$ to a Petri net $\net_2$ is a function $\lambda : \places_1 \cup \transitions_1 \rightarrow \places_2 \cup \transitions_2$ that only maps places to places and transitions to transitions such that, for all $t \in \transitions_1$, $\lambda[\pre{t}] = \pre{\lambda(t)}$ and $\lambda[\post{t}] = \post{\lambda(t)}$.
$\lambda$ is called \emph{initial} if additionally $\lambda[\initialmarking_1] = \initialmarking_2$ holds.

An \emph{initial branching process} $\beta$ of a net $\net$ is a pair $(\net^U, \lambda)$ where $\net^U$ is an occurrence net and $\lambda$ is an initial homomorphism from $\net^U$ to $\net$ such that
$\forall t_1, t_2 \in \transitions^U.\;(\pre{t_1} = \pre{t_2} \land \lambda(t_1) = \lambda(t_2)) \rightarrow t_1 = t_2$.
Conceptually, a branching process describes a subset of the possible behavior of a net as an occurrence net.
If a place or a transition in the original net can be reached on different paths or with different knowledge, the branching process splits up this node.
The homomorphism $\lambda$ is used to label those multiple instances with the original node in $\net$.
The additional condition means that the branching process may not split up a transition unnecessarily:
For the same precondition, at most one instance of a certain transition can be present in the branching process.

\section{Petri games}
\label{sec:petrigames}
In a Petri game, we partition the places of a finite Petri net into two disjoint subsets: the \emph{system places} $\sysplaces$ (represented in gray) and the \emph{environment places} $\envplaces$ (represented in white).
For convenience, we write $\places$ for the set of all places of the game $\sysplaces \cup \envplaces$.
A token on a system place represents a \emph{system player}, a token on an environment place an \emph{environment player}.
Additionally, a Petri game also identifies a set of \emph{bad markings} $\badmarkings$, which the system players need to avoid.\footnote{This is more general than in \cite{petrigames}, where instead of avoiding a set of arbitrary markings, the system tries to avoid all markings that have a nonempty intersection with a set of bad places.
\cite{bounded} also uses arbitrary sets of bad markings.
Since the hardness proofs in Theorems~\ref{thm:exphardness} and \ref{thm:nphardness} only use bad markings of this shape, this generalization does not increase the computational hardness of our setting.
In our complexity analyses in Theorems~\ref{thm:exp}, \ref{thm:nphardness}, \ref{thm:polynomialtwo} and in Appendix~\ref{sec:algocomp}, we do not commit to a specific input encoding of bad markings such that our results remain valid if a set of bad places is given instead.}
If the game reaches a marking $\marking$ in $\badmarkings$, the environment wins; the system wins if this is never the case.
Formally, a \emph{Petri game} $\game$ is a tuple $(\sysplaces, \envplaces, \transitions, \flow, \initialmarking, \badmarkings)$.
We call $\net^\game \coloneqq (\places, \transitions, \flow, \initialmarking)$ the \emph{underlying net} of $\game$.

Transitions whose entire precondition belongs to the environment are called \emph{purely environmental}.
Otherwise, we call the transition a \emph{system transition}.

Since Petri games aim to model the information flow in a system, a system player's decisions may only depend on information that she has witnessed herself or that she has obtained by communicating with other players.
We thus describe strategies of the system as branching processes of the underlying net of the game, where the causal dependencies are made explicit.
While the game is played on the underlying net, the strategy keeps track of the current state of the game as well as its causal history.
As we show in Lemma~\ref{lem:branchingmarkings} in Appendix~\ref{sec:cutsmarkings}, every reachable marking of the branching process corresponds to a reachable marking in the underlying net.
The marking in the strategy might have less enabled transitions than the one in the underlying net, which means that the strategy can prevent certain transitions from firing.
The game progresses by nondeterministically firing transitions that are allowed by the strategy.
No matter which transitions are fired in which order, the system players need to ensure certain properties of the game.
Because of this, it is sometimes useful to think of these choices as being made by an adversarial scheduler.

A winning, deadlock-avoiding \emph{strategy} is an initial branching process $\beta_\sigma = (\net^\sigma, \lambda)$ of the underlying net of the game that satisfies the following four conditions:
\begin{description}
\item[justified refusal] Let $S$ be a set of pairwise concurrent places in $\places^\sigma$ and $\innet{t}$ be a transition \emph{in the underlying net}, where
  $\lambda[S] = \pre{\innet{t}}$ but there is no $t \in \transitions^\sigma$ such that $\lambda(t) = \innet{t}$ and $\pre{t} = S$.
    Then, there must be a place $s \in S \cap \sysplaces^\sigma$ such that $\innet{t} \notin \lambda(\post{s})$.
\item[safety] For all $\marking \in \reachablemarkings{\net^\sigma}$, $\lambda[\marking] \notin \badmarkings$.
\item[determinism] For all $s \in \sysplaces^\sigma$ and all reachable markings $\marking$ in $\net^\sigma$ that contain $s$, there is at most one transition $t \in \post{s}$ that is enabled in $\marking$.
\item[deadlock avoidance] For all $\marking \in \reachablemarkings{\net^\sigma}$ we require that, if any transition of the underlying net is enabled in $\lambda[\marking]$, then some transition in the strategy must be enabled in $\marking$.
\end{description}
In the above conditions, we extended the notion of system places to the strategy by setting $\sysplaces^\sigma \coloneqq \places^\sigma \cap \lambda^{-1}(\sysplaces)$. We similarly define the environment places of the strategy as $\envplaces^\sigma \coloneqq \places^\sigma \cap \lambda^{-1}(\envplaces)$.
To distinguish more clearly between nodes in the strategy and nodes in the underlying net, we always use bold variable names such as $\innet{p}$ or $\innet{t}$ for the latter.

Justified refusal means that a system player influences the course of the game by refusing to take part in certain transitions in her postcondition.
Even if every place in $\pre{\innet{t}}$ contains a token for some $\innet{t} \in \transitions$, the transition can fire iff, for every place in $\pre{\innet{t}} \cap \sysplaces$, the corresponding system player allows this transition.
In particular, purely environmental transitions cannot be restricted by the strategy.
More precisely, the condition refers to all possible preconditions $S$ where a transition could have been added to the strategy, but was not.
If no instance $t$ of $\innet{t}$ with the right precondition exists so far, there must be a system place in $S$ that refuses to take part in any instance of $\innet{t}$.
Note that a system player can only refuse all transitions in the strategy with the label $\innet{t}$ or must allow all of them.

The safety objective requires that the game never reaches a bad marking.
Determinism enforces that, from a system player's perspective, all sources of uncertainty are in the vicinity of an environment player.
This does not prevent a system player from allowing multiple transitions, as long as these transitions are enabled in different markings.

Finally, we require the strategy to avoid deadlocks.
Without this condition, a strategy might simply refuse to fire any system transition at all.
In general, the system prefers to fire less transitions since they might potentially lead to bad markings and since allowing too many of them might cause nondeterminism.
The criterion enforces that, whenever no purely environmental transition is enabled in a marking but some system transition is enabled, the strategy must allow one of them in order to keep the game going.
This still allows the strategy to enter markings in which no transition is enabled at all.
Similarly, a system player may refuse all transitions in her postcondition as long as she knows that the game will always allow another player to move.

\section{Reduction to games over finite graphs}
We wish to decide whether a $k$-bounded Petri game with one system player admits a winning, deadlock-avoiding strategy.
In case of a positive answer, we also want to obtain a description of such a strategy.
Note that the system player's decisions can be based on an unboundedly growing amount of information.
Because of this, it is not at all obvious that the existence of a strategy is decidable and that strategies can be represented in finite space.

In this section and the next, we show that the decision problem is EXPTIME-complete in the size of the net.
We establish the upper bound through a many-one reduction to a complete-observation game over a finite graph.
We consider Petri games with a single system player, i.e., 
all reachable markings $\marking$ contain exactly one system place, which we denote by $\innet{s_\marking}$.
In the cuts $\cut$ of a strategy, we denote the unique system place by $s_\cut$.

\begin{figure}
  \begin{align*}
  \mathcal{V}_0\optg{'} = &\left\{ \left( \marking, \top \optg{, \{ \innet{s_\marking} \}} \right) \mid \marking \in \reachablemarkings{\net} \right\} \\
  \mathcal{V}_1\optg{'} = &\left\{ \left( \marking, c\optg{, R} \right) \mid \marking \in \reachablemarkings{\net}; c \subseteq \post{\innet{s_\marking}} \optg{; \innet{s_\marking} \in R \subseteq \marking} \right\} \\
  \mathcal{I}\optg{'} =&\,(\initialmarking, \top \optg{, \{\innet{s_\initialmarking}\}}) \\
  \mathcal{E}\optg{'} = &\left\{ \left( \marking, \top \optg{, \{\innet{s_\marking}\}} \right) \rightarrow \left( \marking, c \optg{, \{\innet{s_\marking}\}} \right) \mid c \subseteq \post{\innet{s_\marking}} \right\} \tag{E\optg{'}1} \\
       \cup &\left\{ \left( \marking, c\optg{, R} \right) \rightarrow \left( \marking', c\optg{, R'} \right) \,\middle|\, \begin{aligned}
         &\innet{t}\text{ purely environmental transition}; \marking \markingstep{\innet{t}} \marking'\optg{;} \\
                                                          &\optg{\innet{o} \in \post{\innet{t}}; R' = R - \pre{\innet{t}} + \left\{ \innet{o} \right\}}
       \end{aligned} \right\} \tag{E\optg{'}2} \\
       \cup &\left\{ \left( \marking, c\optg{, R} \right) \rightarrow \left( \marking', \top \optg{, \{ \innet{s_{\marking'}} \}} \right) \,\middle|\,
         \begin{aligned}
           &\text{$\innet{t}$ system transition}; \innet{t} \in c; \marking \markingstep{\innet{t}} \marking'\optg{;}\\
           &\optg{R \subseteq \pre{\innet{t}}}
         \end{aligned}
       \right\} \tag{E\optg{'}3}\\
  \mathcal{X}\optg{'} =& \left\{ (\marking, c\optg{, R}) \mid \marking \in \badmarkings \right\} \tag{X\optg{'}1} \\
      \cup & \left\{ (\marking, c\optg{, R}) \mid \innet{t}, \innet{t'} \in c; \innet{t} \neq \innet{t'}; \text{both enabled in $\marking$} \right\} \tag{X\optg{'}2a} \\
      \cup & \left\{ (\marking, c\optg{, R}) \mid \innet{t} \in c; \text{enabled in $\marking$}; 0 < \pre{\innet{t}}(\innet{p}) < \marking(\innet{p}) \text{ for some $\innet{p} \in \places$} \right\} \tag{X\optg{'}2b} \\
      \cup & \left\{ (\marking, c\optg{, R}) \mid \text{Some $\innet{t} \in \transitions$ enabled$;$ all such $\innet{t}$ involve the system and $\innet{t} \notin c$} \right\} \tag{X\optg{'}3}
\end{align*}%
  \caption{Description of the two graph games constructed from $\game$. For the components of $\mathit{Graph}(\game)$, ignore all colored parts. Including them, we get the components of $\mathit{Graph}'(\game)$.}
\label{fig:ggame}
\end{figure}

For a given Petri net $\game$ with underlying net $\net$, Fig.~\ref{fig:ggame} defines the components of the translated graph game $\mathit{Graph}(\game) = (\mathcal{V}_0, \mathcal{V}_1, \mathcal{I}, \mathcal{E}, \mathcal{X})$ if we ignore all colored parts.
The set of \emph{vertices} $\mathcal{V}$ consists of two disjoint subsets $\mathcal{V}_0$ and $\mathcal{V}_1$, which describe the vertices belonging to players~0 and 1, respectively.
The game begins in the initial vertex $\mathcal{I}$.
From a vertex $v \in \mathcal{V}$, the current player chooses an outgoing \emph{edge} in $\mathcal{E}$.
A \emph{play}, i.e., a maximal sequence $\mathcal{I} = v_0\,v_1\,\dots$ of vertices with $(v_i, v_{i+1}) \in \mathcal{E}$ for all $i$, is winning for Player~0 if no vertex is an element of the bad vertices $\mathcal{X}$.
A \emph{strategy} $T_\sigma$ (for Player~0) is a $\mathcal{V}$-labeled tree whose root is labeled with $\mathcal{I}$.
If a node is labeled with a vertex in $\mathcal{V}_1$, its children are labeled with all successor vertices.
Otherwise, it has a single child labeled with one particular successor.
The strategy is \emph{winning} if all maximal paths through it are labeled with winning plays.
All such games are \emph{memoryless determined}:
If there is any winning strategy, there exists a winning strategy that selects, from any two nodes with the same label, the same successor vertex.

The vertices of the game essentially represent the reachable markings of the Petri game and Player~1 moves between markings by firing enabled transitions.
This means that Player~1 plays the role of both the environment and the nondeterminism stemming from different schedulings.
Player~0, who represents the system, can only act by refusing to allow some transitions in the postcondition of the single system place in the marking.
Since these decisions should not depend on scheduled, purely environmental transitions that the system would not yet know in the Petri game, Player~0 is forced to choose directly after the system player has taken a transition.
Similarly to \cite{petrigames}, we therefore add a \emph{commitment}, i.e., a set $c \subseteq \post{\innet{s_\marking}}$, to each vertex of the graph game.
The commitment keeps track of the set of outgoing transitions of the current system place that the system player allows.
Player~0's vertices are marked with $\top$ instead of a commitment to denote that she needs to decide on a commitment in the next step (E1).
Player~1's choices are then restricted such that she can fire all purely environmental transitions (E2) but can only fire system transitions that appear in the commitment (E3).
The bad vertices correspond to bad markings (X1), nondeterminism (X2a,\,X2b) and deadlock (X3).

To prove the reduction correct, we need to show that $\game$ has a winning, deadlock-avoiding strategy iff Player~0 has a winning strategy in $\mathit{Graph}(\game)$.
For this, we give translations between these types of strategies.

\subsection{From Petri game strategies to graph game strategies}
Assume that we are given a winning, deadlock-avoiding strategy $\beta_\sigma = (\net^\sigma,\lambda)$ for $\game$.
We inductively build a strategy $T_\sigma$ for $\mathit{Graph}(\game)$.
Whenever we encounter a node labeled with a vertex belonging to Player~0, we choose an outgoing edge, i.e., a suitable commitment.

For any such node, we look at the sequence of labels on the path that leads to it from the root.
This sequence is a prefix of a play, which we denote by $v_0 \, v_1 \, \dots \, v_r = (\marking, \top)$.
Edges of type (E1) in this prefix do not change the marking.
All other edges are associated with firing a transition.
Starting from the initial cut, we fire $\lambda$-preimages of these transitions one after another.
If multiple transitions could be responsible for the edge or if multiple preimages are enabled, choose one canonically.
For edges of type (E2), such preimages always exist because justified refusal does not allow $\beta_\sigma$ to restrict purely environmental transitions.
In the case of edges of type (E3), we make sure to only include transitions in the commitment if the existence of such preimages is ensured.
By consecutively firing such a sequence of transitions, we reach a cut $\cut$ such that $\lambda[\cut] = \marking$.
Set $c \coloneqq \{ \lambda(t) \mid t \in \post{s_\cut} \}$ and choose the outgoing edge leading to $(\marking, c)$ to construct the strategy.

For well-definedness, it remains to show that, when Player~1 schedules a system transition $\innet{t} \in c$ the next time, a preimage of this transition will be enabled in the cut $\cut'$ that corresponds to the node in the strategy.
Since, in between, only purely environmental transitions will be fired, $s_\cut$ will still be part of $\cut'$.
The system place has a preimage of $\innet{t}$ in its postcondition by the definition of $c$.
Therefore, a preimage enabled in $\cut'$ exists by justified refusal.

\begin{restatable}{theorem}{petrigraph}
  \label{thm:petrigraph}
  $T_\sigma$ is a winning strategy for Player~0.
\end{restatable}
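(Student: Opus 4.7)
The plan is to establish a cut correspondence invariant for every prefix of a play realized in $T_\sigma$, and then use this invariant together with the four defining properties of $\beta_\sigma$ to rule out each kind of bad vertex in $\mathcal{X}$.

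I would prove by induction on the length of a prefix $v_0\,v_1\,\dots\,v_r$ visited in $T_\sigma$ that the construction in the text produces a reachable cut $\cut$ of $\net^\sigma$ with $\lambda[\cut] = \marking$, where $\marking$ is the marking component of $v_r$; and that whenever $v_r = (\marking, c)$ is a Player~1 vertex, the commitment chosen by $T_\sigma$ satisfies $c = \{\lambda(t) \mid t \in \post{s_\cut}\}$. The base case is immediate from $\lambda[\initialmarking^\sigma] = \initialmarking$. The inductive step splits by edge type: an (E1)-edge leaves $\cut$ unchanged and only fixes the commitment by the construction rule; an (E2)-edge requires firing an enabled $\lambda$-preimage of a purely environmental transition in $\cut$, which exists because justified refusal does not restrict purely environmental transitions; and an (E3)-edge invokes the well-definedness argument already sketched in the text to find an enabled preimage of the committed, scheduled system transition in $\cut$.

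With the invariant in hand, three of the four bad-vertex types are quick. For (X1), safety of $\beta_\sigma$ directly forbids $\lambda[\cut] = \marking \in \badmarkings$. For (X3), deadlock avoidance of $\beta_\sigma$ delivers some $t \in \transitions^\sigma$ enabled in $\cut$; if every transition of the underlying net enabled in $\marking$ involves the system, then $\lambda(t)$ does too, so $s_\cut \in \pre{t}$ and hence $\lambda(t) \in c$ by construction, contradicting the condition of (X3). For (X2a), two distinct committed transitions $\innet{t} \neq \innet{t'}$ that are both enabled in $\marking$ admit $\lambda$-preimages in $\post{s_\cut}$, which are enabled in $\cut$ by the same argument used for (E3)-edges; since $\lambda$ separates them, determinism of $\beta_\sigma$ is violated.

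The main obstacle is (X2b), where justified refusal has to be invoked twice to manufacture a determinism violation. Here some enabled $\innet{t} \in c$ satisfies $\pre{\innet{t}}(\innet{p}) < \marking(\innet{p})$ for an $\innet{p} \in \places$, and the single-system-player assumption forces $\innet{p} \neq \innet{s_\marking}$. A first application of justified refusal, using that $s_\cut$ allows $\innet{t}$, produces an enabled preimage $t \in \post{s_\cut}$ of $\innet{t}$ in $\cut$. Because $\cut$ contains strictly more preimages of $\innet{p}$ than $\pre{t}$ consumes, swapping one $\innet{p}$-preimage in $\pre{t}$ for an unused one in $\cut$ yields a pairwise concurrent set $S' \subseteq \cut$ with $\lambda[S'] = \pre{\innet{t}}$ and $s_\cut \in S'$. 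A second application of justified refusal, once more using that $s_\cut$ allows $\innet{t}$ and is the only system place in $S'$, produces a transition $t'' \in \transitions^\sigma$ with $\pre{t''} = S'$ and $\lambda(t'') = \innet{t}$. This $t''$ is enabled in $\cut$, lies in $\post{s_\cut}$, and differs from $t$ because it has a different precondition, contradicting determinism.
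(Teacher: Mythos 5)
Your proposal is correct and follows essentially the same route as the paper's own proof: reconstruct a cut $\cut$ with $\lambda[\cut]=\marking$ and $c=\{\lambda(t)\mid t\in\post{s_\cut}\}$ along the play prefix, then refute (X1) by safety, (X2a) and (X2b) by determinism (using justified refusal to materialize enabled preimages), and (X3) by deadlock avoidance. Your treatment of (X2b) — swapping an unused preimage of $\innet{p}$ and invoking justified refusal a second time to obtain a second enabled instance with a different precondition — is just a more explicit spelling-out of the one-sentence argument the paper gives for that case, and your (X3) argument is the contrapositive of the paper's; neither constitutes a different approach.
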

\begin{proof}[Proof sketch (detailed in Appendix~\ref{sec:petrigraphproof})]
  Consider a node $n$ in $T_\sigma$ with the label $(\marking, c)$.
  As in the construction of the graph game strategy, we canonically fire transitions corresponding to the prefix until we reach a cut $\cut$ such that $\lambda[\cut] = \marking$.
  Now assume that $n$ is a bad vertex.
  Each kind of bad vertices (X1), (X2a), (X2b) or (X3) translates to a violation of the properties of a winning, deadlock-avoiding strategy in $\cut$, contradiction.
  Thus, no node is labeled with a bad vertex and the strategy is winning.
\end{proof}

\subsection{From graph game strategies to Petri game strategies}
\label{sec:graphpetri}
The converse direction is harder to prove.
So far, we have shown that, if the system can win a Petri game with incomplete information, Player~0 can also win a game with full information on the marking graph.
This is not surprising.
In this step however, we must show that this additional information does not give an advantage to Player~0 that the system does not have.
In the construction of $\mathit{Graph}(\game)$, we have already introduced commitments, which prevent Player~0 from using information about the scheduling of purely environmental transitions for her subsequent move.
However, Player~0 might still use this information to make her move after the next.
If the system player does not learn about the environment transition in her next step, this is an illegal flow of information.

The main idea now is that, while some parts of the graph game strategy do not correspond to a valid information flow in the Petri game, others do.
In these latter parts, the strategy contains all necessary decisions to win the Petri game.
Conceptually, we need to cut away unreasonable plays from the strategy.
Alternatively, we might say that a forbidden information flow only happens if Player~1 does not play in an intelligent way.
From Player~1's point of view, it is dangerous and unnecessary to schedule a purely environmental transition and then schedule a system transition unless the former is needed to enable the latter.
If she does so, Player~0 gains potentially useful information, which Player~1 could easily prevent by scheduling the purely environmental transition at a later point, i.e., when it is necessary to enable the next system transition or when a winning situation for Player~1 (bad marking, nondeterminism or deadlock) can be reached without any more moves by Player~0.
To make this idea formal, we construct another graph game $\mathit{Graph'}(\game)$, which restricts Player~1's moves to enforce the behavior described above.
Then, we can easily show that any winning strategy for $\mathit{Graph}(\game)$ translates to a winning strategy for $\mathit{Graph'}(\game)$, where Player~1 has fewer options.
In a second step, we will translate the strategy from $\mathit{Graph'}(\game)$ back to a strategy for the Petri game, which will prove the desired equivalence.

The new graph game $\mathit{Graph'}(\game) = (\mathcal{V}'_0, \mathcal{V}'_1, \mathcal{I}', \mathcal{E}', \mathcal{X}')$ is defined in Fig.~\ref{fig:ggame} by taking into account the colored parts.
The vertices of $\mathit{Graph}(\game)$ are extended by a third component, a \emph{responsibility multiset} $R$ over $\places$.
This multiset $R \subseteq \marking$ tracks the information generated by firing transitions.
At any point in the Petri game, a subset $S$ of the cut such that $\lambda[S] = R$ together carries the information about all fired transitions.
This notion is made precise in Lemma~\ref{lem:responsibility} in Appendix~\ref{sec:associatedlkcproof}.
After a transition has been fired, every token in its postcondition carries the information about the causal pasts of all participating tokens and about the fired transition itself.
For this reason, when an edge of type (E'2) fires a purely environmental transition $\innet{t}$, the tokens in $\pre{\innet{t}}$ are subtracted from $R$, and Player~1 chooses an arbitrary token $\innet{o} \in \post{\innet{t}}$, which will carry the information to the system player.
Edges of type (E'3) deal with $R$ similarly in that they also subtract the precondition from $R$ and instead add one element of the postcondition, namely the system place.
In contrast to $\mathit{Graph}(\game)$, these edges only allow system transitions if the responsibility multiset is included in the precondition, i.e., if the system player would directly learn about all previously scheduled transitions by taking this system transition.

\begin{restatable}{theorem}{graphgraphprime}
  \label{thm:graphgraphprime}
  If there is a winning strategy for $\mathit{Graph}(\game)$, there exists a winning strategy for $\mathit{Graph'}(\game)$.
\end{restatable}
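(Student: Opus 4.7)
The plan is to exploit memoryless determinacy for safety games on finite graphs, together with the observation that $\mathit{Graph}'(\game)$ is essentially obtained from $\mathit{Graph}(\game)$ by restricting Player~1's options. First I would invoke memoryless determinacy to obtain a memoryless winning strategy for $\mathit{Graph}(\game)$, viewed as a function $\sigma : \mathcal{V}_0 \to \mathcal{V}$ that always selects a valid successor. I then lift $\sigma$ to a memoryless strategy $\sigma' : \mathcal{V}'_0 \to \mathcal{V}'$ for $\mathit{Graph}'(\game)$ by setting $\sigma'((\marking, \top, R)) := (\marking, c, R)$ whenever $\sigma((\marking, \top)) = (\marking, c)$. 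This is well-defined: (E'1) edges preserve $R$ and admit exactly the same commitments $c \subseteq \post{\innet{s_\marking}}$ as (E1) edges in $\mathit{Graph}(\game)$, so $\sigma$'s choice is always available.

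The key observation is that the projection $\pi : (\marking, c, R) \mapsto (\marking, c)$ carries vertices and edges of $\mathit{Graph}'(\game)$ onto vertices and edges of $\mathit{Graph}(\game)$. Concretely, each (E'$k$) edge projects to an (E$k$) edge for $k \in \{1,2,3\}$, simply by forgetting the additional information tracked in $R$ (the choice of $\innet{o} \in \post{\innet{t}}$ in (E'2), together with the extra conditions $R \subseteq \pre{\innet{t}}$ in (E'3) and $R' = R - \pre{\innet{t}} + \{\innet{o}\}$ in (E'2)). Moreover, each bad-vertex condition (X'$k$) is syntactically the same predicate on $(\marking, c)$ as its counterpart (X$k$), so $v' \in \mathcal{X}'$ iff $\pi(v') \in \mathcal{X}$. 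Finally, $\pi(\mathcal{I}') = \mathcal{I}$.

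I would then conclude as follows. Any play $v'_0 v'_1 \dots$ in $\mathit{Graph}'(\game)$ starting in $\mathcal{I}'$ and consistent with $\sigma'$ projects to a play $\pi(v'_0)\,\pi(v'_1) \dots$ in $\mathit{Graph}(\game)$ starting in $\mathcal{I}$ that is consistent with $\sigma$: at Player~1 vertices, the projection of any legal $\mathit{Graph}'(\game)$-move is itself a legal $\mathit{Graph}(\game)$-move, and at Player~0 vertices $\sigma'$ was defined to mimic $\sigma$. Since $\sigma$ is winning, the projected play avoids $\mathcal{X}$, and by the correspondence above the original play avoids $\mathcal{X}'$, so $\sigma'$ is winning.

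The only delicate point is verifying this edge-by-edge correspondence, but it is routine once the projection has been set up; morally, the extra restriction $R \subseteq \pre{\innet{t}}$ in (E'3) merely removes options for Player~1, which cannot hurt Player~0, and the extra choice of $\innet{o}$ in (E'2) is invisible to Player~0's commitments. That is the essence of the theorem, and I expect no hidden obstacle.
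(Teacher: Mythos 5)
Your proposal is correct and matches the paper's own argument: the paper likewise defines the projection $\Pi$ onto the first two components, notes that it maps $\mathcal{I}'$ to $\mathcal{I}$, preserves edges, and reflects bad vertices, and then builds the $\mathit{Graph}'(\game)$ strategy by mimicking the $\mathit{Graph}(\game)$ strategy along projected play prefixes. The only cosmetic difference is that you phrase the construction via a memoryless strategy function while the paper inductively builds the strategy tree; the substance is identical.
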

\begin{proof}[Proof sketch (detailed in Appendix~\ref{sec:graphgraphprimeproof})] $\mathit{Graph'}(\game)$ only reduces Player~1's options.
\end{proof}

We now translate a winning strategy $T_\sigma$ for $\mathit{Graph'}(\game)$ back into a winning, deadlock-avoiding strategy for the Petri game.
Without loss of generality, we assume $T_\sigma$ to be memoryless.
We traverse the strategy tree in breadth-first order and inductively build the Petri game strategy $\beta_\sigma = (\net^\sigma, \lambda)$.
Simultaneously, we map each node of the tree to a nonempty set of cuts.
We call these cuts the \emph{associated cuts} of the node.
These cuts can be reached from $\initialmarking^\sigma$ by firing $\lambda$-preimages of transitions corresponding to the edges of types (E'2) and (E'3) on the path from the root to this node.
In particular, every such cut $\cut$ will satisfy $\lambda[\cut] = \marking$, where $\marking$ is the marking found in the label of the node.

We begin by mapping the root of the tree to a single cut $\initialmarking^\sigma$, i.e., a fresh set of places such that $\lambda[\initialmarking^\sigma] = \initialmarking$.
Then, we traverse $T_\sigma$ and distinguish between the different kinds of edges in the graph game by which the vertex of the currently visited node has been reached from its predecessor.
\begin{itemize}
  \item (E'1): Do not modify $\beta_\sigma$ and map the new node to the same cuts as its parent.
  \item (E'2) or (E'3): Let $\cut$ be one of the cuts associated with the parent node.
    Let $\innet{t}$ be a transition that could have been used in the definition of (E'2) or (E'3) to justify the existence of the edge.
    Finally, let $B$ be any subset of $\cut$ with $\lambda [B] = \pre{\innet{t}}$. Such a subset always exists because $\innet{t}$ is enabled in $\lambda[\cut]$.
    If it already exists, let $t \in \transitions^{\sigma}$ be a transition with $\pre{t}=B$ and $\lambda(t)=\innet{t}$.
    Else, create a new such transition and a fresh set of places as its postcondition such that $\lambda[\post{t}] = \post{\innet{t}}$.
    Choose $\cut'$ such that $\cut \markingstep{t} \cut'$.
    We map the new node to all cuts $\cut'$ that can be constructed from suitable $\cut$, $\innet{t}$ and $B$ in this way.
\end{itemize}

We need to show that $\beta_\sigma$ is a strategy.
First, we can easily see that the construction ensures all requirements of an occurrence net.
Furthermore, $\beta_\sigma$ is an initial branching process because $\lambda$ is an initial homomorphism and because we only add a new transition if no other transition with the same label and precondition exists.

Before we can prove that $\beta_\sigma$ satisfies the four axioms of a winning, deadlock-avoiding strategy, we need to show that the responsibility multiset construction works as intended.
First, we show that the construction prevents illegal information flows.
Whenever the system player moves in the graph game, she directly learns about all previously scheduled transitions.
Formally, nodes labeled with player-0 vertices are only mapped to cuts $\cut$ that are the \emph{last known cuts} of their respective system place $s_\cut$.
The last known cut of a place $x \in \places^\sigma$ is defined as $\mathit{LKC}(x) \coloneqq \{p \in \places^\sigma \mid p \nless x \land \forall t \in \pre{p}.\;t < x\}$.
In the terminology of \cite{esparza94}, this cut is the \emph{mapping cut} of $\past{x} \cap \transitions$, i.e., the cut reached by firing all transitions in the past of $x$.
The last known cut of $x$ has the special property that, for every cut $\cut$ with $x \in \cut$, the last known cut of $x$ lies in $\past{\cut}$ (Lemma~\ref{lem:lkcpast} in Appendix~\ref{sec:associatedlkcproof}).
\begin{restatable}{lem}{associatedlkc}
  \label{lem:associatedlkc}
  Let a node in $T_\sigma$ be labeled with a vertex belonging to Player~0 and let $\cut$ be one of its associated cuts. Then, $\cut = \text{LKC}(s_\cut)$.
\end{restatable}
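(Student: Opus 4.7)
I would prove the lemma by induction over the breadth-first construction order of $T_\sigma$, carrying a stronger invariant that mirrors the intended semantics of the responsibility multiset: for \emph{every} node labeled $(\marking, c, R)$ and every associated cut $\cut$ of that node, there is a subset $S \subseteq \cut$ with $\lambda[S] = R$ such that $\past{\cut} \cap \transitions \subseteq \past{S}$. Intuitively, the tokens in $S$ jointly witness the complete causal history of $\cut$. This is precisely the statement that Lemma~\ref{lem:responsibility} is designed to establish in the appendix, and I would take it as a black box here.

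The base case is immediate: the root is associated with $\initialmarking^\sigma$, whose past contains no transitions, and $\text{LKC}(s_{\initialmarking^\sigma}) = \initialmarking^\sigma$. For the inductive step, I would first observe that the only edges of $\mathit{Graph'}(\game)$ whose target lies in $\mathcal{V}_0'$ are of type (E'3): edges of type (E'1) target $\mathcal{V}_1'$, and edges of type (E'2) add a fresh token $\innet{o}$ to $R$, so they never yield a singleton responsibility multiset $\{\innet{s_{\marking'}}\}$ as required by $\mathcal{V}_0'$. Hence the parent node $n'$ carries a label $(\marking', c', R')$ with $R' \subseteq \pre{\innet{t}}$ for some system transition $\innet{t} \in c'$, and $\cut$ arises from an associated cut $\cut'$ of $n'$ by firing a preimage $t$ of $\innet{t}$.

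Applying the invariant at $n'$ yields a witness $S' \subseteq \cut'$ with $\lambda[S'] = R'$ and $\past{\cut'} \cap \transitions \subseteq \past{S'}$; invoking Lemma~\ref{lem:responsibility} once more, I may assume $S' \subseteq \pre{t}$, since $\lambda[S'] = R' \subseteq \pre{\innet{t}} = \lambda[\pre{t}]$ and both $S'$ and $\pre{t}$ sit inside the same cut $\cut'$. Writing $s_\cut \in \post{t}$ for the unique new system place, I then get $\past{S'} \subseteq \past{t} \subseteq \past{s_\cut}$ together with $t \in \past{s_\cut}$, so every transition in $\past{\cut} \cap \transitions$ lies in $\past{s_\cut}$. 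Consequently, for each $p \in \cut$ either $p = s_\cut$, or $\pre{p} = \emptyset$, or $\pre{p} = \{t'\}$ with $t' \in \past{s_\cut}$; moreover $p \nless s_\cut$ by concurrency within $\cut$. This gives $\cut \subseteq \text{LKC}(s_\cut)$, and equality follows because both sides are cuts of the occurrence net, hence maximal antichains of pairwise concurrent places.

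I expect the main obstacle to lie \emph{not} in this deduction but in the responsibility invariant across (E'2) edges: when $R$ is enlarged by an arbitrary postcondition token $\innet{o} \in \post{\innet{t}}$, one must show that the freshly created strategy place placed as a preimage of $\innet{o}$ genuinely inherits the causal pasts of all tokens consumed in $\pre{t}$, and that any desired multiset-subset $S'$ of $R'$ can indeed be pulled back into $\pre{t}$. This bookkeeping is the heart of Lemma~\ref{lem:responsibility}; once it is available, the present lemma reduces to the clean case analysis above.
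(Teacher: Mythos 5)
Your argument rests on the same two ingredients as the paper's proof---the responsibility invariant of Lemma~\ref{lem:responsibility} and the closing step ``$\cut \subseteq \mathit{LKC}(s_\cut)$, hence equality by maximality of cuts''---but the induction you wrap around them introduces a step that is not justified. In the inductive case you write that you ``may assume $S' \subseteq \pre{t}$, since $\lambda[S'] = R' \subseteq \pre{\innet{t}} = \lambda[\pre{t}]$ and both $S'$ and $\pre{t}$ sit inside the same cut $\cut'$.'' For a $k$-bounded net with $k \geq 2$, a cut of the branching process may contain several distinct $\lambda$-preimages of the same place of the underlying net, so $\lambda[S'] \subseteq \lambda[\pre{t}]$ with $S', \pre{t} \subseteq \cut'$ does not force $S' \subseteq \pre{t}$. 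Lemma~\ref{lem:responsibility}, which you take as a black box, only asserts the \emph{existence} of some witness $S'$; it gives you no license to relocate it into $\pre{t}$, and if some $p \in S' \setminus \pre{t}$ survives into $\cut$, your chain $\past{S'} \subseteq \past{t} \subseteq \past{s_\cut}$ breaks down. Making this relocation precise is essentially re-proving the (E'3) case of Lemma~\ref{lem:responsibility} itself, which defeats the point of invoking it as a black box.

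The detour through the parent node is also unnecessary. Apply the invariant directly at the Player-0 node under consideration: its responsibility multiset is $R = \{\innet{s_\marking}\}$, so the witness $S \subseteq \cut$ with $\lambda[S] = R$ is a singleton mapped to the system place, and since every reachable cut contains exactly one system place, $S = \{s_\cut\}$. This immediately yields $\past{\cut} = \cut \cup \past{s_\cut}$, after which your final case analysis ($p \nless s_\cut$ by concurrency, and every $t' \in \pre{p}$ lies in $\past{\cut} \setminus \cut \subseteq \past{s_\cut}$, so $t' < s_\cut$) goes through verbatim---and this direct route is exactly the paper's proof. (A minor side remark: the reason only (E'3) edges enter $\mathcal{V}_0'$ is simply that $\mathcal{V}_0'$ vertices carry $\top$ in the second component while the targets of (E'1) and (E'2) carry a commitment; the observation about singleton responsibility multisets is not needed there.)
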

\begin{proof}[Proof in Appendix~\ref{sec:associatedlkcproof}]
\end{proof}

Second, we need to show that the responsibility multiset construction does not overly restrict the scheduling.
For certain schedulings of purely environmental transitions, the responsibility multiset prevents a system transition from being fired even though it is enabled and in the commitment.
If, since the Player~0's last move, Player~1 had skipped firing all transitions that do not help to enable this system transition, the transition could be fired.
Therefore, the Petri game strategy contains all system transitions wherever they are not refused.
This is formally stated and proved in Lemma~\ref{lemselect} in Appendix~\ref{sec:selectproof}.

\begin{lemma}[safety]
  Let $\cut$ be a cut in $\net^\sigma$. Then, $\lambda[\cut] \notin \badmarkings$.
  \label{lem:safety}
\end{lemma}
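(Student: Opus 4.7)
The plan is to reduce safety to a property of $T_\sigma$: namely, that every cut of $\net^\sigma$ occurs as an associated cut of some node of $T_\sigma$. Once established, safety is immediate, for the label of that node then has the form $(\lambda[\cut],\cdot,\cdot)$; since $T_\sigma$ is winning, this label is not in $\mathcal{X}'$, and clause (X'1) yields $\lambda[\cut] \notin \badmarkings$. Invoking Corollary~\ref{lem:boundedbranchingcutsmarkings} to identify cuts with reachable markings of $\net^\sigma$ delivers the statement in the form required.

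I would prove the associated-cut property by induction on the length $\ell$ of a firing sequence $\initialmarking^\sigma \markingstep{t_1} \cdots \markingstep{t_\ell} \cut$. The base case is handled by the construction, which maps the root to $\{\initialmarking^\sigma\}$. For the step, suppose $\cut_{\ell-1}$ is associated with a node $n$. If $n$ is a player-0 node I first descend to its unique (E'1) child, which by construction inherits $\cut_{\ell-1}$; so I may assume $n$ is a player-1 node with label $(\lambda[\cut_{\ell-1}], c, R)$. If $\lambda(t_\ell)$ is purely environmental, the matching (E'2) edge from $n$ is present in $T_\sigma$ (all player-1 successors are retained), and choosing $B = \pre{t_\ell}$ in the construction places $\cut_\ell$ among the associated cuts of the child. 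If $\lambda(t_\ell)$ is a system transition, the corresponding (E'3) edge is required, which means $\lambda(t_\ell) \in c$ and $R \subseteq \pre{\lambda(t_\ell)}$.

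This last subcase is where the main work lies, and I would treat its two conditions separately. For the commitment condition, I would combine Lemma~\ref{lem:associatedlkc} with the assumption that $T_\sigma$ is memoryless: the last player-0 ancestor of $n$ has associated cut $\mathrm{LKC}(s_{\cut_{\ell-1}})$, so its vertex label depends only on $\lambda[\mathrm{LKC}(s_{\cut_{\ell-1}})]$; the same label must appear in $T_\sigma$ at the player-0 ancestor of whichever (E'3) edge first inserted $t_\ell$ into $\net^\sigma$, and by memorylessness both ancestors commit to the same $c$, which must therefore contain $\lambda(t_\ell)$ and which propagates unchanged through the intervening (E'2) edges down to $n$. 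For the responsibility condition, I would invoke Lemma~\ref{lemselect} to select, among the many (E'2) successors retained in $T_\sigma$, a scheduling of the purely environmental labels fired between the player-0 ancestor and step $\ell$ whose $\innet{o}$-choices keep $R$ inside $\pre{\lambda(t_\ell)}$. Aligning these responsibility choices in $T_\sigma$ with the actual causal structure of $\cut_{\ell-1}$ in $\net^\sigma$ is the step I expect to require the most care.
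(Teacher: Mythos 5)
Your reduction of safety to the claim that every cut of $\net^\sigma$ is an associated cut of some node of $T_\sigma$ would indeed yield the lemma, and your base case and (E'2) step are fine. The gap is in the (E'3) step. You propose to keep fixed the set of purely environmental transitions fired between the last player-0 ancestor and $\cut_{\ell-1}$ and only re-select the $\innet{o}$-choices so that $R \subseteq \pre{\lambda(t_\ell)}$. In general this cannot work: if some purely environmental transition $u$ fired in that window is causally independent of $t_\ell$ --- no descendant of $\post{u}$ ever enters $\pre{t_\ell}$ --- then by Lemma~\ref{lem:responsibility} every admissible choice of $\innet{o}$ leaves a token in $R$ that $\pre{\lambda(t_\ell)}$ does not cover, so \emph{no} node with associated cut $\cut_{\ell-1}$ has the required (E'3) successor. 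That is exactly the ``illegal information flow'' the responsibility multiset is designed to block. Lemma~\ref{lemselect} does not repair this in the way you describe: it yields a node whose associated cut $\cut''$ contains $\pre{t_\ell}$ but is in general strictly earlier than $\cut_{\ell-1}$ (the irrelevant transitions are dropped, not re-routed), so the (E'3) child is associated with $\cut'' - \pre{t_\ell} + \post{t_\ell} \neq \cut_\ell$, and your induction does not close without an additional catch-up phase of (E'2) edges after the system transition, which you neither state nor justify.

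The paper's proof avoids the (E'3) case entirely, and I recommend you do the same: given the target cut $\cut$, take the node $n$ at which $s_\cut$ was inserted; by Lemma~\ref{lem:associatedlkc} it has $\mathit{LKC}(s_\cut)$ as an associated cut, and by Lemmas~\ref{lem:lkcpast} and~\ref{lem:cutsreachable} one reaches $\cut$ from $\mathit{LKC}(s_\cut)$ by firing only purely environmental transitions, since the system place persists. The corresponding (E'2) edges exist unconditionally (no commitment or responsibility side conditions) and all player-1 successors are retained in $T_\sigma$, so one arrives at a node with associated cut $\cut$ and label $(\lambda[\cut], c, R)$, and clause (X'1) gives $\lambda[\cut] \notin \badmarkings$. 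The machinery you are reaching for --- memorylessness for the commitment and Lemma~\ref{lemselect} for the responsibility multiset --- is genuinely needed, but for justified refusal, determinism and deadlock avoidance, not for safety.
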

\begin{proof}
  Consider the node $n$ for which $s_\cut$ was inserted into the strategy.
  This node must be labeled with a $\mathcal{V}_0$ vertex and must have $\mathit{LKC}(s_\cut)$ as one of its associated cuts by Lemma~\ref{lem:associatedlkc}.
  Since $\mathit{LKC}(s_\cut) \subseteq \past{\cut}$, there is a sequence of purely environmental transitions leading from $\mathit{LKC}(s_\cut)$ to $\cut$, by Lemma~\ref{lem:cutsreachable} in Appendix~\ref{sec:cutsmarkings}.
  Thus, from $n$'s unique successor, we can follow a corresponding sequence of type-(E'2) edges to a node $n'$ with $\cut$ as one of its associated cuts.
  If $\lambda[\cut]$ were a bad marking, $n'$ would be labeled with a bad vertex of type (X'1).
  Since $T_\sigma$ is a winning strategy, this is not the case.
\end{proof}

For the proofs of justified refusal (Lemma~\ref{lem:just}), determinism (Lemma~\ref{lem:determinism}) and deadlock avoidance (Lemma~\ref{lem:deadlock}), we refer the reader to Appendix~\ref{sec:graphprimepetriproof}.
As an immediate consequence, $\beta_\sigma$ is a winning, deadlock-avoiding strategy, which concludes the claimed equivalence:

\begin{theorem}
  \label{thm:graphprimepetri}
  If $\mathit{Graph'}(\game)$ has a winning strategy for Player~0, there exists a winning, deadlock-avoiding strategy for $\game$.
\end{theorem}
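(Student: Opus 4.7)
The plan is to verify that the construction of $\beta_\sigma = (\net^\sigma, \lambda)$ from the memoryless winning strategy $T_\sigma$ (described in Section~\ref{sec:graphpetri}) produces a winning, deadlock-avoiding strategy of $\game$. It has already been argued that $\beta_\sigma$ is an initial branching process of $\net^\game$, and safety follows from Lemma~\ref{lem:safety}. What remains is to check justified refusal, determinism and deadlock avoidance; in each case I would reason by contradiction, translating a putative violation in $\beta_\sigma$ into a losing vertex of $T_\sigma$ via Lemma~\ref{lem:associatedlkc} and Lemma~\ref{lemselect}.

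For justified refusal, I would fix a set $S$ of pairwise concurrent places in $\places^\sigma$ with $\lambda[S] = \pre{\innet{t}}$ and assume no $t \in \transitions^\sigma$ with $\lambda(t) = \innet{t}$ and $\pre{t} = S$ exists. Since $\game$ has a single system player, the underlying precondition $\pre{\innet{t}}$ contains at most one system place. If $\innet{t}$ is purely environmental, $S$ lies in $\envplaces^\sigma$, and I would derive a contradiction by arguing that the breadth-first construction, which installs a transition for every valid choice of $(\cut, \innet{t}, B)$ along the type-(E'2) edges of $T_\sigma$, must already have introduced such a $t$. If $\innet{t}$ is a system transition, $S$ contains a unique system place $s$, and I would show $\innet{t} \notin \lambda[\post{s}]$: the node at which $s$ was added belongs to Player~0 and its unique child fixes a commitment $c$; by Lemma~\ref{lemselect}, the absence of $t$ forces $\innet{t} \notin c$, so no instance of $\innet{t}$ ever appears in $\lambda[\post{s}]$.

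For determinism, I would let $s \in \sysplaces^\sigma$ and let $\marking$ be a reachable marking of $\net^\sigma$ containing $s$, assume two distinct transitions in $\post{s}$ are both enabled in $\marking$, and track their images under $\lambda$ back into $T_\sigma$. Both labels then lie in the commitment $c$ chosen when $s$ was inserted, and both are enabled in an associated cut reachable by a sequence of type-(E'2) edges; this exposes a graph-game vertex satisfying (X'2a) or (X'2b), contradicting $T_\sigma$ being winning. For deadlock avoidance, given a reachable cut $\cut$ of $\net^\sigma$ in which some underlying transition is enabled in $\lambda[\cut]$, I would use Lemma~\ref{lem:associatedlkc} to locate the node where $s_\cut$ was added (which has $\mathit{LKC}(s_\cut)$ as an associated cut) and follow the type-(E'2) edges corresponding to the firing sequence from $\mathit{LKC}(s_\cut)$ to $\cut$ (guaranteed by Lemma~\ref{lem:cutsreachable}) to reach a node $n'$ associated with $\cut$. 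If no transition of $\net^\sigma$ were enabled in $\cut$, then neither any purely environmental transition nor any committed system transition is enabled in $\lambda[\cut]$, so $n'$ would be a type-(X'3) bad vertex.

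The main obstacle is justified refusal for system transitions, where one must translate a negative statement about $\beta_\sigma$ (the non-existence of a certain transition with a specific label and precondition) into a positive decision made in $T_\sigma$ (a commitment that excludes $\innet{t}$). Lemma~\ref{lemselect} is designed exactly for this translation, and its correct application crucially depends on $\game$ having a single system player, so that a pairwise concurrent set $S$ with $\lambda[S] = \pre{\innet{t}}$ unambiguously identifies a single system place $s$ whose installation node is the sole source of labels appearing in $\lambda[\post{s}]$. Once these four axioms are in place, $\beta_\sigma$ is a winning, deadlock-avoiding strategy for $\game$, completing the equivalence with $\mathit{Graph}'(\game)$ and, via Theorem~\ref{thm:graphgraphprime}, with $\mathit{Graph}(\game)$.
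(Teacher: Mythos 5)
Your construction and decomposition match the paper's: the same $\beta_\sigma$, safety via Lemma~\ref{lem:safety}, and the remaining axioms established by tracing a putative violation back to a bad vertex of $T_\sigma$ through Lemma~\ref{lem:associatedlkc}, Lemma~\ref{lem:cutsreachable} and Lemma~\ref{lemselect}. Your justified-refusal and determinism arguments are essentially the paper's Lemmas~\ref{lem:justenv}, \ref{lem:justsys} and \ref{lem:determinism} (for determinism one also needs memorylessness of $T_\sigma$ to argue that the several nodes at which transitions of $\post{s}$ were inserted all descend from $\mathcal{V}_0'$ nodes carrying the same commitment; you implicitly rely on this).

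The one step that would fail as written is in deadlock avoidance. From the node $n'$ associated with $\cut$ you infer that if no transition of $\net^\sigma$ is enabled in $\cut$, then no committed system transition is enabled in $\lambda[\cut]$, so that $n'$ is of type (X'3). That implication is precisely where the responsibility multiset bites: a committed $\innet{t'}$ may be enabled in $\lambda[\cut]$ while no preimage was ever inserted into $\net^\sigma$, because the type-(E'3) edge out of $n'$ exists only if $R' \subseteq \pre{\innet{t'}}$, and the responsibility multiset accumulated along the (E'2) path from $\mathit{LKC}(s_\cut)$ to $\cut$ need not satisfy this. All that the absence of an (X'3) label at $n'$ gives you is an enabled $\innet{t'} \in c$; to obtain an enabled preimage in $\cut$ you must, as in the paper's Lemma~\ref{lem:deadlock}, choose $S \subseteq \cut$ with $\lambda[S] = \pre{\innet{t'}}$ and apply Lemma~\ref{lemselect} to reach a different node $n''$ whose associated cut contains $S$ and whose responsibility multiset is contained in $\pre{\innet{t'}}$, so that the (E'3) edge exists there and forces the insertion of a preimage with precondition $S \subseteq \cut$. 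You have exactly the right tool --- you deploy this maneuver for justified refusal of system transitions --- it just has to be used here as well.
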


\section{Synthesis in distributed environments is EXPTIME-complete}
\begin{theorem}
\label{thm:exp}
For fixed $k \geq 1$, $k$-bounded Petri games with one system player and an arbitrary number of environment players can be decided in exponential time.
\end{theorem}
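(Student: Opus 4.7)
The plan is to decide the Petri game by explicitly constructing the graph game $\mathit{Graph}'(\game)$ and solving it as a safety game on a finite graph. Combining Theorems~\ref{thm:petrigraph}, \ref{thm:graphgraphprime} and \ref{thm:graphprimepetri}, $\game$ admits a winning, deadlock-avoiding strategy if and only if Player~0 wins $\mathit{Graph}'(\game)$, so it suffices to bound the cost of constructing and solving this graph game.

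First I would bound the number of vertices of $\mathit{Graph}'(\game)$. A vertex is a triple $(\marking, c, R)$, where $\marking$ is a reachable marking of $\net$, $c$ is either $\top$ or a subset of $\post{\innet{s_\marking}}$, and $R$ is a sub-multiset of $\marking$ containing $\innet{s_\marking}$. Since $\net$ is $k$-bounded, every reachable marking is a multiset of $\places$ bounded by $k$ per place, so there are at most $(k+1)^{|\places|}$ markings and, for each, at most $(k+1)^{|\places|}$ candidate responsibility multisets. The commitment ranges over at most $2^{|\transitions|} + 1$ values. For fixed $k$, the total number of vertices is therefore $2^{O(|\places| + |\transitions|)}$, and the out-degree of each vertex is similarly polynomial in the number of transitions times the number of successor markings, hence also singly exponential in the size of $\net$. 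Constructing $\mathit{Graph}'(\game)$ — enumerating reachable markings, checking enabledness of transitions and verifying the side conditions on $R$ and $c$ — can be carried out within the same exponential bound.

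Next I would solve the resulting safety game on $\mathit{Graph}'(\game)$. Two-player reachability/safety games on finite graphs are memoryless determined and decidable in time linear in the size of the arena by the standard attractor computation: iteratively add to the losing region of Player~0 all vertices in $\mathcal{X}'$, all Player~0 vertices whose every successor is already losing, and all Player~1 vertices with some losing successor. This step runs in polynomial time in the size of $\mathit{Graph}'(\game)$, and thus in exponential time in the size of $\net$.

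The main obstacle is really only bookkeeping: ensuring that the vertex set and edge relation are genuinely exponentially bounded. The responsibility multiset $R$ could in principle blow the construction up, but since $R \subseteq \marking$ and $\marking$ is $k$-bounded, it is absorbed into the $(k+1)^{|\places|}$ bound. With the bound in hand, combining the polynomial-time safety game algorithm with the exponential-size construction, and then invoking the equivalences from the previous section, yields an EXPTIME decision procedure for $\game$. A detailed complexity analysis along these lines is deferred to Appendix~\ref{sec:algocomp}.
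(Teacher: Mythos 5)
Your proposal is correct and takes essentially the same approach as the paper: reduce to the finite safety graph game constructed in the previous section, bound its size by a single exponential in the size of the net, and solve it with the standard linear-time attractor computation. The only difference is that you solve $\mathit{Graph}'(\game)$ rather than the smaller $\mathit{Graph}(\game)$ (which the paper bounds by $k^{|\places|}\cdot(2^{|\transitions|}+1)$); since $R \subseteq \marking$ keeps the responsibility component exponentially bounded, as you note, this changes nothing asymptotically.
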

\begin{proof}
Our reduction allows to decide such Petri games $\game$ in exponential time:
The number of vertices in $\mathit{Graph}(\game)$ is bounded by $k^{|\places|} \cdot (2^{|\transitions|} + 1)$ and its local structure can be computed efficiently.
Since graph games with such safety winning conditions can be solved in linear time in the size of the game \cite[pp.~78--79]{apt2011lectures}, this requires exponential time in the size of the Petri game.

In Appendix~\ref{sec:algodec}, we describe an algorithm that evaluates the commitments symbolically and uses a SAT solver to speed up solving the game in practice.
If we solve the SAT instances through a na\"ive enumeration, we have an explicit EXPTIME algorithm, whose complexity is analyzed in Appendix~\ref{sec:algocomp}.
\end{proof}
\begin{restatable}{theorem}{exphardness}
\label{thm:exphardness}
Deciding $k$-bounded Petri games with one system player and an arbitrary number of environment players is EXPTIME-hard for any $k \geq 1$.
\end{restatable}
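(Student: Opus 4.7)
The plan is to prove EXPTIME-hardness by a polynomial-time reduction from the acceptance problem of alternating polynomially space-bounded Turing machines, a canonical EXPTIME-complete problem. Given an ATM $M$ that uses polynomial space (and, w.l.o.g., always halts) and an input $w$ of length $n$, I will construct a $1$-bounded Petri game $\game$ (hence $k$-bounded for every $k \geq 1$) of polynomial size, with one system token and polynomially many environment tokens, such that $M$ accepts $w$ iff $\game$ admits a winning, deadlock-avoiding strategy.

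The current configuration of $M$ is distributed across environment tokens. For every tape cell $i$ and tape symbol $\sigma$, I introduce an environment place $\mathit{cell}_i^\sigma$; a single environment token per cell sits in the place matching its current content. The system token resides in a system place $p_{q,h}$ encoding the state $q$ and head position $h$ of $M$. A single step of $M$ is simulated in two phases. First, a read transition with precondition $\{p_{q,h}, \mathit{cell}_h^\sigma\}$ and postcondition $\{p_{q,h,\sigma}, \mathit{cell}_h^\sigma\}$ transfers $\sigma$ into the system's causal memory. Then a write/move transition updates the cell content and advances the head according to the chosen successor. Existential states of $M$ are encoded by letting the system commit to exactly one of several outgoing write transitions from $p_{q,h,\sigma}$. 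Universal states are encoded through an auxiliary ``chooser'' environment token with one place per universal branch: purely environmental transitions shift the chooser between these places, and each write transition for that branch requires a specific chooser position in its precondition, so Player~1 effectively picks the universal alternative through scheduling. The bad markings are exactly those in which $M$ has entered a rejecting state.

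Correctness then follows by mutual simulation. An accepting computation tree of $M$ yields a strategy for the system: its existential moves dictate the commitments, the causal memory accumulated during each read always contains the current simulated configuration, and every maximal Player~1 play traces a branch of the tree. Conversely, a winning strategy fixes existential choices for $M$ such that every universal branch avoids rejection, because Player~1 is free to schedule any chooser transition and any order of other events; since $M$ halts, avoiding rejection is equivalent to reaching acceptance, yielding an accepting computation tree.

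The main obstacle is reconciling the determinism axiom---which forbids two transitions of a single system place's commitment from being enabled simultaneously---with the universal branching required for alternation; a naive encoding that exposes several successor transitions at once would immediately violate determinism. The chooser construction resolves this by relocating the nondeterminism into the environment part of the net: at any reachable cut exactly one chooser place is marked, so exactly one write transition in the commitment is enabled. A secondary concern is that Player~1 might try to stall by never scheduling chooser transitions, but stalling reaches no bad marking, and Player~1---who wins only by reaching one---is therefore forced to commit to universal choices in order to drive the computation toward a rejecting configuration.
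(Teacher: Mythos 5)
Your overall plan (reduction from an EXPTIME-complete alternating reachability problem, with the configuration distributed over polynomially many environment tokens, existential choices made by the system's commitment and universal choices made by the scheduler) is in the same spirit as the paper's reduction from the combinatorial game $G_5$, and the determinism issue you identify is real and your chooser idea is the right kind of fix for it. But there is a genuine gap in the other half of the argument: nothing in your construction forces the \emph{system} to make progress. Deadlock avoidance only obliges the strategy to enable some transition in a marking where no purely environmental transition is enabled. Your chooser token is shifted between branch places by purely environmental transitions, which by justified refusal the strategy cannot suppress; hence in every reachable marking during (at least) the universal phases some purely environmental transition is enabled, the deadlock-avoidance condition is vacuously satisfied, and the system may simply refuse every read and write transition forever. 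No bad marking is ever reached, so \emph{every} instance of your game admits a winning strategy and the reduction fails in the direction ``winning strategy $\Rightarrow$ $M$ accepts.'' You discuss stalling, but you attribute the threat to Player~1; the dangerous staller is the system player. This is exactly the point the paper's construction is engineered around: its turn-counter token enforces strict alternation so that, whenever it is the system's turn, the \emph{only} enabled transitions are system transitions, and deadlock avoidance then compels the system to commit to a move (``the system player is forced to keep playing the game because she would otherwise cause a deadlock'').

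To repair your construction you would need an analogous turn-taking gadget: the chooser's branch-selection moves must be one-shot (no purely environmental cycle), must only be enabled when a universal choice is actually pending, and once the chooser has settled the unique enabled transition must be a system transition so that deadlock avoidance bites; the chooser must then be reset by the write transition itself rather than by a purely environmental back-edge. Since ``a universal choice is pending'' is information held by the system token, gating the chooser on it without making the gating transition a system transition requires a separate environment token tracking the phase --- which is essentially the paper's turn counter. As written, your proof does not establish the lower bound.
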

\begin{proof}[Proof sketch (detailed in Appendix~\ref{sec:exphardnessproof})]
We show hardness through a reduction from the EXPTIME-complete combinatorial game $G_5$ from \cite{combinatorialgames}.
This reduction is similar to the one given in \cite{petrigames} for the fragment with one environment player.
In $G_5$, two players, $P_S$ and $P_E$, take turns in switching the truth values of a finite set of boolean variables, one at a time.
Alternatively, they are allowed to pass.
The players operate on disjoint subsets of the variables.
Initially, the variables have predefined values.
If, at a certain point, a formula $\phi$ over the variables becomes satisfied, $P_E$ wins; else, $P_S$ wins.

For an instance of this game, we build a Petri game such that there is a winning, deadlock-avoiding strategy iff $P_S$ has a winning strategy in the original game.
Without loss of generality, let $\phi$ be given in negation normal form.
An example for the reduction is illustrated in Fig.~\ref{fig:1vnhardness} in Appendix~\ref{sec:exphardnessproof}.
Each variable is represented by an environment token moving between two places, indicating the variable's truth value.
An additional environment token keeps track of the current turn.
If it is $P_E$'s turn, this token synchronizes with one of the environment variables and switches its position.
If it is $P_S$'s turn, the token first informs the single system token of the previous moves and then enables the transitions for switching a system variable, from which the system token chooses one.

Instead of letting a player move, the turn token can permanently freeze the variables and prove that $\phi$ is satisfied.
For this, we have an additional environment token for every subformula, each with two places.
The turn token can move these tokens to their second place to prove that the subformula is satisfied.
For literals, the turn token needs to synchronize with the respective variable in the correct place.
For disjunctions, it must synchronize with the token of one of the subformulas, which must have been proved before.
For conjunctions, synchronization with both subformula tokens is required.
The bad markings are exactly those in which the entire formula $\phi$ is proved.
This game is $1$-bounded, thus $k$-bounded.
\end{proof}

\section{Sparse Petri games}
\label{sec:sparse}
The nets produced by our EXPTIME-hardness reduction contain a high number of environment tokens.
Because of this, the number of reachable markings grows exponentially and computational cost with it.
To study other sources of algorithmic hardness, we analyze the complexity of the problem for a fixed maximum number $p$ of environment players.
Then, we can bound the number of reachable markings by the polynomial $(|\places| + 1)^{(p + 1)}$ instead of by $(k + 1)^{|\places|}$.
For a fixed $p$, the problem is in NP:
We nondeterministically guess a commitment for every $\mathcal{V}_0$ vertex and verify in polynomial time that no bad vertices are reachable.%
\begin{restatable}{theorem}{nphardness}
  \label{thm:nphardness}
  For a fixed $p \geq 3$, deciding Petri games with one system player and $p$ environment players is NP-complete.
\end{restatable}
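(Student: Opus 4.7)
Since NP membership is already immediate from the observation preceding the theorem statement, the plan is to establish NP-hardness via a polynomial-time reduction from \textsc{3-SAT}. Given a 3-CNF formula $\phi = C_1 \wedge \cdots \wedge C_m$ over variables $x_1, \ldots, x_n$, the goal is to construct a $1$-bounded Petri game $\game_\phi$ with one system player and exactly three environment players such that the system has a winning, deadlock-avoiding strategy iff $\phi$ is satisfiable; for any fixed $p > 3$, padding with inert environment tokens extends the result. The fit with the theorem's threshold is natural: three environment tokens suffice to materialise the three literals of a clause as concurrent entities, whereas only two would correspond to 2-literal clauses, matching the polynomial-time regime handled in Theorem~\ref{thm:polynomialtwo}.

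The reduction will follow a ``guess and check'' schema realised in two phases. In the \emph{assignment} phase, the system player traverses a chain of system places $v_1, \ldots, v_n$; at each $v_i$ it can fire one of two system transitions $t_i^T, t_i^F$. Because no environment token participates in these transitions and determinism forbids enabling both simultaneously, the strategy's unique Phase-1 branch commits to a fixed assignment $\alpha \colon \{x_1, \ldots, x_n\} \to \{T, F\}$. In the \emph{checking} phase, the three environment tokens (initially dormant at a shared setup place) are consumed by a purely environmental transition $\tau_k$, one for each clause index, which redistributes them onto three literal places $\ell_{k,1}, \ell_{k,2}, \ell_{k,3}$ for the literals of $C_k$. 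From each $\ell_{k,j}$, the token can leave only via synchronisation with the system's move at the variable gadget of the literal's variable, and it lands on a ``satisfied'' or ``falsified'' place depending on whether $\alpha$'s value matches the literal's polarity. A bad marking is declared whenever all three environment tokens occupy ``falsified'' places.

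The main obstacle I anticipate is enforcing that the literal resolution at each $\ell_{k,j}$ reflects $\alpha$ \emph{uniformly}, so the system cannot secretly give inconsistent answers to the same variable across different clause challenges or scheduling orders. The intended technique is to fold the literal resolution directly into the Phase-1 transitions at each variable gadget: the two candidate transitions at $v_i$ (one per value, each with several precondition-variants indexed by the set of literal places for $x_i$ currently populated) simultaneously advance the system and consume any environment tokens waiting at those literal places, sending them to their respective satisfied/falsified places. By justified refusal, the strategy enables exactly one of the two value-variants, so every literal over $x_i$ encountered during any play is resolved identically, independent of Player~1's schedule or clause choice. Since at most three environment tokens can sit at literal places at once, the number of precondition-variants per Phase-1 transition stays polynomial.

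Correctness then splits cleanly. If $\phi$ has a satisfying assignment $\alpha$, the strategy that plays $t_i^{\alpha(x_i)}$ throughout ensures that every challenged clause retains at least one satisfied literal, so no bad marking is reachable. Conversely, any winning strategy determines a fixed $\alpha$ from its Phase-1 branch; if $\alpha$ falsified some $C_k$, Player~1 would fire $\tau_k$ and schedule the system through the relevant gadgets to drive all three environment tokens to ``falsified'' places, contradicting winning. Since the net has $O(n + m)$ places and transitions and is computable in polynomial time, this completes the NP-hardness reduction.
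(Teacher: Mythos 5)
Your reduction has the right source problem and the right player count, but the central mechanism --- extracting a \emph{single, globally consistent} assignment from the strategy --- does not work as argued. You claim that ``by justified refusal, the strategy enables exactly one of the two value-variants, so every literal over $x_i$ \dots is resolved identically, independent of Player~1's schedule.'' Justified refusal does not give you this. It operates per transition \emph{of the underlying net}, and your precondition-variants $t_i^{T,S}$ (indexed by the set $S$ of populated literal places) are necessarily \emph{distinct} transitions of the underlying net, since they have distinct preconditions. A system place may therefore allow $t_i^{T,\emptyset}$ while refusing $t_i^{T,\{\ell\}}$ and allowing $t_i^{F,\{\ell\}}$; nothing ties the variants of one truth value together. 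Worse, because the variants that consume environment tokens inform the system which clause was challenged, the branching process splits $v_{i+1},\dots,v_n$ into multiple instances with different causal pasts, and each instance may commit to a different value for the remaining variables. The system can then answer each clause challenge \emph{adaptively} and win even when $\phi$ is unsatisfiable, which breaks the ``$\Rightarrow$'' direction. A second, independent hole: if the system allows only the $\emptyset$-variant at $v_i$, it marches past $v_i$ leaving any token at a literal place for $x_i$ permanently stranded; no transition is then enabled for that token in the underlying net, so deadlock avoidance is vacuously satisfied, no ``falsified'' place is ever reached, and the system wins regardless of $\phi$. Your construction as written therefore fails in both directions of the correctness argument.

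The paper avoids all of this by never letting the system move or learn anything. The single system token sits at its initial place; for each clause a purely environmental transition scatters the three environment tokens onto three literal places, and deadlock avoidance forces the (single, un-split) system place to allow at least one joint transition per clause. Consistency is enforced not by a guessing phase but by extra \emph{contradiction transitions} that place tokens on a pair of complementary literal places simultaneously (plus a self-looping third token to defuse deadlock): if the strategy allowed both complementary transitions, both would be enabled in one reachable marking at the same system place, violating determinism. No bad markings are needed at all. If you want to salvage a ``guess then check'' design, you must either prevent the system from gaining clause information before it has committed to the whole assignment, or add a gadget (like the paper's contradiction transitions) that punishes any pair of inconsistent answers; as it stands, the key enforcement step is missing.
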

\begin{proof}[Proof sketch (detailed in Appendix~\ref{sec:nphardnessproof})]
  The upper bound has already been established.
  Show the lower bound by a reduction from the boolean satisfiability problem with 3-clauses (3SAT).
  For a given instance, construct a Petri game with three environment players and a single system player.
  For every clause, the single system player must allow at least one transition corresponding to a satisfied literal in the clause.
  Deadlock avoidance forces the system player to allow at least one such transition per clause.
  Nondeterminism prevents the system player from allowing two transitions corresponding to complementary literals.
\end{proof}

\begin{restatable}{theorem}{polynomialtwo}
  \label{thm:polynomialtwo}
  Petri games with one system player and at most two environment players can be decided in polynomial time.
\end{restatable}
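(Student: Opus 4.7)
The plan is to use the graph-game reduction established in Theorems~\ref{thm:petrigraph}, \ref{thm:graphgraphprime} and \ref{thm:graphprimepetri}, and to decide the resulting game $\mathit{Graph}(\game)$ in polynomial time by exploiting the fact that every reachable marking of $\net^\game$ contains the unique system token plus at most two environment tokens. Consequently the set of reachable markings has size $O(|\places|^3)$ and can be enumerated by a breadth-first search from $\initialmarking$; for each marking $\marking$ the set of markings reachable from $\marking$ by sequences of purely environmental transitions is likewise of polynomial size and polynomial-time computable.

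The only potential source of exponential blowup is that each $\top$-vertex $(\marking,\top) \in \mathcal{V}_0$ has up to $2^{|\post{\innet{s_\marking}}|}$ outgoing commitment edges. I would circumvent this by computing Player~0's winning region as a backward fixed point over the polynomial set of $\top$-vertices. Given the current set $W$ of $\top$-vertices already classified as winning, the task at a vertex $(\marking,\top) \notin W$ is to decide whether some commitment $c \subseteq \post{\innet{s_\marking}}$ exists such that no purely environmental descendant of $\marking$ induces a bad vertex of type (X1), (X2a), (X2b), or (X3), and every system transition in $c$ whose firing can be scheduled after a sequence of purely environmental transitions leads to a successor $(\marking'',\top)$ already in $W$. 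The vertex $(\marking,\top)$ enters $W$ iff at least one such commitment exists, and the iteration stabilises in $O(|\places|^3)$ rounds.

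The main obstacle, and the technical core of the proof, is to show that this local commitment-selection problem is solvable in polynomial time. For each $\innet{t} \in \post{\innet{s_\marking}}$ one precomputes the purely environmental descendants of $\marking$ at which $\innet{t}$ becomes enabled, whether any such firing would violate (X1) or lead outside $W$, and which pairs of transitions in $\post{\innet{s_\marking}}$ clash under (X2a) or (X2b) at some descendant. The resulting constraints on $c$ split into unary exclusions, pairwise exclusions, and one disjunctive (X3)-requirement per purely environmental descendant $\marking'$ in which some system transition but no purely environmental transition is enabled. The critical observation is that $\marking'$ carries at most three tokens, so the system transitions enabled at $\marking'$ partition into a constant number of groups sharing identical preconditions; within each group every pair of distinct transitions violates (X2a), so at most one representative per group may enter $c$. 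Replacing each (X3)-disjunction by its constant-width restriction to group representatives and combining it with the pairwise (X2a)/(X2b) clauses and the unary exclusions yields a constraint system of constantly bounded disjunctive width, solvable in polynomial time by enumerating the constant number of witness representatives per (X3)-constraint and solving the residual 2-SAT problem. Iterating this local test over the $O(|\places|^3)$ $\top$-vertices completes the claimed polynomial-time decision procedure.
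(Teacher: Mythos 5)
Your overall skeleton matches the paper's: both compute Player~0's winning region by a fixed point over the polynomially many reachable markings and, at each $\top$-vertex, reduce the choice of a commitment to a propositional constraint system whose only non-binary constraints are the deadlock (X3) disjunctions, which the three-token bound limits to width three after keeping one representative transition per precondition. The gap is in the last step. A system of unary exclusions, binary exclusions, and polynomially many width-3 disjunctions over \emph{shared} variables is a general 3-SAT instance; ``enumerating the constant number of witness representatives per (X3)-constraint and solving the residual 2-SAT problem'' does not yield a polynomial algorithm, because a two-player transition such as one with precondition $\{\innet{s_\marking},\innet{e_1}\}$ is typically enabled in many different purely environmental descendants of $\marking$ (one for each position of the other environment token), so the witness choices for distinct (X3)-constraints interact and cannot be made independently without an exponential product over the $O(|\places|^2)$ descendants.

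The missing idea, which is how the paper closes exactly this gap, is a locality argument for the third literal: a width-3 deadlock clause arises only when a transition $\innet{t_{1,2}}$ whose precondition is the entire three-token marking $\marking'$ is available, and such a transition is enabled in no other reachable marking. Hence it occurs in no other clause of the formula, can be dropped from the global instance together with the deadlock clause for $\marking'$ (keeping only the 2-clause $\neg\innet{t_1}\lor\neg\innet{t_2}$ for nondeterminism), and can be added to the commitment in a post-processing step exactly when neither $\innet{t_1}$ nor $\innet{t_2}$ was selected; this avoids deadlock at $\marking'$ and cannot create nondeterminism anywhere else. With that observation the residual instance genuinely is 2-SAT and your argument goes through; without it, the claimed polynomial bound is not established.
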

\begin{proof}[Proof sketch (detailed in Appendix~\ref{sec:polynomialtwoproof})]
  We adapt the algorithm in Appendix~\ref{sec:algodec}, which evaluates commitments symbolically with a SAT solver.
  Due to the special structure of the SAT instances generated, we can add pre- and postprocessing steps such that the SAT queries only contain 2-clauses.
  Since 2SAT can be solved in polynomial time \cite{aspvall1979linear}, this yields a polynomial-time decision procedure.
\end{proof}

\section{Conclusions}
In this paper, we have developed algorithms for the synthesis
of reactive systems in distributed environments.  We have studied the
problem in the setting of Petri games.  Previously, the decidability of
Petri games was only known for non-distributed environments, i.e., for
games with a single environment token~\cite{petrigames}. Our
algorithms solve Petri games with one system token and an arbitrary number of environment
tokens.  We have shown that the synthesis
problem can be solved in polynomial time for nets with up to two environment tokens.  For an arbitrary but fixed
number of three or more environment tokens, the problem is
NP-complete. If the number of environment tokens grows with the size
of the net, the problem is EXPTIME-complete.

An intriguing question for future work is whether our results, which
scale to an arbitrary number of environment tokens, can be
combined with the results of~\cite{petrigames}, which scale to an arbitrary
number of system tokens. This would allow us to synthesize
``distributed systems in distributed environments.'' With the algorithm
presented in this paper, we can already synthesize individual components in
such distributed systems, by treating the other components as
adversarial (cf. \cite{DBLP:conf/atva/FinkbeinerS05}). 
The approach of \cite{petrigames} would additionally allow us to
analyze the cooperation between the system components.

\bibliography{bib}

\appendix

\section{Multisets}
A multiset $M$ over a set $S$ is a function from $S$ to the non-negative integers.
For an element $s$ of $S$, let $s \in M$ denote that $M(s) > 0$.
The set of all such elements $\{s \in S \mid M(s) > 0\}$ is called the support of $M$.
We identify $\{0,1\}$-valued multisets with their support.
A multiset $M$ over $S$ is finite if its support is finite.
The cardinality of such a finite multiset is defined as $|M| = \sum_{s \in S} M(s)$.
Otherwise, we write $|M| = \infty$.
For multisets $M, N$ over $S$, $M \subseteq N$ holds iff $M(s) \leq N(s)$ for all $s \in S$.
We define the difference of two multisets such that $(M - N)(s) = \max (0, M(s) - N(s))$ for all $s$.
Similarly, the disjoint union of multisets satisfies $(M + N)(s) = M(s) + N(s)$ for all $s$.
For a finite set $X = \{M_1, M_2, \dots, M_t\}$ of multisets, let $\biguplus_{M \in X} M$ denote $M_1 + M_2 \dots + M_t$.
If $f$ is a function from a set $S$ to a set $T$ and $M$ is a multiset over $S$, we define $f[M]$ to be the multiset over $T$ defined by $f[M](t) = \sum_{s \in S, f(s) = t} M(s)$.

\section{Detailed proofs}
\subsection{Relating cuts and markings}
\label{sec:cutsmarkings}

\begin{lemma}
  Let $\cut$ be a cut in an occurrence net $\net$, in which $t \in \transitions$ is enabled.
  Then, $\cut' \coloneqq \cut - \pre{t} + \post{t}$  is also a cut.
  In particular, $\cut'$ is a set and $\cut' = \cut \setminus \pre{t} \cup \post{t}$.
  \label{lem:cutstep}
\end{lemma}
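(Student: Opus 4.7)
The plan is to prove three things in sequence: (i) $\post{t} \cap \cut = \emptyset$, from which both $\cut'$ being an ordinary set and the identity $\cut' = \cut \setminus \pre{t} \cup \post{t}$ follow at once; (ii) the places in $\cut'$ are pairwise concurrent; and (iii) $\cut'$ is inclusion-maximal with this property.

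For (i), pick any $q \in \pre{t}$ (which exists since preconditions are nonempty) and any $p \in \post{t}$. The chain $q \flows t \flows p$ gives $q < p$. If $p$ also lay in $\cut$, then either $p \in \pre{t}$, which contradicts well-foundedness of $\flow^{-1}$, or $p$ and $q$ would be distinct elements of $\cut$ that are causally related, violating concurrency of $\cut$. Hence $\post{t} \cap \cut = \emptyset$, and the two descriptions of $\cut'$ coincide.

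For (ii) I would split into three subcases. Pairs inside $\cut \setminus \pre{t}$ are concurrent because $\cut$ is. Pairs inside $\post{t}$ share $t$ as their unique incoming transition, so they are not causally related, and any conflict witness $r$ would have to offer a second outgoing transition that enters the past of $t$, which by inheritance of conflict along $\leq$ would imply $t \conflict t$, violating the occurrence-net axiom. The mixed subcase, $p_1 \in \cut \setminus \pre{t}$ versus $p_2 \in \post{t}$, is the delicate one: causality $p_2 \leq p_1$ would yield $q \leq p_1$ for some $q \in \pre{t} \subseteq \cut$, contradicting concurrency inside $\cut$; causality $p_1 < p_2$ would force the chain to cross some $q \in \pre{t}$, giving the same kind of contradiction; and a conflict $p_1 \conflict p_2$ with witness $r$ can be pushed through $t$ to produce a conflict $p_1 \conflict q$ for some $q \in \pre{t}$, again contradicting concurrency in $\cut$. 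I expect this last subcase to be the main obstacle, so I would first record a small inheritance lemma: whenever $r \flows t_1 \leq x$, $r \flows t_2 \leq y$, $t_1 \neq t_2$, and $r \notin \{x, y\}$, then $x \conflict y$.

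For (iii), suppose some $p \notin \cut'$ is concurrent with every element of $\cut'$. I would show that $p$ is in fact concurrent with each $q \in \pre{t}$, so by maximality of $\cut$ we get $p \in \cut$, and then $p \in \cut \setminus \pre{t} \subseteq \cut'$ (since $p \in \pre{t}$ would be incompatible with concurrency of $p$ with $\post{t}$), a contradiction. For the three incompatibilities of $p$ with $q$: $p < q$ gives $p < p_2$ for any $p_2 \in \post{t}$; $q < p$ splits by whether the first transition leaving $q$ on a chain to $p$ equals $t$ (producing causality between $p$ and $\post{t}$) or is distinct from $t$ (producing $p \conflict p_2$ with $q$ as witness); and $p \conflict q$ lifts along $t$ to $p \conflict p_2$ via the same inheritance lemma. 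Each case contradicts concurrency of $p$ with $\cut'$, closing the argument.
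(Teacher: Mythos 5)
Your proposal is correct and follows essentially the same route as the paper's proof: first disjointness of $\post{t}$ from $\cut$ (giving set-ness and the identity), then pairwise concurrency by a case split on membership in $\cut \setminus \pre{t}$ versus $\post{t}$ and on causality versus conflict, then maximality. The only difference is that you work out the maximality step in detail (showing a place concurrent to all of $\cut'$ is also concurrent to each place in $\pre{t}$), whereas the paper asserts this in one sentence; your elaboration is sound and fills in exactly what that sentence leaves implicit.
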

  \begin{proof}
    Since $\cut$ is a set and $\post{t}$ is a set, $\cut'$ is a set if we can show that these two are disjoint.
    Let $p$ be any element of $\pre{t} \cap \cut$.
    If there were a element $p'$ of $\cut$ and $\post{t}$, it would hold that $p < p'$.
    This would contradict the fact that both are elements of a cut.
    Because of this, we can reason about sets and set operations in the following.

    It holds that every two distinct places $p, p'$ in $\cut'$ are concurrent:
    We distinguish different cases depending on whether the nodes lie in $\cut \setminus \pre{t}$ or in $\post{t}$ and depending on what could prevent them from being concurrent.
    \begin{itemize}
      \item If $p, p' \in \cut \setminus \pre{t}$, they are concurrent because $\cut$ is a cut.
      \item Assume $p,p' \in \post{t}$; $p \conflict p'$. Since places in occurrence nets have at most one incoming transition, $t$ would be in self-conflict, contradiction.
      \item Assume $p \in \post{t}, p' \in \cut \setminus \pre{t}$; $p \conflict p'$. Then, $t \conflict p'$ and there is $x \in \pre{t} \subseteq \cut$ such that $x < p'$ or $x \conflict p'$, both contradicting the assumption that $x$ and $p'$ are elements of a cut.
      \item Assume $p,p' \in \post{t}$; $p < p'$. Since $\pre{p'} = \{t\}$, $t \flows p < t \flows p'$. Thus, $t < t$ holds, which contradicts the well-foundedness of $\flow^{-1}$.
      \item Assume $p \in \post{t}, p' \in \cut \setminus \pre{t}$; $p < p'$. Let $x \in \pre{t} \subseteq \cut$. Then, $x < p < p'$, contradicting the assumption $x,p' \in \cut$.
      \item Assume $p \in \cut \setminus \pre{t}, p' \in \post{t}$; $p < p'$. Since $\pre{p'} = \{t\}$, $p < t$. Because $p \notin \pre{t}$, there is $x \in \pre{t}$ such that $p < x$, which contradicts the assumption that $x, p \in \cut$.
    \end{itemize}
    All other cases follow by symmetry.

    Furthermore, $\cut'$ is a maximal set of concurrent places because every place $p \notin \cut'$ concurrent to $\cut'$ would also be concurrent to $\cut$, but not be an element of it.
    This would contradict the maximality of the cut $\cut$.
  \end{proof}

\begin{lemma}
All reachable markings $\marking$ in an occurrence net $\net$ are cuts.
\label{lem:markingscuts}
\end{lemma}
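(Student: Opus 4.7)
The plan is to proceed by induction on the length of the firing sequence that witnesses reachability of $\marking$ from $\initialmarking$, with the inductive step being essentially immediate from the already-proved Lemma~\ref{lem:cutstep}. The substantive work lies in the base case, where I must show that $\initialmarking = \{p \in \places \mid \pre{p} = \emptyset\}$ is itself a cut.

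For the base case, I would establish the two requirements of a cut separately. First, pairwise concurrency: let $p, p' \in \initialmarking$ be distinct. Neither $p < p'$ nor $p' < p$ can hold, because any chain in $<$ ending at a place $q$ must traverse some transition in $\pre{q}$, contradicting $\pre{p} = \pre{p'} = \emptyset$. A conflict $p \conflict p'$ is likewise impossible, since conflict requires a place $q$ (distinct from $p,p'$) with two distinct outgoing transitions leading via $\flow$ to $p$ and $p'$ respectively, and again the first step from such an outgoing transition would put a transition into $\pre{p}$ or $\pre{p'}$. Hence any two distinct elements of $\initialmarking$ are concurrent.

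Second, maximality: given any place $p \notin \initialmarking$, I need to exhibit some element of $\initialmarking$ that is not concurrent with $p$. Since $\pre{p} \neq \emptyset$, there is a transition $t$ with $t \flows p$, and by the finite-nonempty-precondition requirement on transitions, $\pre{t} \neq \emptyset$. Iterating this observation and invoking the well-foundedness of $\flow^{-1}$, I obtain a place $p_0$ with $\pre{p_0} = \emptyset$, so $p_0 \in \initialmarking$, and by construction $p_0 \leq p$; since $p_0 \in \initialmarking$ while $p \notin \initialmarking$, we in fact have $p_0 < p$, so $p_0$ and $p$ are causally related and thus not concurrent. Therefore $\initialmarking$ is maximal among sets of pairwise concurrent places, hence a cut.

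For the inductive step, suppose $\marking$ is reachable via a sequence of length $n+1$: write $\initialmarking \markingstep{t_1} \cdots \markingstep{t_n} \marking_n \markingstep{t_{n+1}} \marking$. By the induction hypothesis, $\marking_n$ is a cut, and $t_{n+1}$ is enabled in $\marking_n$, so Lemma~\ref{lem:cutstep} applies to give that $\marking = \marking_n - \pre{t_{n+1}} + \post{t_{n+1}}$ is a cut as well. The main obstacle is really just the base case, specifically the maximality argument, where one must recognize that the condition of nonempty preconditions together with well-foundedness of $\flow^{-1}$ is precisely what forces every non-initial place to lie strictly above some initial place.
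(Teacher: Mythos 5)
Your proof is correct and follows essentially the same route as the paper's: induction on the length of the firing sequence, with the inductive step discharged by Lemma~\ref{lem:cutstep} and the base case showing that $\initialmarking$ is pairwise concurrent (empty preconditions rule out causal relation and conflict) and maximal (well-foundedness of $\flow^{-1}$ forces every non-initial place strictly above some initial place). You simply spell out the base case in more detail than the paper does.
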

\begin{proof}
By induction over the number of transitions needed to reach the marking from the initial one.
If $\marking$ is the initial marking, the proposition directly holds:
By definition, the initial marking is a set.
Because no place in it has an incoming transition, no two places can be causally related or in conflict.
Since following the inverse flow relation from any node will always lead us to an initial place, no other place is concurrent to $\initialmarking$.
Thus, $\initialmarking$ is a cut.
The induction step holds by Lemma~\ref{lem:cutstep}.
\end{proof}

\begin{lemma}
For every occurrence net $\net$ and every place $p \in \places$, $\past{p}$ is finite.
\label{lem:finitepast}
\end{lemma}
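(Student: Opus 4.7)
The plan is to prove this by well-founded induction on the inverse flow relation $\flow^{-1}$, which the definition of an occurrence net explicitly declares well-founded. First I would establish a simple structural observation: every node in $\net$ has only finitely many immediate $\flow$-predecessors. For a transition $t$, this follows from the Petri-net requirement $|\pre{t}| < \infty$. For a place $p$, the definition of an occurrence net demands that $p$ have at most one incoming transition, so it has at most one immediate predecessor.

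Given this, the induction is straightforward. Let $\Phi(x)$ be the predicate ``$\past{x}$ is finite'' and perform well-founded induction on $\flow^{-1}$. For the inductive step at a node $x$ with immediate $\flow$-predecessors $y_1, \ldots, y_n$ (finitely many by the observation above),
\[
\past{x} \;=\; \{x\} \;\cup\; \bigcup_{i=1}^{n} \past{y_i},
\]
since any $y < x$ must lie on some $\flow$-chain from $y$ to $x$, which in particular factors through one of the immediate predecessors $y_i$. This is a finite union of finite sets (each $\past{y_i}$ finite by the induction hypothesis), so $\past{x}$ is finite. Well-founded induction then yields $\Phi(x)$ for every node $x$, and in particular for the place $p$ named in the statement.

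I do not expect a serious obstacle; the argument is essentially König's lemma transplanted into occurrence nets: finite branching of $\flow^{-1}$ combined with the absence of infinite descending $\flow$-chains forces finitely many ancestors. The only point that needs a moment's care is the displayed set equality in the inductive step, which relies on $<$ being defined as the transitive closure of the support of $\flow$, so that every strict ancestor of $x$ sits in $\past{y_i}$ for some immediate predecessor $y_i$.
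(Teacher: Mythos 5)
Your proof is correct and rests on exactly the same two ingredients as the paper's: finite indegree of every node (at most one incoming transition per place, $|\pre{t}|<\infty$ for transitions) and well-foundedness of $\flow^{-1}$. The only difference is packaging: the paper views $\past{p}$ as the reachable fragment of the inverted-flow graph and invokes K\"onig's lemma as a black box (finitely branching, all paths finite, hence finite), whereas you unfold that appeal into an explicit well-founded induction on $\flow^{-1}$ with the decomposition $\past{x}=\{x\}\cup\bigcup_i \past{y_i}$ over the finitely many immediate predecessors $y_i$. Your version is self-contained and makes the inductive structure visible; the paper's is shorter by delegating the combinatorial core to a named lemma. Both are sound, and your closing remark correctly identifies them as the same argument.
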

\begin{proof}
In an occurrence net, the indegree of every node is finite because all places have at most one incoming transition and because the size of the precondition of a transition is finite by definition.\footnote{Note that the postcondition of a place can be infinite. Thus, the same is not true about outdegrees.}
$\past{p}$ can be seen as the nodes of the reachable fragment of $\net$ with inverted flow starting from $p$ and, with this relation, the smaller graph has finite outdegree.
Furthermore, all paths in this graph are finite since $\flow^{-1}$ is well-founded.
Thus, by König's lemma, $\past{p}$ is finite as well.
\end{proof}

\begin{lemma}
  Let $\net$ be an occurrence net and $S$ be a finite set of pairwise concurrent places.
  Then, there is a reachable marking $\marking$ such that $S \subseteq \marking$.
  \label{lem:pcinmarking}
\end{lemma}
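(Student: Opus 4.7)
The plan is to fire exactly the transitions in the causal past of $S$, in an order consistent with $<$, and check that this firing sequence is legal and terminates in a marking containing $S$. Let $T \coloneqq \past{S} \cap \transitions$; by Lemma~\ref{lem:finitepast} applied to each (finitely many) element of $S$, the set $\past{S}$ is finite, so $T$ is finite. Since $\flow^{-1}$ is well-founded, $<$ restricted to $T$ is a finite strict partial order and admits a linear extension $t_1, t_2, \dots, t_n$. I would fire the $t_i$ in this order from $\initialmarking$ and argue that the result $\marking$ satisfies $S \subseteq \marking$.

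The crucial structural observation driving everything is that no two distinct elements of $\past{S}$ are in conflict: if $x \conflict y$ with $x \leq s$ and $y \leq s'$ for $s, s' \in S$, the conflict would propagate to $s \conflict s'$, contradicting pairwise concurrency of $S$. In particular, no two distinct transitions of $T$ share a precondition place, because sharing a precondition would witness a conflict between them. Given this, I would show by induction on $i$ that $t_i$ is enabled after firing $t_1, \dots, t_{i-1}$. For each $p \in \pre{t_i} \subseteq \past{S}$, either $p$ is an initial place and therefore present in $\initialmarking$, or $p$ has a unique incoming transition $t_p \in \past{t_i} \cap \transitions \subseteq T$ with $t_p < t_i$, which appears before $t_i$ in the linear extension and placed $p$ into the marking when fired. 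In either case, $p$ cannot have been removed in the meantime, because that would require some $t_j$ ($j < i$) with $p \in \pre{t_j}$, forcing $t_j$ and $t_i$ to share $p$ as a precondition, contradicting the no-shared-preconditions observation.

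Finally, I would argue that $S \subseteq \marking$ by the same pattern: for $s \in S$, either $s$ is initial (and hence present in $\initialmarking$) or its unique incoming transition $t_s$ lies in $T$ and has already been fired; in both cases $s$ cannot have been consumed afterwards, since any $t_j \in T$ with $s \in \pre{t_j}$ would satisfy $s < t_j \leq s''$ for some $s'' \in S$, contradicting concurrency of $s$ and $s''$. The only real subtlety in the proof is the conflict-propagation lemma that rules out two elements of $\past{S}$ from being in conflict; the remaining steps are straightforward bookkeeping along a topological sort of $T$.
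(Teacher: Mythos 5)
Your proof is correct, but it takes a genuinely different route from the paper's. The paper argues \emph{backwards} by induction on $\left|\past{S}\setminus S\right|$: it picks a $<$-maximal transition $t\in\pre{S}$, replaces $S$ by $S' = S\setminus\post{t}\cup\pre{t}$, verifies by a six-way case analysis that $S'$ is again pairwise concurrent, applies the induction hypothesis, and then fires $t$. You instead argue \emph{forwards}: you identify the finite set $T=\past{S}\cap\transitions$, fire it in any linearization of $<$, and show that each transition is enabled when its turn comes and that $S$ survives to the end. Your key structural lemma --- that $\past{S}$ is conflict-free, hence no two transitions of $T$ share a precondition place --- does the work that the paper's local concurrency case analysis does at each induction step; it is a reusable fact (essentially that $T$ is a configuration and $\marking$ its mapping cut, in the sense the paper later borrows from Esparza) and arguably makes the bookkeeping more transparent, at the cost of needing the topological sort and the conflict-propagation argument up front. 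One small point to tighten: when you propagate a conflict $x\conflict y$ up to $s\conflict s'$, the case $s=s'$ is not excluded by pairwise concurrency of $S$ (which only speaks about distinct elements); there you must instead observe that both conflicting branches pass through the unique incoming transition of $s$, which would then be in self-conflict, contradicting the occurrence-net axiom. The same remark applies to the final step where $s< t_j\leq s''$ with possibly $s''=s$, which is excluded by well-foundedness of $\flow^{-1}$ rather than by concurrency. Both are one-line patches, so the argument stands.
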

  \begin{proof}
    Prove by induction on $\left| \past{S} \setminus S \right|$, which is finite according to Lemma~\ref{lem:finitepast}.

    If $\left| \past{S} \setminus S \right| = 0$, then $S \subseteq \initialmarking$, which proves the claim.
    Else, $\pre{S} \neq \emptyset$.
    Choose $t \in \pre{S}$ such that there is no $t' \in \pre{S}$ with $t < t'$ and choose any $x \in S \cap \post{t}$.
    Choosing such a $t$ is possible because $\pre{S}$ is finite and $\flow^{-1}$ is well-founded.
    Then, $S' \coloneqq S \setminus \post{t} \cup \pre{t}$ is a finite set.
    All distinct $p,p'$ in $S'$ are concurrent:
    \begin{itemize}
      \item All $p,p' \in S \setminus \post{t}$ are concurrent by assumption.
      \item Assume $p,p' \in \pre{t}$; $p \conflict p'$. Then $t$ would be in self-conflict, contradiction.
      \item Assume $p \in \pre{t}, p' \in S \setminus \post{t}$; $p \conflict p'$. Then, $x \conflict p'$, contradiction.
      \item Assume $p,p' \in \pre{t}$; $p < p'$. If $p \flows t < p' \flows t$, $\flow^{-1}$ would not be well-founded. Otherwise, $t$ can be reached from $p$ both directly and via another transition, thus $t$ is in self-conflict, contradiction.
      \item Assume $p \in \pre{t}, p' \in S \setminus \post{t}$; $p < p'$. If $p \flows t < p'$, there would be a $t' \in \pre{S}$ such that $t < t'$, contradiction.
        Otherwise, $x \conflict p'$, contradiction.
      \item Assume $p \in S \setminus \post{t}, p' \in \pre{t}$; $p < p'$. $p < p' < x$, contradiction.
    \end{itemize}
    Furthermore, $\past{S'} \setminus S' \subsetneq \past{S} \setminus S$:
    First, $\past{S'} \setminus S' \subseteq \past{S'} \subseteq \past{S}$.
    Second, an element $s \in \past{S'} \setminus S'$ cannot be in $S'$ but also not in $\post{t}$, thus not in $S$.
    The set inclusion is strict because the right hand side includes $t$ while the left hand side does not.

    By the induction hypothesis, there is a reachable marking $\marking'$ such that $S' \subseteq \marking'$.
    In particular, $\pre{t} \subseteq S' \subseteq \marking'$.
    Then, $\marking' \markingstep{t} \marking$ such that $S \subseteq \marking$.
  \end{proof}

\begin{lemma}
  For every reachable marking $\marking^U$ in an initial branching process $(\net^U, \lambda)$ of a net $\net$, there is a reachable marking $\marking$ in $\net$ such that $\lambda[\marking^U] = \marking$.
  \label{lem:branchingmarkings}
\end{lemma}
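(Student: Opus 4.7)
The plan is to proceed by induction on the length $n$ of a firing sequence $\initialmarking^U = \marking^U_0 \markingstep{t^U_1} \marking^U_1 \markingstep{t^U_2} \dots \markingstep{t^U_n} \marking^U_n = \marking^U$ witnessing that $\marking^U$ is reachable in $\net^U$. For the base case $n=0$, the statement $\lambda[\initialmarking^U] = \initialmarking$ is immediate from the fact that $\lambda$ is an \emph{initial} homomorphism, and $\initialmarking$ is trivially a reachable marking of $\net$.

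For the inductive step, assume by the induction hypothesis that $\marking \coloneqq \lambda[\marking^U_{n-1}]$ is a reachable marking of $\net$. I first verify that $\innet{t} \coloneqq \lambda(t^U_n)$ is enabled in $\marking$: since $t^U_n$ is enabled in $\marking^U_{n-1}$, we have $\pre{t^U_n} \subseteq \marking^U_{n-1}$, and applying $\lambda[\cdot]$ (which is monotone on multisets) together with the homomorphism property $\lambda[\pre{t^U_n}] = \pre{\innet{t}}$ yields $\pre{\innet{t}} \subseteq \lambda[\marking^U_{n-1}] = \marking$. Then I show that firing $\innet{t}$ in $\marking$ produces exactly $\lambda[\marking^U_n]$: using the small auxiliary fact that for finite multisets $N \subseteq M$ over $\places^U$ and any function $\lambda$ we have $\lambda[M - N] = \lambda[M] - \lambda[N]$ and $\lambda[M + N] = \lambda[M] + \lambda[N]$ (which is routine from the definitions of $+$, $-$ and $\lambda[\cdot]$ in the multiset appendix), together with the homomorphism equations $\lambda[\pre{t^U_n}] = \pre{\innet{t}}$ and $\lambda[\post{t^U_n}] = \post{\innet{t}}$, I compute
\[
\lambda[\marking^U_n] = \lambda[\marking^U_{n-1} - \pre{t^U_n} + \post{t^U_n}] = \marking - \pre{\innet{t}} + \post{\innet{t}}.
\]
Hence $\marking \markingstep{\innet{t}} \lambda[\marking^U_n]$ in $\net$, so $\lambda[\marking^U_n]$ is reachable, closing the induction.

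There is no real obstacle here; the only point deserving care is the multiset bookkeeping, namely the claim that $\lambda[\cdot]$ distributes over $+$ and over $-$ (the latter only when one argument is contained in the other). I would discharge this once, up front, as a one-line computation from the pointwise definitions in the multiset appendix, and then use it twice in the inductive step as indicated above. The branching-process condition that forbids splitting transitions unnecessarily plays no role in this lemma; only the homomorphism and initiality of $\lambda$ are used.
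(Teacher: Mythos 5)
Your proof is correct and follows essentially the same route as the paper's: induction on the length of the firing sequence, with the base case given by initiality of $\lambda$ and the inductive step combining enabledness of $\lambda(t)$ (via $\lambda[\pre{t}] = \pre{\lambda(t)} \subseteq \lambda[\marking^U]$) with the computation $\lambda[\marking^U - \pre{t} + \post{t}] = \lambda[\marking^U] - \lambda[\pre{t}] + \lambda[\post{t}]$. Your explicit remark that distributing $\lambda[\cdot]$ over multiset difference requires the subtrahend to be contained in the minuend is a point the paper glosses over, and it is a worthwhile precaution given the truncated definition of $-$ in the appendix.
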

  \begin{proof}
    By induction on the number of fired transitions needed to reach $\marking^U$ from $\initialmarking^U$.
    If $\marking^U$ is the initial marking, $\lambda[\marking^U] = \initialmarking$ because $\lambda$ is an initial homomorphism.

    For the induction step, assume that there are markings $\marking^U, \marking^{U \prime}$ and a transition $t \in \transitions^U$ such that $\marking^U \markingstep{t} \marking^{U \prime}$.
    Further assume that there is $\marking \in \reachablemarkings{\net}$ such that $\lambda[\marking^U] = \marking$.
    Then, $\lambda(t)$ is enabled in $\marking$ because $\pre{\lambda(t)} = \lambda[\pre{t}] \subseteq \lambda[\marking^U]$.
    Choose the reachable marking $\marking'$ such that $\marking \markingstep{\lambda(t)} \marking'$.
    \begin{align*}
      \lambda[\marking^{U \prime}] &= \lambda[\marking^U - \pre{t} + \post{t}]\\
      &= \lambda[\marking^U] - \lambda[\pre{t}] + \lambda[\post{t}] \\
      &= \marking - \pre{\lambda(t)} + \post{\lambda(t)} \\
      &= \marking'
      \qedhere 
    \end{align*}
  \end{proof}

\begin{lemma}
  In a branching process $(\net^U, \lambda)$ of a $k$-bounded, finite net $\net$, all cuts are finite.
  \label{lem:boundedfinitecuts}
\end{lemma}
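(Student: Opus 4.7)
The plan is to argue by contradiction: suppose there exists an infinite cut $\cut$ of $\net^U$. Because $\net$ is finite, the place set $\places$ is finite, so the restriction of $\lambda$ to $\cut$ sends an infinite set into a finite one. By pigeonhole, some $\innet{p} \in \places$ has at least $k+1$ distinct preimages $p_1, \dots, p_{k+1} \in \cut$ under $\lambda$.

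Since $p_1, \dots, p_{k+1}$ all lie in the cut $\cut$, they are pairwise concurrent, and they form a \emph{finite} set. I would then invoke Lemma~\ref{lem:pcinmarking} on $\{p_1, \dots, p_{k+1}\}$ to obtain a reachable marking $\marking^U$ of $\net^U$ with $\{p_1, \dots, p_{k+1}\} \subseteq \marking^U$. Applying Lemma~\ref{lem:branchingmarkings} then yields that $\lambda[\marking^U]$ is a reachable marking of $\net$. But $\marking^U$ is a set (occurrence nets are $1$-bounded) containing $k+1$ distinct places, each mapped to $\innet{p}$, so $\lambda[\marking^U](\innet{p}) \geq k+1$, contradicting $k$-boundedness of $\net$.

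The argument is short once the earlier lemmas are in place, and the key insight is that one does not need any structural argument directly on the (potentially infinite) cut: it suffices to extract a $(k+1)$-element witness and reduce everything to the finite-set statements already proved. The main potential pitfall is forgetting that reachable markings of an occurrence net are sets, so that the $p_i$ genuinely contribute $k+1$ units of multiplicity to $\lambda[\marking^U](\innet{p})$; once that is noted, the contradiction with $k$-boundedness is immediate.
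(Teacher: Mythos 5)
Your proposal is correct and follows essentially the same route as the paper: contradiction via pigeonhole on the finite place set of $\net$, then Lemma~\ref{lem:pcinmarking} to embed the $k+1$ concurrent preimages into a reachable marking, and Lemma~\ref{lem:branchingmarkings} to project it to a reachable marking of $\net$ with multiplicity at least $k+1$, contradicting $k$-boundedness. Your explicit remark that reachable markings of an occurrence net are sets is a welcome clarification of a step the paper leaves implicit.
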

  \begin{proof}
    By contradiction.
    Assume that there is an infinite cut $\cut$.
    By the pigeonhole principle, $k + 1$ distinct places in $\cut$ are mapped to the same value by $\lambda$.
    Since these places are pairwise concurrent, they are a subset of a reachable marking $\marking^U$ by Lemma~\ref{lem:pcinmarking}.
    By Lemma~\ref{lem:branchingmarkings}, there is a reachable marking $\marking$ in $\net$ such that $\lambda[\marking^U] = \marking$.
    This marking has at least $k + 1$ tokens in a single place, in contradiction to $k$-boundedness.
  \end{proof}

\begin{lemma}
  Let $\cut, \mathcal{D}$ be two finite cuts in an occurrence net $\net$ such that $\cut \subseteq \past{\mathcal{D}}$.
  Then, there exists a sequence $t_1, t_2, \dots, t_r$ of transitions such that $\cut \markingstep{t_1} \markingstep{t_2} \dots \markingstep{t_r} \mathcal{D}$.
  \label{lem:cutsreachable}
\end{lemma}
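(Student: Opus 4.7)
The plan is to induct on the cardinality of the set
$T' \coloneqq (\past{\mathcal{D}} \cap \transitions) \setminus (\past{\cut} \cap \transitions)$
of transitions that still need to fire. This set is finite, since $\mathcal{D}$ is finite and $\past{d}$ is finite for every $d$ by Lemma~\ref{lem:finitepast}. For the base case $T' = \emptyset$, the hypothesis $\cut \subseteq \past{\mathcal{D}}$ gives $\past{\cut} \cap \transitions \subseteq \past{\mathcal{D}} \cap \transitions$, and combined with the other inclusion this yields
$\past{\cut} \cap \transitions = \past{\mathcal{D}} \cap \transitions$. By the mapping-cut characterization mentioned in the paper, both $\cut$ and $\mathcal{D}$ are then reached from $\initialmarking$ by firing exactly this common set of transitions, so $\cut = \mathcal{D}$ and the empty firing sequence suffices.

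For the inductive step with $T' \neq \emptyset$, I choose some $<$-minimal $t \in T'$; such a $t$ exists because $T'$ is finite. Granting enabledness of $t$ in $\cut$ (the main obstacle, addressed below), Lemma~\ref{lem:cutstep} gives a new cut $\cut' \coloneqq \cut - \pre{t} + \post{t}$. I then verify $\cut' \subseteq \past{\mathcal{D}}$: places in $\cut \setminus \pre{t}$ inherit this from $\cut$, while each $p \in \post{t}$ either lies in $\mathcal{D}$ or is consumed by some later transition $t^\ast \in \past{\mathcal{D}} \cap \transitions$, which yields $p < t^\ast \leq d$ for some $d \in \mathcal{D}$. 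A short check shows $\past{\cut} \subseteq \past{\cut'}$, so the corresponding set $T''$ is a subset of $T'$; this inclusion is strict because $t \in T'$ but $t \in \past{\cut'}$. The induction hypothesis, applied to $\cut'$ and $\mathcal{D}$, then provides a firing sequence $t_2, \ldots, t_r$ with $\cut' \markingstep{t_2} \cdots \markingstep{t_r} \mathcal{D}$, and prepending $t$ yields the desired sequence from $\cut$.

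The main obstacle is showing that this minimal $t$ is enabled in $\cut$, i.e., every $p \in \pre{t}$ lies in $\cut$. The key supporting fact is that $\past{\mathcal{D}}$ contains no two nodes in conflict: by Lemma~\ref{lem:pcinmarking} and the maximality of $\mathcal{D}$, $\mathcal{D}$ is a reachable marking, so the transitions $\past{\mathcal{D}} \cap \transitions$ fired to reach it must be pairwise non-conflicting (else two outgoing transitions of a common place would both fire in the $1$-bounded net), and any conflict between two places in $\past{\mathcal{D}}$ would be witnessed by two such conflicting transitions. Now, since $\cut$ is a maximal set of pairwise concurrent places and $p \in \pre{t} \subseteq \past{\mathcal{D}}$, if $p \notin \cut$ then $p$ must be causally related to some $q \in \cut$. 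In the case $p < q$, the first transition $t_0$ on the flow-path from $p$ to $q$ lies in $\past{\cut}$ and differs from $t$ (since $t \notin \past{\cut}$), yet both consume $p$, so $t \conflict t_0$ inside $\past{\mathcal{D}}$, a contradiction. In the case $q < p$, the first transition $t_q$ on the flow-path from $q$ to $p$ satisfies $t_q < t$ and $t_q \in \past{\mathcal{D}}$; the minimality of $t$ in $T'$ then forces $t_q \in \past{\cut}$, but then $q < t_q \leq r$ for some $r \in \cut$, contradicting pairwise concurrency of $\cut$. Hence $p \in \cut$, and the induction closes.
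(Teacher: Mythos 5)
Your proof is correct and follows the same overall skeleton as the paper's: induct on a finite measure of how much of $\past{\mathcal{D}}$ is not yet covered by $\past{\cut}$, fire a $<$-minimal pending transition, and recurse. The genuine difference is in how the two key subgoals are discharged. For enabledness, the paper picks its minimal transition from $\post{\cut} \cap \past{\mathcal{D}}$ (so it touches $\cut$ by construction) and then shows $\pre{t} \subseteq \cut$ by a direct six-case concurrency analysis comparing each $p \in \pre{t}$ with each $c \in \cut$ using only the occurrence-net axioms. You instead take the minimal element of all of $(\past{\mathcal{D}} \cap \transitions) \setminus \past{\cut}$ and first establish, via Lemma~\ref{lem:pcinmarking}, that $\mathcal{D}$ is a reachable marking and hence that $\past{\mathcal{D}}$ is conflict-free; this collapses the analysis to the two causal cases, both of which you handle correctly (and non-circularly, since Lemma~\ref{lem:pcinmarking} is proved independently of this lemma). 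That is a legitimate and arguably cleaner route, at the cost of importing the fact that the transitions fired to reach $\mathcal{D}$ are exactly $\past{\mathcal{D}} \cap \transitions$ --- true, but nowhere proved in the paper.

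The one step you should expand is the base case. From $\past{\cut} \cap \transitions = \past{\mathcal{D}} \cap \transitions$ you conclude $\cut = \mathcal{D}$ by citing ``the mapping-cut characterization mentioned in the paper,'' but the paper only uses that term informally when defining $\mathit{LKC}$ and never proves that a finite cut is determined by the set of transitions in its past; note also that the paper's own base-case argument does not transfer directly, because your measure equates only the transition sets rather than the full pasts. Either prove the characterization (both cuts are reachable by Lemma~\ref{lem:pcinmarking} and maximality, and the marking reached depends only on the set of fired transitions, each transition firing at most once), or argue elementarily: for $c \in \cut$ there is $d \in \mathcal{D}$ with $c \leq d$; if $c < d$, the first transition on the path from $c$ to $d$ lies in $\past{\mathcal{D}} \cap \transitions = \past{\cut} \cap \transitions$, hence below some $c' \in \cut$, giving $c < c'$ and contradicting concurrency of $\cut$; so $\cut \subseteq \mathcal{D}$ and maximality of $\cut$ yields equality. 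With that repair the argument is complete.
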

  \begin{proof}
    By induction on $n \coloneqq \left| \past{\mathcal{D}} \setminus \past{\cut} \right|$.
    This is finite by Lemma~\ref{lem:finitepast}.

    If $n=0$, $\past{\mathcal{D}} = \past{\cut}$ and thus $\mathcal{D} \subseteq \past{\cut}$.
    For every $c \in \cut$, there is $d \in \mathcal{D}$ such that $c \leq d$ and there is $c' \in \cut$ such that $d \leq c'$.
    Thus, $c \leq d \leq c'$ and, by the definition of a cut, $c = d = c'$.
    This means that $\cut$ is a subset of $\mathcal{D}$ and, by symmetry, $\cut = \mathcal{D}$.
    The claim then holds for the empty sequence.

    Now let $n > 0$ and therefore $\cut \neq \mathcal{D}$. Thus, there is at least one transition in $T \coloneqq \post{\cut} \cap \past{\mathcal{D}}$.
    Choose a transition $t \in T$ such that there is no transition $t' \in T$ with $t' < t$.
    This is possible since $\flow^{-1}$ is well-founded.
    Choose $x \in \pre{t} \cap \cut$.

    Next, we prove that $\pre{t} \subseteq \cut$ holds. It suffices to show that every $p \in \pre{t}$ is concurrent to each $c \in \cut$, $c \neq p$.
    Then, by the maximality of cuts, $p \in \cut$.
    \begin{itemize}
      \item Assume $p \conflict c$. Since $t \in \past{\mathcal{D}}$, there is a $d_t\in \mathcal{D}$ such that $t < d_t$.
        Moreover, there is $d_c \in \mathcal{D}$ such that $c \leq d_c$. It follows that $d_t \conflict d_c$ and, since both are elements of a cut, $d_t= d_c$.
        Then, the single incoming transition of $d_t$ is in self-conflict, which contradicts the assumption that $\net$ is an occurrence net.
      \item Assume $p < c$. If $t < c$, $x < t < c$ holds, which contradicts the fact that $x$ and $c$ are in the cut $\cut$.
        Else, there is a transition $t' \neq t$ such that, with $d_t$ and $d_c$ as above, $p \flows t' < c < d_c$ and furthermore $p \flows t < d_t$.
        Thus, $d_t \conflict d_c$, contradiction.
      \item Assume $c < p$. Let $t'$ be a transition such that $c \flows t' < p$. From $t' \in \post{\cut}$ and $t' < p \flows t < d_t$ follows $t' \in T$, which contradicts the choice of $t$.
    \end{itemize}
    Now let $\cut' \coloneqq \cut \setminus \pre{t} \cup \post{t}$, i.e., $\cut \markingstep{t} \cut'$.
    By Lemma~\ref{lem:cutstep}, $\cut'$ is a cut, $\cut'$ is finite and, by construction, $\cut' \subseteq \past{\mathcal{D}}$.
    It holds that $\past{\mathcal{D}} \setminus \past{\cut'} \subsetneq \past{\mathcal{D}} \setminus \past{\cut}$ because $\past{\cut} \subseteq \past{\cut'}$ and $t \in \past{\cut'} \setminus \past{\cut}$.
    By the induction hypothesis, there exists a sequence $t_2, \dots, t_r$ such that $\cut' \markingstep{t_2} \dots \markingstep{t_r} \mathcal{D}$, thus $\cut \markingstep{t} \markingstep{t_2} \dots \markingstep{t_r} \mathcal{D}$.
  \end{proof}

\begin{corollary}
  For a branching process of a bounded, finite net, the notion of a cut coincides with that of a reachable marking.
  \label{lem:boundedbranchingcutsmarkings}
\end{corollary}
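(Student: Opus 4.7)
The plan is to prove the two inclusions separately, drawing almost entirely on the lemmas already established in this appendix. The corollary follows by combining Lemma~\ref{lem:markingscuts}, Lemma~\ref{lem:pcinmarking}, and Lemma~\ref{lem:boundedfinitecuts}, with the defining maximality property of cuts as the final glue.

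For the easy direction, every reachable marking of the branching process's underlying occurrence net is a cut by Lemma~\ref{lem:markingscuts}. This direction does not need boundedness or finiteness of the original net at all.

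For the reverse direction, let $\cut$ be a cut of the branching process. First I would invoke Lemma~\ref{lem:boundedfinitecuts}, which uses $k$-boundedness and finiteness of $\net$ to guarantee that $\cut$ is finite. Next, since $\cut$ is a finite set of pairwise concurrent places, Lemma~\ref{lem:pcinmarking} produces a reachable marking $\marking$ in the branching process with $\cut \subseteq \marking$. By Lemma~\ref{lem:markingscuts}, $\marking$ is itself a cut, so $\cut \subseteq \marking$ with both sides being maximal sets of pairwise concurrent places. By the maximality condition in the definition of a cut, $\cut = \marking$, and thus $\cut$ is a reachable marking.

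I do not anticipate any obstacle: all the technical work — in particular the nontrivial construction that ``stacks'' a finite concurrent set into a reachable marking via a well-founded choice of frontier transition, and the pigeonhole argument bounding cut cardinality by $k \cdot |\places|$ — has already been done in the preceding lemmas. The only subtlety to flag is that boundedness and finiteness of the underlying net $\net$ are used solely in the reverse direction via Lemma~\ref{lem:boundedfinitecuts}; without these hypotheses, infinite cuts could exist in the branching process and would not correspond to finite (hence reachable) markings.
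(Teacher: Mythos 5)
Your proof is correct, and it follows the paper's decomposition exactly up to the last step: both directions, with the forward inclusion given by Lemma~\ref{lem:markingscuts} and the reverse inclusion starting from finiteness of the cut via Lemma~\ref{lem:boundedfinitecuts}. Where you diverge is in how you conclude that a finite cut is reachable. The paper observes that $\initialmarking \subseteq \past{\cut}$ and applies Lemma~\ref{lem:cutsreachable} to fire a transition sequence from the initial marking directly to $\cut$. You instead apply Lemma~\ref{lem:pcinmarking} to embed $\cut$ into some reachable marking $\marking$, note that $\marking$ is itself a cut by Lemma~\ref{lem:markingscuts}, and conclude $\cut = \marking$ from the maximality of $\cut$ among sets of pairwise concurrent places. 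Both routes are sound and both rest on lemmas already proved in the appendix; yours has the mild aesthetic advantage of not needing the (somewhat heavier) Lemma~\ref{lem:cutsreachable} at all, while the paper's version is more direct in that it exhibits an explicit firing sequence witnessing reachability rather than inferring equality with an unspecified reachable marking. Your closing remarks about where boundedness and finiteness are actually used are accurate.
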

\begin{proof}
  All reachable markings are cuts by Lemma~\ref{lem:markingscuts}.
  Every cut $\cut$ is finite by Lemma~\ref{lem:boundedfinitecuts}, $\initialmarking \subseteq \past{\cut}$ holds, and therefore $\cut$ is a reachable marking by Lemma~\ref{lem:cutsreachable}.
\end{proof}

\subsection{Theorem~\ref{thm:petrigraph}: Translating strategies from $\game$ to $\mathit{Graph}(\game)$}
\label{sec:petrigraphproof}
\petrigraph*
\begin{proof}
  Let $n$ be an arbitrary node of the tree, which is labeled $(\marking, c)$.
  As in the construction of the graph game strategy, we reach a cut $\cut$ such that $\lambda[\cut] = \marking$ by canonically firing transitions corresponding to the prefix.

  If $n$ were labeled with a bad vertex of type (X1), $\marking$ would be a bad marking.
  Then, the preimage of a bad marking is reachable in $\net^\sigma$, which contradicts the assumption that $\beta_\sigma$ ensures the safety condition.
  If the label of $n$ were a bad vertex of type (X2a), there would be distinct transitions $\innet{t_1},\innet{t_2} \in c$ such that $\pre{\innet{t_1}}, \pre{\innet{t_2}} \subseteq \marking$.
  As shown in the construction, there would exist a transition $t_1 \in \transitions^\sigma$ enabled in $\cut$ such that $\lambda(t_1) = \innet{t_1}$ and $s_\cut \in \pre{t_1}$.
  Symmetrically, there is a $t_2$ that is also enabled in $\cut$, has $s_\cut$ in its precondition and that is distinct from $t_1$ because $\lambda(t_2) = \innet{t_2} \neq \innet{t_1} = \lambda(t_1)$.
  Then, $\beta_\sigma$ would not be deterministic in $s_\cut$, contradiction.
  If the label of $n$ were a bad vertex of type (X2b), some enabled system transition $\innet{t}$ would have multiple preimages enabled in $\cut$ with different preimages of $\innet{p}$ in its precondition, which also implies nondeterminism.
  Finally, if $n$ were a bad vertex of type (X3), some $\innet{t} \in \transitions$ would be enabled in $\marking$, but all enabled transitions would involve the system and would not be in $c$.
  Thus, no purely environmental transitions would be enabled in $\cut$ either.
  Neither are there any system transitions enabled in $\cut$ because any transition $t$ in $\post{s_\cut}$ would mean that $\lambda(t) \in c$.
  If one such $t$ were enabled in $\cut$, $\lambda(t)$ would be enabled in $\marking$, which contradicts the assumption made above.
  Since there is no enabled transition in $\cut$ but $\innet{t}$ is enabled in $\marking$, $\beta_\sigma$ does not avoid deadlocks, contradiction.

  Thus, $n$ cannot be a bad vertex and $T_\sigma$ is a winning strategy.
\end{proof}

\subsection{Theorem~\ref{thm:graphgraphprime}: Translating strategies from $\mathit{Graph}(\game)$ to $\mathit{Graph'}(\game)$}
\label{sec:graphgraphprimeproof}
\graphgraphprime*
\begin{proof}
  Let $\Pi : \mathcal{V}' \to \mathcal{V}$ be the tuple projection onto the first two components.
  $\Pi$ is surjective, which allows us to think about $\mathcal{V}'$ as the vertices $\mathcal{V}$ with additional information.
  Furthermore, $\Pi(\mathcal{I}') = \mathcal{I}$, a vertex $v' \in \mathcal{V}'$ is a bad vertex iff $\Pi(v') \in \mathcal{X}$ and, whenever $v_1' \rightarrow v_2' \in \mathcal{E}'$ for $v_1',v_2' \in \mathcal{V}'$, $\Pi(v_1') \rightarrow \Pi(v_2') \in \mathcal{E}$ also holds.

  Denote a winning strategy for $\mathit{Graph}(\game)$ by $T_\sigma$.
  We inductively build a strategy $T_\sigma'$ for $\mathit{Graph'}(\game)$, choosing the successors of nodes labeled with vertices in $\mathcal{V}'_0$ as follows:
  Consider a node in $T_\sigma'$ such that the path leading to it from the root is labeled $v_0' \rightarrow v_1' \dots v_r'$ and that $(\marking, \top, \{\innet{s_\marking}\}) = v_r' \in \mathcal{V}_0'$.
  Assume that all previous choices by Player~0 have been constructed as we currently do.
  Then, $\Pi(v_0') \rightarrow \Pi(v_1') \dots \Pi(v_r')$ is the prefix of a play allowed by $T_\sigma$ due to the observations made above, and $T_\sigma$ gives us a successor edge $\Pi(v_r') \rightarrow (\marking, c)$.
  Correspondingly, $T_\sigma'$ chooses the outgoing edge $v_r' \rightarrow (\marking, c, \{\innet{s_\marking}\})$ of $v_r'$.

  Assume that $T_\sigma'$ would contain a node labeled by a bad vertex.
  Projecting the path from the root to this node under $\Pi$ gives us a play prefix allowed by $T_\sigma$, which also ends in a vertex in $\mathcal{X}$.
  Therefore, $T_\sigma$ would not be a winning strategy, which is a contradiction.
\end{proof}

\subsection{Lemma~\ref{lem:associatedlkc}: Responsibility multiset construction prevents illegal information flow}
\label{sec:associatedlkcproof}

\begin{lemma}
\label{lem:responsibility}
  Let a node in $T_\sigma$ be labeled $(\marking, c/\top, R)$ and have an associated cut $\cut$.
  Then, there is a subset $S \subseteq \cut$ such that $\lambda[S] = R$ and
  \begin{equation}
    \past{\cut} = \cut \cup \bigcup_{s \in S}{\past{s}}
  \label{eqresp}
  \end{equation}
\end{lemma}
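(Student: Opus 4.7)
Proof by induction on the BFS construction of $\beta_\sigma$ through $T_\sigma$, building a witness set $S$ alongside each associated cut.

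Base case (root): $\cut = \initialmarking^\sigma$ and $R = \{\innet{s_\initialmarking}\}$. Take $S := \{s\}$ where $s$ is the unique system preimage in $\initialmarking^\sigma$. Then $\lambda[S] = R$ by construction, and because every initial place has empty precondition in $\net^\sigma$ we have $\past{\initialmarking^\sigma} = \initialmarking^\sigma$, making the past equation trivial.

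Inductive step: Consider an associated cut $\cut'$ of a child node, arising from an associated cut $\cut$ of the parent via an edge of type (E'1), (E'2), or (E'3). The (E'1) case is trivial since neither $\cut$ nor $R$ changes. For (E'3), which fires a system transition $\innet{t}$ under the guard $R \subseteq \pre{\innet{t}}$, take $S' := \{s_{\cut'}\}$, which is the only option since $R' = \{\innet{s_{\marking'}}\}$. The guard lets us choose the parent's witness entirely inside the preimage $B := \pre{t}$ used in firing: since $\lambda[S] = R \subseteq \pre{\innet{t}} = \lambda[B]$, there are enough $\innet{p}$-preimages in $B$ for each $\innet{p}$ to accommodate $S$, so a suitable $S \subseteq B$ exists. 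Then every $s \in S$ satisfies $s < t < s_{\cut'}$, giving $\bigcup_{s \in S} \past{s} \subseteq \past{s_{\cut'}}$; combined with $\cut \setminus B \subseteq \cut'$ and the parent's past equation, this delivers the past equation for $\cut'$.

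The main difficulty lies in case (E'2), which fires a purely environmental $\innet{t}$ with witness $\innet{o} \in \post{\innet{t}}$. To handle it, I would strengthen the inductive hypothesis: for any $B \subseteq \cut$ with $\lambda[B] = \pre{\innet{t}}$, the parent's witness $S$ can additionally be chosen with the maximal-overlap property $\lambda[S \cap B](\innet{p}) = \min(R(\innet{p}), \pre{\innet{t}}(\innet{p}))$ at every place $\innet{p}$. Given such $S$, taking $S' := (S \setminus B) \cup \{o\}$ with $o \in \post{t}$ and $\lambda(o) = \innet{o}$, a routine case split on whether $R(\innet{p}) \geq \pre{\innet{t}}(\innet{p})$ verifies $\lambda[S'] = R'$, and the past equation follows from $\past{\cut'} = \past{\cut} \cup \post{t} \cup \{t\}$ together with $\past{\pre{t}} \subseteq \past{o}$. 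The delicate point is preserving the strengthened hypothesis at $\cut'$ for arbitrary $B' \subseteq \cut'$; this reduces to applying the parent's strengthened hypothesis with the combined set $B \cup (B' \cap (\cut \setminus B))$ and then a counting argument on preimages of each place in $\cut'$.
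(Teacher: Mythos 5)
Your proof has the same skeleton as the paper's: induction over the position of the node in $T_\sigma$, the candidate witness $S' = (S \setminus \pre{t}) \cup \{o\}$ for steps of type (E'2)/(E'3), and a case analysis to transport the past equation from $\cut$ to $\cut'$. You have also correctly isolated the one step that is genuinely delicate and that the paper's own proof dismisses with ``it clearly holds that $\lambda[S']=R'$'': this identity needs $\lambda[S\cap B](\innet{p}) = \min\left(R(\innet{p}), \pre{\innet{t}}(\innet{p})\right)$ for every $\innet{p}$, which is not automatic once $\lambda[\cut]$ can put several tokens on the same place of the underlying net.

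However, the repair you propose --- strengthening the induction hypothesis so that a maximal-overlap witness exists for \emph{every} preimage set $B$ --- cannot work, because that strengthened statement is false. Two places of a cut with the same label generally have different causal pasts, so exchanging an element of $S$ for a same-labelled element of $B$ destroys Eq.~\ref{eqresp}; the witness is essentially pinned down by which places actually have the already-fired transitions in their past, and there is no freedom to realign it with an arbitrary $B$. Concretely, take $\initialmarking = \{\innet{s},\innet{q},\innet{p}\}$ with purely environmental transitions $\innet{t_1}$ from $\innet{q}$ to $\innet{p}$ and $\innet{t_2}$ from $\innet{p}$ to $\innet{o}$. After the (E'2) step for $\innet{t_1}$, the associated cut is $\{s,p',o_1\}$ with $\lambda(p')=\lambda(o_1)=\innet{p}$ and $R=\{\innet{s},\innet{p}\}$; the only admissible witness is $\{s,o_1\}$, since only $\past{o_1}$ contains the fired preimage of $\innet{t_1}$, so no witness meets your overlap requirement for $B=\{p'\}$. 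The same objection hits your (E'3) argument, where ``a suitable $S\subseteq B$ exists'' is justified only by counting preimages per label, not by checking that the relocated set still satisfies the past equation. In fact, continuing the example by firing the preimage of $\innet{t_2}$ from $B=\{p'\}$ yields an associated cut $\{s,o_1,o_2\}$ with $R'=\{\innet{s},\innet{o}\}$ in which \emph{no} $S'$ with $\lambda[S']=R'$ satisfies Eq.~\ref{eqresp} (the preimage of $\innet{t_1}$ lies in $\past{o_1}$ but in neither $\past{s}$ nor $\past{o_2}$), so the obstacle sits in the statement itself and not merely in your argument; it disappears only when every relevant label occurs at most once among the candidate preimages (e.g., for a $1$-bounded underlying net, where $B$ is unique and the overlap is automatically maximal), or if the construction restricts which $B$ may be used. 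Finally, your preservation step invokes the strengthened hypothesis on $B\cup(B'\cap(\cut\setminus B))$, whose label multiset is not $\pre{\innet{t}}$, so it is not even an instance of the hypothesis as you stated it.
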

\begin{proof}
  The inclusion ``$\supseteq$'' is clear for any $S \subseteq \cut$.
  We will thus only prove ``$\subseteq$'', by induction over the position of the node in $T_\sigma$.
  The claim clearly holds for the root as it is only mapped to the initial cut, for which $\past{\cut} = \cut$ holds.
  Assume now that the claim has already been established for the parent of our current node.
  We make a case distinction based on the type of the edge leading from the label of the parent to the current label:

  If this edge is of type (E'1), the proposition directly follows from the induction hypothesis because $\marking$, $R = \{\innet{s_\marking}\}$ and $\cut$ all stay the same.

  Else, the edge is of type (E'2) or (E'3).
  Let $\cut$ denote an associated cut of the parent from which $\cut'$, the cut of the current node, was formed.
  Moreover, let $S$ be the subset of $\cut$ given by the induction hypothesis and let the two vertices have the responsibility multisets $R$ and $R'$, respectively.
  Let $t$ be the transition added or used in this step of the construction, i.e., $\cut \markingstep{t} \cut'$.
  Choose $o \in \post{t}$ such that $\{\lambda(o)\} = R' - (R - \pre{\lambda(t)})$.
  We can always choose such an $o$, either as the preimage of $\innet{o}$ for an edge of type (E'2) or as the preimage of $\innet{s_{\marking'}}$ for an edge of type (E'3).
  Then, we will show that the claim holds for $S' \coloneqq S \setminus \pre{t} \cup \{o\}$, for which it clearly holds that $S' \subseteq \cut'$ and $\lambda[S'] = R'$.

  Let $n \in \past{\cut}$ be an arbitrary transition or place.
  It suffices to show that $n$ is an element of the expression on the right hand side of Eq.~\ref{eqresp}.
  By the construction of the strategy, $n$ must either be equal to $t$ or $n \in \post{t}$ or $n \in \past{\cut}$.
  If $n = t$ and more generally if $n \leq t$, $n$ is an element of the right hand side since it lies in $\past{o}$.
  Else, if $n \in \post{t}$, it lies in $\cut'$ and the claim holds.
    Thus, we can now assume $n \in \past{\cut}$ and, by the induction hypothesis, it holds that (\ref{lemrespc1}) $n \in \cut$ or (\ref{lemrespc2}) there is a place $s \in S$ such that $n \leq s$.
  \begin{enumerate}
    \item \label{lemrespc1} In the first case, the claim directly holds if $n \in \cut'$.
      Otherwise, $n \in \cut \setminus \cut' \subseteq \pre{t}$, so $n \leq t$ and the proposition follows as described above.
    \item \label{lemrespc2} In the second case, if $s \in \pre{t}$, we again have $n \leq t$ and are done.
      Else, $s \in S'$.
      \qedhere 
  \end{enumerate}%
\end{proof}

\begin{lemma}
Let $\net$ be an occurrence net, and $x \in \places$.
Then, $\mathit{LKC}(x)$ is a cut.
\end{lemma}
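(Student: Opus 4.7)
The plan is to realize $\mathit{LKC}(x)$ as a reachable marking, so that Lemma~\ref{lem:markingscuts} immediately yields that it is a cut. Set $T \coloneqq \past{x} \cap \transitions$, which is finite by Lemma~\ref{lem:finitepast}. I would pick an ordering $t_1, t_2, \dots, t_n$ of $T$ that linearly extends the restriction of $<$ to $T$, and fire this sequence from $\initialmarking$, obtaining markings $\initialmarking = \marking_0 \markingstep{t_1} \marking_1 \markingstep{t_2} \cdots \markingstep{t_n} \marking_n$. The goal is to show that each $t_i$ is enabled in $\marking_{i-1}$, and that $\marking_n$ equals $\mathit{LKC}(x)$.

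For enabledness, fix $i$ and $p \in \pre{t_i}$. If $\pre{p} = \emptyset$ then $p \in \initialmarking$ and was present from the start; otherwise the unique producing transition $t'$ of $p$ satisfies $t' < p < t_i < x$, so $t' \in T$, and by the topological order $t'$ has already been fired at some earlier step, adding $p$ to the marking. It remains to argue that $p$ has not been consumed in between. If some $t'' \in \{t_1, \dots, t_{i-1}\}$ with $t'' \neq t_i$ had consumed $p$, then $t_i, t''$ are two distinct transitions in $\post{p}$, hence $t_i \conflict t''$; the standard upward-preservation of conflict under $<$, combined with $t_i, t'' \leq x$ via the unique incoming transition $t^*$ of $x$, would force $t^* \conflict t^*$, contradicting the occurrence-net axiom.

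It remains to identify $\marking_n$. The same self-conflict argument shows that at most one transition in $\post{p}$ lies in $T$ for each place $p$, so reading off the net-effect of firing $T$ yields $\marking_n = \{p \in \places \mid \pre{p} \subseteq T \text{ and } \post{p} \cap T = \emptyset\}$. The condition $\pre{p} \subseteq T$ is literally $\forall t \in \pre{p}.\,t < x$. The condition $\post{p} \cap T = \emptyset$ is equivalent to $p \nless x$: if $p < x$, the first transition on a witnessing path lies strictly below $x$ and hence in $T$; conversely, any $t \in \post{p} \cap T$ witnesses $p < t < x$. Therefore $\marking_n = \mathit{LKC}(x)$.

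The main technical obstacle is the self-conflict argument, used both to validate the firing sequence and to cap the number of transitions in $T$ touching any single place. It rests on the upward-preservation of conflict along $<$ together with the occurrence-net axioms (unique incoming transition, no transition in self-conflict). Once this observation is in hand, the remainder is a direct computation of the net effect of firing $T$.
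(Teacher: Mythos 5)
Your proof is correct, but it takes a genuinely different route from the paper's. The paper proves the statement directly from the definition of a cut: it first shows that any two distinct places in $\mathit{LKC}(x)$ are neither causally related nor in conflict (pushing any hypothetical conflict up to the unique incoming transition of $x$ to derive a self-conflict), and then establishes maximality by showing that every place outside $\mathit{LKC}(x)$ is causally related or in conflict with some element of it, via a well-founded descent along $\flow^{-1}$. You instead realize $\mathit{LKC}(x)$ as the marking reached by firing $\past{x} \cap \transitions$ in a topological order and invoke Lemma~\ref{lem:markingscuts}; this is exactly the ``mapping cut'' characterization that the paper mentions informally (citing \cite{esparza94}) but never proves. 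Your key technical step --- that two distinct transitions in $\post{p} \cap \past{x}$ would force the unique incoming transition of $x$ into self-conflict --- is sound and is the same conflict-propagation device the paper uses, and the degenerate case $\pre{x} = \emptyset$ (where $T = \emptyset$ and $\mathit{LKC}(x) = \initialmarking$) is harmless. Your approach buys something extra: it shows $\mathit{LKC}(x)$ is not merely a cut but a \emph{reachable marking}, a fact the paper only recovers later by combining this lemma with Lemma~\ref{lem:cutsreachable}. The paper's direct verification, on the other hand, needs no firing-sequence machinery and stays entirely at the level of the order-theoretic definitions.
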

\begin{proof}
We show that all places in $\cut \coloneqq \mathit{LKC}(x)$ are pairwise concurrent and that $\cut$ is maximal with respect to this property.

Let $p$ and $p'$ be two distinct places in $\cut$.
From the definition of $\cut$, it follows that $p$ is the only element of $\past{p}$ with $p \nless x$.
Because $p' \nless x$ by assumption, $p'$ cannot lie in $p$'s causal past.
By symmetry, it follows that $p$ and $p'$ are not causally related.
Furthermore, $p$ and $p'$ are not in conflict:
If $\pre{p}=\emptyset$ or $\pre{p'}=\emptyset$, this is trivial because one of the places cannot be reached from any transition.
Otherwise, the single causal predecessors $t$ of $p$ and $t'$ of $p'$ would both be in $\past{x}$ and thus in $\past{t_x}$ for the predecessor $t_x$ of $x$.
If $p$ and $p'$ were in conflict, so would be $t$ and $t'$ and thus $t_x$ would be in self-conflict.
This is not possible because $\net$ is an occurrence net.
We conclude that all places in $\cut$ are pairwise concurrent.

To show maximality, consider any place $y \notin \cut$.
We will demonstrate that there is an element of $\cut$ that is not concurrent to $y$.
Why might $y$ not be in $\cut$?
It must hold that $y < x$ or that there is a single predecessor $t \in \mathit{pre}(y)$ such that $t \nless x$.
In the first case, $y$ is causally related to $x \in \cut$, and we are done.
In the second case, since we require nonempty preconditions, there must be a $y' \in \pre{t}$.
\begin{itemize}
  \item If $y' \in \cut$, we have shown that $y$ is causally related to a place in the cut and we are done.
  \item Else, it might be the case that $y' < x$.
    As $t \nless x$, there must be another transition $t' \in \post{y'}$ with $t' < x$.
    Then, $x$ and $y$ can be reached from $y'$ via two different outgoing transitions.
    Therefore, $x \in \cut$ and $y$ are in conflict and $y$ is not concurrent to all places in the cut.
  \item Finally, we might again have the situation that $y' \notin \cut$ because there is an incoming transition that is not in the causal past of $x$.
    Then, we repeat this step for an arbitrary place in the precondition of this transition.
    As we walk $\net$ in inverse flow direction, we build a sequence $y, y', y'', \dots$ of places.
    This process must terminate at some point because the inverse flow relation $\flow^{-1}$ is well-founded.
    Eventually, we will show either that an element in the past of $y$ is already in $\cut$ or that $x$ and $y$ are in conflict.
\end{itemize}
Thus, every place that does not lie in $\cut$ is not concurrent to all elements of the cut.
It follows that $\cut$ is a cut.
\end{proof}

\begin{lemma}
  \label{lem:lkcpast}
  Let $\cut$ be a cut in an occurrence net $\net$ and $c \in \cut$.
  Then, $\mathit{LKC}(c) \subseteq \past{\cut}$.
\end{lemma}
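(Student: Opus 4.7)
The plan is to take an arbitrary $p \in \mathit{LKC}(c)$ and show $p \in \past{\cut}$ by exploiting the maximality of the cut $\cut$. Since $\cut$ is a maximal set of pairwise concurrent places, either $p \in \cut$ (in which case $p \in \past{\cut}$ trivially), or there is some $c' \in \cut$ that is not concurrent with $p$, i.e., $p < c'$, or $c' < p$, or $p \conflict c'$. The case $p < c'$ immediately gives $p \in \past{c'} \subseteq \past{\cut}$. The remaining two cases should both lead to contradictions with the fact that $c, c' \in \cut$ are themselves concurrent.

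For the case $c' < p$: since $c'$ is a place and $p$ is a place with $c' \neq p$, any flow-path from $c'$ to $p$ must pass through the unique transition $t_0 \in \pre{p}$ (if $\pre{p}$ were empty, then $p$ would have no causal predecessors at all). Thus $c' \leq t_0$, and since $p \in \mathit{LKC}(c)$ gives $t_0 < c$, we obtain $c' < c$, contradicting the concurrency of $c'$ and $c$ in $\cut$.

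For the case $p \conflict c'$: by definition of conflict, fix a place $q$ distinct from $p$ and $c'$ and two distinct outgoing transitions $t_p, t_c \in \post{q}$ with $t_p \leq p$ and $t_c \leq c'$. As in the previous case, $t_p$ must lie in $\past{t_0}$ or be equal to $t_0$ (where $\{t_0\} = \pre{p}$), so $t_p \leq t_0 < c$; and $t_c < c'$ directly. Since $t_p \neq t_c$ both belong to $\post{q}$, the place $q$ witnesses $c \conflict c'$ via the two outgoing transitions $t_p$ and $t_c$, contradicting the concurrency of $c$ and $c'$ in $\cut$.

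The main subtlety I anticipate is the first contradiction case: one has to argue carefully that every causal predecessor of $p$ other than $p$ itself must factor through the unique transition in $\pre{p}$, using the fact that in an occurrence net each place has at most one incoming transition (and that $\mathit{LKC}(c)$ actually places constraints on $\pre{p}$, not on all of $\past{p}$). Once this observation is made, both contradiction arguments reduce to lifting the relation from $t_p, t_c$ (transitions close to $p, c'$) up to $c, c'$ themselves, which is a straightforward application of transitivity of $<$ and of the definition of conflict.
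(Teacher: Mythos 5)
Your proof is correct and follows essentially the same route as the paper's: use maximality of the cut to find a non-concurrent $c' \in \cut$, dispose of $p \leq c'$ directly, and refute $c' < p$ and $p \conflict c'$ by lifting the relation through the unique incoming transition $t_0 \in \pre{p}$ (which satisfies $t_0 < c$ by definition of $\mathit{LKC}$) to derive $c' < c$ or $c \conflict c'$, contradicting concurrency within $\cut$. The "subtlety" you flag about factoring causal predecessors through $\pre{p}$ is exactly the step the paper relies on as well.
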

\begin{proof}
  For every $p \in \mathit{LKC}(c)$, we know that $p \nless c$ and that, if $p$ has an incoming transition $t$, $t < c$ holds.
  If $p$ were concurrent to all distinct elements of $\cut$, $p \in \cut$ by the maximality of cuts, and we are done.
  Otherwise, there must be $c' \in \cut$ such that $p$ and $c'$ are in conflict or causally related.
  If $p \conflict c'$, there is another place $x$ such that $p$ and $c'$ can be reached via different outgoing transitions from $x$.
  Furthermore, the single incoming transition $t$ of $p$ exists.
  Since $t < c$, it also holds that $c \conflict c'$, which contradicts the assumption that $c$ and $c'$ are elements of the cut $\cut$ and therefore concurrent.
  If $c' < p$, again by the assumption $t < c$, $c' < c$ would also hold, which would again contradict the fact that $c$ and $c'$ are concurrent.
  Thus, $p \leq c'$ holds, which shows the inclusion.
\end{proof}

\associatedlkc*
\begin{proof}
  We set $s \coloneqq s_\cut$.
  First, we show that $\cut \subseteq \text{LKC}(s)$.
  Because all elements $c$ of $\cut$ are concurrent, $c \nless s$.
  Let $t$ be the incoming transition of $c$ if it exists.
  By Lemma~\ref{lem:responsibility} and since the responsibility multiset of the node is $\{\lambda(s)\}$, $\past{\cut} = \cut \cup \past{s}$ holds.
  Because $t$ is a transition in $\past{\cut}$, $t < s$ follows.
  Thus, $c \in \text{LKC}(s)$ and the inclusion holds.  
  Due to the maximality of the cut $\cut$, the two cuts are equal.
\end{proof}

\subsection{Responsibility multiset construction not overly restrictive}
\label{sec:selectproof}

\begin{figure}[htp]
  \centering
  \begin{subfigure}[b]{.38\textwidth}
    \centering
    \includegraphics{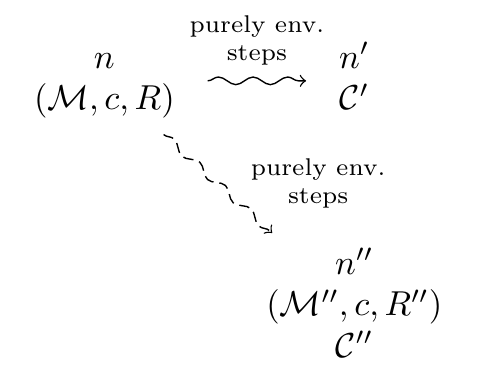}
    \caption{General situation.}
    \label{figlemselect}
  \end{subfigure} \quad
  \begin{subfigure}[b]{.58\textwidth}
    \centering
    \includegraphics{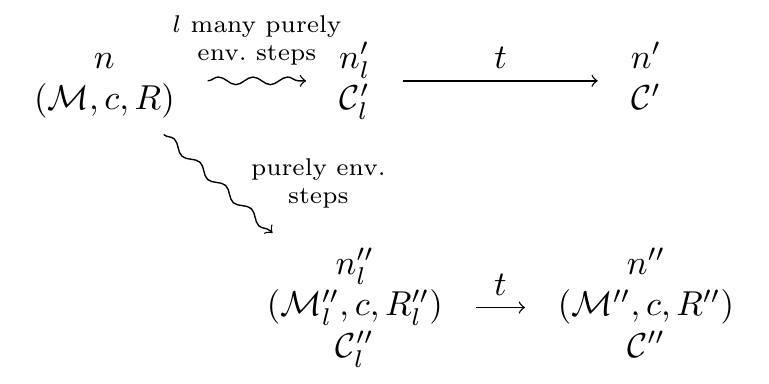}
    \caption{Induction step.}
    \label{figlemselectstep}
  \end{subfigure}
  \caption{Situation in Lemma~\ref{lemselect}. ``env.'' is short for ``environmental''.}
\end{figure}

\begin{lemma}
  \label{lemselect}
  Let $n$ be a node in $T_\sigma$ labeled with $(\marking, c, R)$ and let $n'$ be another node that is reachable from $n$ using only steps of type (E'2), i.e., steps corresponding to purely environmental transitions. Let $\cut'$ be a cut associated with $n'$ and let $S \subseteq \cut'$.
  Then, we can also reach a node $n''$, mapped to a cut $\cut''$ and labeled with $(\marking'', c, R'')$, from $n$ using only steps of type (E'2) such that $S \subseteq \cut''$ and $R'' \subseteq R + \lambda[S]$.
  
  (See Fig.~\ref{figlemselect} for a diagram of the situation.)
\end{lemma}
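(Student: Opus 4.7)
I would prove the lemma by induction on the length of the (E'2)-path from $n$ to $n'$. In the base case $n = n'$, I take $n'' := n$ with $\cut'' := \cut'$; then $R'' = R \subseteq R + \lambda[S]$ holds trivially.

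For the inductive step, let $m'$ be the immediate predecessor of $n'$ on the path, and let $\innet{t}$ be the purely environmental transition fired on the edge $m' \to n'$. By the construction of $\beta_\sigma$, there exist a cut $\cut_{m'}$ associated with $m'$, a subset $B_0 \subseteq \cut_{m'}$ with $\lambda[B_0] = \pre{\innet{t}}$, and a preimage $t$ of $\innet{t}$ with $\pre{t} = B_0$ such that $\cut' = (\cut_{m'} \setminus B_0) \cup \post{t}$. I decompose $S = S_1 \uplus S_2$ with $S_2 := S \cap \post{t}$ and $S_1 := S \setminus \post{t}$. Since any element of $S_1$ lies in $\cut' \setminus \post{t} \subseteq \cut_{m'} \setminus B_0$, we immediately obtain $S_1 \cap B_0 = \emptyset$ and $S_1 \subseteq \cut_{m'}$.

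If $S_2 = \emptyset$, applying the induction hypothesis at $m'$ with the subset $S = S_1$ of $\cut_{m'}$ directly yields the desired $n''$. Otherwise, I apply the induction hypothesis at $m'$ with $S' := S_1 \cup B_0 \subseteq \cut_{m'}$ to obtain a node $\tilde{n}$ with an associated cut $\tilde{\cut} \supseteq S'$ and label $(\tilde{\marking}, c, \tilde{R})$ satisfying $\tilde{R} \subseteq R + \lambda[S']$. Because $\tilde{n}$ lies in $\mathcal{V}'_1$ (its commitment $c$ is not $\top$), $T_\sigma$ contains all of its successors, so I may extend along the (E'2)-edge that fires $\innet{t}$ with witness $\innet{o} := \lambda(o)$ for some $o \in S_2$. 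Using the same $B_0 \subseteq \tilde{\cut}$ in the construction reuses the already existing transition $t$, so the resulting node $n''$ has an associated cut $\cut'' = (\tilde{\cut} \setminus B_0) \cup \post{t} \supseteq S_1 \cup S_2 = S$, as sketched in Figure~\ref{figlemselectstep}.

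The main obstacle is the multiset bookkeeping for $R''$. Using $S_1 \cap B_0 = \emptyset$ (so that $\lambda[S'] = \lambda[S_1] + \pre{\innet{t}}$), the identity $(X + Y) - Y = X$ on multisets, and the monotonicity of $(\cdot) - C$ and $(\cdot) + D$, I would compute
\begin{equation*}
R'' \;=\; \tilde{R} - \pre{\innet{t}} + \{\innet{o}\} \;\subseteq\; \bigl(R + \lambda[S_1] + \pre{\innet{t}}\bigr) - \pre{\innet{t}} + \{\innet{o}\} \;=\; R + \lambda[S_1] + \{\innet{o}\} \;\subseteq\; R + \lambda[S],
\end{equation*}
where the final inclusion uses $\{\innet{o}\} \subseteq \lambda[S_2]$ together with $\lambda[S] = \lambda[S_1] + \lambda[S_2]$ (from $S_1 \cap S_2 = \emptyset$). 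This closes the induction.
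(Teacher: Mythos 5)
Your proof is correct and follows essentially the same route as the paper's: induction on the length of the (E'2)-path, passing $S \setminus \post{t} \cup \pre{t}$ to the induction hypothesis at the predecessor node, refiring the (already existing) preimage $t$ from the resulting cut, and establishing $R'' \subseteq R + \lambda[S]$ by the same multiset computation. The only cosmetic difference is that you obtain $S_1 \cap \pre{t} = \emptyset$ from the set inclusion $\cut' \setminus \post{t} \subseteq \cut_{m'} \setminus B_0$, whereas the paper argues via causality (a $b' \in S \cap \pre{t}$ would give $b' < t < b$, contradicting concurrency in $\cut'$); both justifications are valid.
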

\begin{proof}
  By induction over the length $l$ of the path from $n$ to $n'$. If $l=0$, the claim trivially holds by choosing $n'' = n' = n$.
  If $l>0$, let $n_l'$ be the predecessor of $n'$ on the path from $n$ to $n'$.
  Call the associated cut that $\cut'$ was built from $\cut_l'$ and let $t \in \transitions^\sigma$ be the transition such that $\cut_l' \markingstep{t} \cut'$.
  This situation is shown in Fig.~\ref{figlemselectstep}.
  If $S \subseteq \cut_l'$, the conclusion directly follows from the induction hypothesis.
  Else, choose $b \in S \setminus \cut_l' \subseteq \cut' \setminus \cut_l' \subseteq \post{t}$.
  $S \setminus \post{t} \cup \pre{t}$ is a subset of $\cut_l'$ and therefore, by the induction hypothesis, a node $n_l''$ is reachable from $n$ with the label $(\marking_l'',c,R_l'')$ and an associated cut $\cut_l''$ such that $S \setminus \post{t} \cup \pre{t} \subseteq \cut_l''$ and $R_l'' \subseteq R + \lambda[S \setminus \post{t} \cup \pre{t}] \subseteq R + \lambda[S \setminus \post{t}] + \pre{\lambda(t)}$.
  We even have $S \setminus \post{t} \subseteq \cut_l'' \setminus \pre{t}$ because if there was a $b' \in S \cap \pre{t}$, $b' < t < b$ would hold, which contradicts the fact that $b$ and $b'$ are both members of the cut $\cut'$.
  Since the precondition of $t$ lies in $\cut_l''$, $n_l''$ must have a successor $n''$ with the label $(\marking'', c, R'')$ and the cut $\cut''$ such that $\cut_l'' \markingstep{t} \cut''$ and $R'' = R_l'' - \pre{\lambda(t)} + \{\lambda(b)\}$.
  Then, $S \subseteq \cut_l'' \setminus \pre{t} \cup \post{t} = \cut''$ and $R'' \subseteq R + \lambda[S \setminus \post{t}] + \pre{\lambda(t)} - \pre{\lambda(t)} + \{\lambda(b)\} \subseteq R + \lambda[S]$.
\end{proof}

\subsection{Theorem~\ref{thm:graphprimepetri}: Translating strategies from $\mathit{Graph'}(\game)$ to $\game$}
\label{sec:graphprimepetriproof}

\begin{lemma}[justified refusal for purely environmental transitions]
  Let $S \subseteq \envplaces^\sigma$ be a set of pairwise concurrent environment places such that $\lambda[S]=\pre{\innet{t}}$ for some $\innet{t} \in \transitions$.
  Then, there exists $t \in \transitions^\sigma$ such that $\pre{t}=S$ and $\lambda(t)=\innet{t}$.
  \label{lem:justenv}
\end{lemma}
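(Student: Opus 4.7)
The strategy is to locate a node $n$ in $T_\sigma$ whose associated cut contains $S$; the construction rule at the outgoing (E'2) edge of $n$ firing $\innet{t}$ will then yield the required transition. Observe first that $\innet{t}$ is purely environmental, since $\lambda[S] = \pre{\innet{t}} \subseteq \envplaces$, and that $S$ is finite because $\pre{\innet{t}}$ is.

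I plan to establish a coverage claim: every finite pairwise concurrent subset $T \subseteq \places^\sigma$ of the strategy net is contained in an associated cut of some node of $T_\sigma$. The proof goes by induction on the finite quantity $|\past{T} \cap \transitions^\sigma|$. In the base case $|\past{T} \cap \transitions^\sigma| = 0$ we have $T \subseteq \initialmarking^\sigma$, the cut associated with the root. For the inductive step, pick a $\leq$-maximal $t^* \in \past{T} \cap \transitions^\sigma$; maximality combined with the pairwise concurrency of $T$ yields $T \cap \post{t^*} \neq \emptyset$ and $(T \setminus \post{t^*}) \cap \pre{t^*} = \emptyset$, so $T^* := (T \setminus \post{t^*}) \cup \pre{t^*}$ is again finite and pairwise concurrent, and $t^* \notin \past{T^*}$ makes the induction measure strictly smaller. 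The inductive hypothesis supplies an associated cut $\cut^* \supseteq T^*$ of some node $n^*$; after traversing the unique (E'1) child if $n^*$ is a Player~0 node, assume $n^*$ is a Player~1 node. I then follow the outgoing child edge firing $\lambda(t^*)$ (an (E'2) edge if $\lambda(t^*)$ is purely environmental, an (E'3) edge otherwise). The construction at this child admits $B := \pre{t^*} \subseteq \cut^*$ as a valid preimage, so it reuses $t^*$, and the resulting associated cut $\cut^* - \pre{t^*} + \post{t^*}$ contains $T$.

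With the coverage claim in hand, apply it to $S$ to obtain a node $n$ with $S \subseteq \cut$ for some associated cut $\cut$, passing through an (E'1) edge if necessary so that $n$ is a Player~1 node. Since $\innet{t}$ is purely environmental and $\pre{\innet{t}} \subseteq \lambda[\cut]$, $T_\sigma$ contains a child $n'$ of $n$ reached by the (E'2) edge firing $\innet{t}$. When the BFS processes $n'$, the construction ranges over all associated cuts of $n$ and all subsets $B \subseteq \cut$ with $\lambda[B] = \pre{\innet{t}}$; in particular, $B = S$ is admissible, so the procedure either reuses an existing transition with precondition $S$ and label $\innet{t}$, or freshly creates one, yielding the desired $t \in \transitions^\sigma$.

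The principal obstacle lives in the inductive step of the coverage claim when $\lambda(t^*)$ is a system transition: firing it via an (E'3) edge requires $\lambda(t^*)$ to appear in Player~0's committed set at $n^*$, a choice fixed by the given strategy $T_\sigma$ and not freely adjustable. I expect to resolve this by exploiting that $t^* \in \transitions^\sigma$ was originally created at some (E'3) edge during the BFS traversal whose parent had $\pre{t^*}$ in an associated cut and $\lambda(t^*)$ in its commitment, and then using Lemma~\ref{lemselect} to reroute through (E'2) moves that do not disturb the commitment in order to reconcile that node's associated cut with $\cut^*$.
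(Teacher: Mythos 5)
Your endgame is sound and coincides with the paper's: once you have a node of $T_\sigma$ with an associated cut $\cut \supseteq S$, the transition $\innet{t}$ is purely environmental and enabled in $\lambda[\cut]$, so the type-(E'2) successor exists unconditionally and the construction ranges over all $B \subseteq \cut$ with $\lambda[B] = \pre{\innet{t}}$, in particular $B = S$. The gap is entirely in your coverage claim, and the obstacle you flag is not one you can expect to tie up afterwards. The case where $\lambda(t^*)$ is a system transition is unavoidable (with a single system token, $\past{S}$ of essentially any $S$ contains system transitions), and your induction fires $t^*$ \emph{last}, after the transitions producing $T \setminus \post{t^*}$. That is exactly the order the responsibility multiset is designed to forbid: an (E'3) edge requires $R \subseteq \pre{\lambda(t^*)}$, yet every (E'2) step producing a place of $T \setminus \post{t^*}$ concurrent to $t^*$ deposits into $R$ a token of $\post{\innet{t'}}$ for the fired environmental $\innet{t'}$, which in general does not lie in $\pre{\lambda(t^*)}$. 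Lemma~\ref{lemselect} does not close this: instantiated with $S = T^*$ it only bounds $R''$ by $R + \lambda[T^*]$, too weak to yield $R'' \subseteq \pre{\lambda(t^*)}$; instantiated with $S = \pre{t^*}$ it returns a cut that need no longer contain $T \setminus \post{t^*}$. It also presupposes that a cut containing your target set is reachable from the creation node of $t^*$ by (E'2) steps alone, which is a coverage statement of the very kind you are trying to prove.

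The paper avoids the induction altogether by exploiting the single-system-player structure: extend $S$ to a full cut $\cut$ of $\net^\sigma$, which contains the unique system place $s_\cut$. The node at which $s_\cut$ was inserted is a Player-0 node associated with $\mathit{LKC}(s_\cut)$ (Lemma~\ref{lem:associatedlkc}), so every system transition in $\past{\cut}$ has already been fired on the path to that node. Since $\mathit{LKC}(s_\cut) \subseteq \past{\cut}$ (Lemma~\ref{lem:lkcpast}), Lemma~\ref{lem:cutsreachable} gives a firing sequence from $\mathit{LKC}(s_\cut)$ to $\cut$, and every transition in it is purely environmental because $s_\cut$ survives throughout; hence only unconditional (E'2) edges are needed to reach a node associated with $\cut$, after which your final step applies verbatim. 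Replacing your coverage induction by this $\mathit{LKC}$ detour is the way to repair the argument.
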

\begin{proof}
  We add more concurrent places to $S$ to obtain a cut $\cut \supseteq S$.
  $s_\cut$ was added to the Petri game strategy for a node $n$ in the graph game strategy, which is labeled with a $\mathcal{V}_0'$ vertex and must have as one associated cut $\text{LKC}(s_\cut)$ by Lemma~\ref{lem:associatedlkc}.
  Since $\mathit{LKC}(s_\cut) \subseteq \past{\cut}$, there is a sequence of (purely environmental) transitions leading from $\mathit{LKC}(s_\cut)$ to $\cut$, by Lemma~\ref{lem:cutsreachable}.
  From the unique successor of $n$ in $T_\sigma$, a node with an associated cut $\cut$ is reachable using edges of type (E'2) corresponding to these transitions.
  From this node, there must be a successor step of type (E'2) that corresponds to firing $\innet{t}$ from the marking $\lambda[\cut]$.
  If no such transition did previously exist, a suitable $t$ was added when considering this node in the construction of the strategy.
\end{proof}

\begin{lemma}[justified refusal for system transitions]
  Let $s \in \sysplaces^\sigma$, $t \in \post{s}$ and $S \subseteq \places^\sigma$ be a set of pairwise concurrent places such that $s \in S$ and $\lambda[S] = \pre{\lambda(t)}$.
  Then, there exists a transition $t' \in \transitions^\sigma$ such that $\lambda(t')=\lambda(t)$ and $\pre{t'}=S$.
  \label{lem:justsys}
\end{lemma}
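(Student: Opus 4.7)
The plan mirrors Lemma~\ref{lem:justenv} but requires extra care because $S$ contains the unique system place $s$ and $\lambda(t)$ is a system transition, so the witness transition $t'$ must come from an edge of type (E'3). The target is to reach a node in $T_\sigma$ from which the construction of $\beta_\sigma$ processes such an (E'3) edge and is free to pick $B := S$, producing (or reusing) a transition $t'$ with $\pre{t'} = S$ and $\lambda(t') = \lambda(t)$.

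First I would extend $S$ to a cut $\cut$ of $\net^\sigma$ via Lemma~\ref{lem:pcinmarking}. Every reachable marking of $\net^\sigma$ projects under $\lambda$ to a reachable marking of $\net$ (Lemma~\ref{lem:branchingmarkings}), which contains exactly one system place since $\game$ has a single system player. Hence $s = s_\cut$. Let $n_s$ be the node at which $s$ was added to $\beta_\sigma$ (as part of the postcondition of a freshly fired system transition, or at the root if $s \in \initialmarking^\sigma$). By Lemma~\ref{lem:associatedlkc}, $n_s$ is a $\mathcal{V}_0'$ node with $\mathit{LKC}(s)$ among its associated cuts; its unique (E'1) child $n_s'$ inherits this cut, carries responsibility $\{\lambda(s)\}$, and commits to some $c_s \subseteq \post{\lambda(s)}$. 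To see that $\lambda(t) \in c_s$, note that the existing system transition $t$ must have been introduced along an (E'3) edge whose source node has an associated cut containing $\pre{t}$, hence containing $s$; since $s$ is consumed by every (E'3) edge, that source node lies in the subtree of $n_s'$ on a path of (E'2) steps only, which preserves the commitment, so $\lambda(t) \in c_s$.

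Next I would apply Lemma~\ref{lem:lkcpast} to obtain $\mathit{LKC}(s) \subseteq \past{\cut}$ and Lemma~\ref{lem:cutsreachable} to build a sequence of transitions from $\mathit{LKC}(s)$ to $\cut$. All these transitions are purely environmental ($s$ belongs to both cuts and is therefore untouched), so they correspond to (E'2) edges, and following them from $n_s'$ gives a node $n'$ with $\cut$ as an associated cut and commitment $c_s$. Invoking Lemma~\ref{lemselect} with $n_s'$, $n'$, and target subset $S \subseteq \cut$ then yields a node $n''$ reachable from $n_s'$ via (E'2) edges only, with associated cut $\cut'' \supseteq S$, commitment $c_s$, and responsibility $R'' \subseteq \{\lambda(s)\} + \lambda[S]$.

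The hard part is converting this bound into $R'' \subseteq \lambda[S] = \pre{\lambda(t)}$: the slack $\{\lambda(s)\}$ is dangerous because $\lambda(s)$ already has multiplicity one in $\lambda[S]$. The rescue is Lemma~\ref{lem:responsibility}, which provides a witness $S_R \subseteq \cut''$ with $\lambda[S_R] = R''$; since $\cut''$ is a cut containing exactly one system place, $R''(\lambda(s)) \leq 1 = \lambda[S](\lambda(s))$. For components $x \neq \lambda(s)$ the summand $\{\lambda(s)\}$ contributes nothing and the Lemma~\ref{lemselect} bound gives $R''(x) \leq \lambda[S](x)$. Thus $R'' \subseteq \lambda[S]$, and from $n''$ the edge of type (E'3) that fires $\innet{t} := \lambda(t)$ with $B := S$ satisfies every side-condition: $\innet{t} \in c_s$; $\innet{t}$ is a system transition; $\lambda[S] = \pre{\innet{t}}$; $\innet{t}$ is enabled in $\lambda[\cut''] \supseteq \lambda[S]$; and $R'' \subseteq \pre{\innet{t}}$. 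The construction of $\beta_\sigma$ at this edge therefore yields a transition $t' \in \transitions^\sigma$ with $\pre{t'} = S$ and $\lambda(t') = \lambda(t)$, as required.
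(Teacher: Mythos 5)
Your proof follows essentially the same route as the paper's: extend $S$ to a cut, locate the $\mathcal{V}_0'$ node with associated cut $\mathit{LKC}(s)$, walk to $\cut$ via (E'2) edges, apply Lemma~\ref{lemselect} to reach $n''$ with $S \subseteq \cut''$ and $R'' \subseteq \{\lambda(s)\} + \lambda[S]$, tighten this to $R'' \subseteq \lambda[S]$ using uniqueness of the system place, and read off the (E'3) successor. The one step that does not hold as stated is your justification that $\lambda(t) \in c_s$: you claim the (E'3) edge that introduced $t$ lies in the subtree of $n_s'$ (the child of the node where $s$ was \emph{created}), but this is not guaranteed --- the construction reuses existing transitions and places across branches, so $s$ can occur in associated cuts of nodes that are not descendants of $n_s$, and $t$ may have been inserted there. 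The paper avoids this by anchoring at the (E'3) edge that actually introduced $t$ and walking \emph{up} to its last $\mathcal{V}_0'$ ancestor, which by Lemma~\ref{lem:associatedlkc} also has associated cut $\mathit{LKC}(s)$. Your version is repairable by the same memorylessness argument used in the determinism proof (Lemma~\ref{lem:determinism}): every $\mathcal{V}_0'$ node whose associated cut contains $s$ carries the identical label $(\lambda[\mathit{LKC}(s)], \top, \{\lambda(s)\})$, hence chooses the identical commitment, so $c_s$ is the commitment in force wherever $t$ was inserted. With that repair the rest of your argument, including the careful per-component bookkeeping for $R'' \subseteq \lambda[S]$, is correct.
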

\begin{proof}
  Extend $S$ to a cut $\cut$.
  $t$ must have been inserted in a step of type (E'3) labeled $(\marking_t, c, R_t) \rightarrow (\marking_t', \top, \{\innet{s_t'}\})$ such that $\lambda(t) \in c$ and $s$ lies in one of the cuts associated with the first of the two nodes.
  We move towards the root in $T_\sigma$ from this node until we reach a $\mathcal{V}_0'$ node.
  Just like its single successor node $n$, it must have an associated cut $\mathit{LKC}(s)$.
  As above, from $n$, we can reach a node $n'$ with an associated cut $\cut$ using edges of type (E'2).
  By Lemma~\ref{lemselect}, we can reach a node $n''$, labeled $(\marking'', c, R'')$ and with an associated cut $\cut''$, from $n$ using edges of type (E'2) such that $S \subseteq \cut''$ and $R \subseteq \{\lambda(s)\} + \lambda[S]$.
  Since $\lambda[\cut'']$ cannot contain multiple system places and since $\lambda(s) \in \lambda[S]$, it even holds that $R \subseteq \lambda[S] = \pre{\lambda(t)}$.
  Furthermore, $\pre{\lambda(t)} = \lambda[S] \subseteq \lambda[\cut''] = \marking''$.
  Since $\lambda(t) \in c$, $n''$ must have a successor corresponding to firing a transition $\lambda(t)$.
  By the construction of the strategy, a transition $t'$ such that $\lambda(t') = \lambda(t)$ and $\pre{t'} = S$ must exist.
\end{proof}

\begin{lemma}[justified refusal]
  \label{lem:just}
  Let $S \subseteq \places^\sigma$ be a set of pairwise concurrent places and $\innet{t} \in \transitions$ such that $\lambda[S] = \pre{\innet{t}}$.
  Assume that there is no $t \in \transitions^\sigma$ that has $S$ as its precondition and satisfies $\lambda(t) = \innet{t}$.
  Then, there exists $s \in S \cap \sysplaces^\sigma$ such that none of the transitions in its postcondition is a $\lambda$-preimage of $\innet{t}$.
\end{lemma}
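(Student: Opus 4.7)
The plan is to deduce this lemma by a case split on whether $\innet{t}$ is purely environmental or a system transition, invoking Lemmas~\ref{lem:justenv} and \ref{lem:justsys} respectively. In both cases the helper lemma already produces a transition in the strategy whose existence directly contradicts the hypothesis, unless a system place in $S$ actively refuses $\innet{t}$.

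Suppose first that $\innet{t}$ is purely environmental, so $\pre{\innet{t}} \subseteq \envplaces$. Since $\lambda$ preserves the system/environment partition of places, $S \subseteq \envplaces^\sigma$ follows from $\lambda[S] = \pre{\innet{t}}$. Lemma~\ref{lem:justenv} then supplies a transition $t \in \transitions^\sigma$ with $\pre{t} = S$ and $\lambda(t) = \innet{t}$, directly contradicting the lemma's hypothesis. This case therefore cannot arise, and the conclusion holds vacuously.

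Suppose instead that $\innet{t}$ is a system transition. I will first argue that $S$ contains exactly one system place. The pairwise concurrent set $S$ extends to a cut of $\net^\sigma$ (by Lemma~\ref{lem:pcinmarking}), whose $\lambda$-image is a reachable marking of $\net$ (by Lemma~\ref{lem:branchingmarkings}); by the single-system-player assumption this marking contains exactly one system place, so $\pre{\innet{t}} \subseteq \lambda[\text{cut}]$ contains at most one. As a system transition has at least one system place in its precondition, $\pre{\innet{t}}$ contains exactly one. Since $\lambda$ preserves types, $S$ then contains exactly one system place, call it $s$. If no transition in $\post{s}$ is a $\lambda$-preimage of $\innet{t}$, then $s$ witnesses the conclusion and we are done. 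Otherwise, pick any $t \in \post{s}$ with $\lambda(t) = \innet{t}$; applying Lemma~\ref{lem:justsys} to $s$, $t$ and $S$ (whose hypothesis $\lambda[S] = \pre{\lambda(t)} = \pre{\innet{t}}$ is exactly the assumption) produces a transition $t' \in \transitions^\sigma$ with $\pre{t'} = S$ and $\lambda(t') = \innet{t}$, again contradicting the hypothesis.

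Since the argument is essentially a packaging step for the two helper lemmas, I do not expect a real conceptual obstacle. The only subtle point is the uniqueness of the system place in $S$, which has to be traced through the chain $S \subseteq \text{cut in } \net^\sigma \mapsto \text{reachable marking of } \net$ and the single-system-player convention; once this is in hand, the lemma falls out immediately from Lemmas~\ref{lem:justenv} and \ref{lem:justsys}.
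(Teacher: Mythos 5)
Your proof is correct and takes essentially the same route as the paper: both reduce the statement to Lemmas~\ref{lem:justenv} and~\ref{lem:justsys} via a case distinction (yours on whether $\innet{t}$ is purely environmental, the paper's contrapositive on whether $S$ contains a system place, which amounts to the same thing since $\lambda$ preserves the system/environment partition). Your additional argument that $S$ contains \emph{exactly} one system place is valid but not needed — the paper only requires that \emph{some} system place in $S$ has a $\lambda$-preimage of $\innet{t}$ in its postcondition before invoking Lemma~\ref{lem:justsys}.
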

\begin{proof}
  By contraposition.
  Assume that for all system places $s$ in S, there exists a $t_s \in \post{s}$ such that $\lambda(t_s) = \innet{t}$.
  We need to show that there is $t \in \transitions^\sigma$ such that $S = \pre{t}$ and $\lambda(t) = \innet{t}$.
  If any such $s$ exists, the claim follows from Lemma~\ref{lem:justsys}.
  Else, $S \subseteq \envplaces^\sigma$ and the claim holds by Lemma~\ref{lem:justenv}.
\end{proof}

\begin{lemma}[determinism]
  \label{lem:determinism}
  Let $\cut$ be a cut in $\net^\sigma$ and $t_1, t_2 \in \transitions^\sigma$ such that $s_\cut \in \pre{t_i} \subseteq \cut$ for $i = 1,2$.
  Then, $t_1 = t_2$ holds.
\end{lemma}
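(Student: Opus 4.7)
My plan is to identify a single node in $T_\sigma$ that witnesses all commitment and enabling data for both $t_1$ and $t_2$ simultaneously, and then rule out the bad-vertex cases (X'2a) and (X'2b). Since both $t_1$ and $t_2$ have $s_\cut$ in their preconditions, $\lambda(t_1),\lambda(t_2)\in\post{\lambda(s_\cut)}$, so they correspond to system transitions.

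First, I would locate the $\mathcal{V}_0'$ node $n_0$ in $T_\sigma$ where $s_\cut$ first appears: either $n_0$ is the root (if $s_\cut \in \initialmarking^\sigma$) or $n_0$ is the target of the (E'3) edge at which $s_\cut$ was added to $\beta_\sigma$ as part of a postcondition. By Lemma~\ref{lem:associatedlkc}, $\mathit{LKC}(s_\cut)$ is an associated cut of $n_0$, and its unique (E'1) successor $n_0^+$ chooses a commitment $c$. I would then show $\lambda(t_1),\lambda(t_2) \in c$: each $t_i$ was added at some (E'3) edge whose source has $s_\cut$ in an associated cut (because $s_\cut \in \pre{t_i}$ lies inside that cut), and tracing that source backwards along $T_\sigma$ without consuming $s_\cut$ is only possible via (E'1) and (E'2) edges, which preserve the commitment and lead back to $n_0^+$; the commitment at the (E'3) source therefore equals $c$, forcing $\lambda(t_i) \in c$.

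Next, I would exhibit a single descendant $n'$ of $n_0^+$ whose associated cuts contain $\cut$ and whose label carries commitment $c$. By Lemmas~\ref{lem:lkcpast} and~\ref{lem:cutsreachable}, there is a transition sequence in $\net^\sigma$ from $\mathit{LKC}(s_\cut)$ to $\cut$. Since $s_\cut \in \cut$ and each occurrence-net place has at most one incoming transition, no transition in this sequence consumes $s_\cut$; combined with the single-system-player assumption, each such transition must be purely environmental. The corresponding (E'2) edges from $n_0^+$ are present in $T_\sigma$ (all of Player~1's choices are expanded), and by choosing the witnessing subsets $B = \pre{\hat t_i}$ at each step of the construction, one routes the associated-cut bookkeeping to $\cut$, producing $n'$ with label $(\lambda[\cut], c, R)$ for some $R$.

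Finally, at $n'$ both $\lambda(t_1)$ and $\lambda(t_2)$ lie in $c$ and are enabled in $\lambda[\cut]$ (since $\pre{t_i} \subseteq \cut$). Because $T_\sigma$ is winning, the absence of type-(X'2a) bad vertices forces $\lambda(t_1) = \lambda(t_2) =: \innet{t}$, and the absence of (X'2b) forces $\pre{\innet{t}}(\innet{p}) \in \{0, \lambda[\cut](\innet{p})\}$ for every place $\innet{p}$, which uniquely pins down $\pre{t_1} = \pre{t_2} = \{p \in \cut \mid \lambda(p) \in \pre{\innet{t}}\}$. The branching-process uniqueness axiom then yields $t_1 = t_2$. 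I expect the main obstacle to be the middle step, where we must transport occurrence-net data about $\cut$ to a concrete game-graph node whose commitment is exactly the $c$ chosen at $n_0$; Lemmas~\ref{lem:associatedlkc}, \ref{lem:lkcpast}, and \ref{lem:cutsreachable} are precisely the tools that make this transport possible.
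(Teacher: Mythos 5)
Your overall route is the same as the paper's: trace the insertion points of $t_1$ and $t_2$ back to the Player-0 node(s) where the relevant commitment was fixed, use Lemma~\ref{lem:associatedlkc} to identify the associated cut there as $\mathit{LKC}(s_\cut)$, walk forward along (E'2) edges to a node associated with $\cut$ that still carries that commitment, and finish with (X'2a), (X'2b) and the branching-process condition. Your endgame is correct and matches the paper, including the observation that absence of (X'2b) forces every $\lambda$-preimage in $\cut$ of a place of $\pre{\innet{t}}$ to be consumed, which pins down $\pre{t_1}=\pre{t_2}$.

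The gap is the middle claim that tracing back from the two (E'3) insertion points ``leads back to $n_0^+$'', i.e., to a \emph{single} node of $T_\sigma$. The place $s_\cut$ is indeed created once, but the transition $t_0 \in \transitions^\sigma$ with $s_\cut \in \post{t_0}$ can be \emph{re-fired} at many other nodes of the tree: the construction reuses an existing transition whenever a suitable precondition $B$ with $\lambda[B]=\pre{\lambda(t_0)}$ shows up in an associated cut, and different scheduling orders chosen by Player~1 produce such cuts in different subtrees. Hence the latest $\mathcal{V}_0'$ ancestors $m_1$ and $m_2$ of the two insertion points can be \emph{distinct} tree nodes, and your argument gives you two commitments $c_1 \ni \lambda(t_1)$ and $c_2 \ni \lambda(t_2)$ with no reason to coincide --- but the final (X'2a) step needs both labels inside \emph{one} commitment at \emph{one} node. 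The repair is exactly what the paper does: by Lemma~\ref{lem:associatedlkc} both $m_1$ and $m_2$ have associated cut $\mathit{LKC}(s_\cut)$ and therefore carry the identical label $(\lambda[\mathit{LKC}(s_\cut)], \top, \{\lambda(s_\cut)\})$, and since $T_\sigma$ is assumed memoryless (without loss of generality, as stated before the construction), they choose the same commitment $c$. You never invoke memorylessness, and without it this step fails.
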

\begin{proof}
  Since both transitions have a system place in their precondition, each $t_i$ must have been inserted for a step corresponding to an edge of type (E'3) in the graph game strategy.
  This step goes out from a node $n_i$ labeled with a vertex $(\marking_i, c_i, R_i)$ such that $\lambda(t_i) \in c_i$.
  $s$ lies in one of the cuts associated with $n_i$.
  Call the latest $\mathcal{V}_0'$ node on the path from the root $m_i$ and call its label $(\marking_i', \top, \{\lambda(s)\})$.
  These nodes must have an associated cut containing $s$ as well.
  According to Lemma~\ref{lem:associatedlkc}, $m_1$ and $m_2$ are both associated with the cut $\mathit{LKC}(s)$ and thus $\marking_1' = \lambda[\mathit{LKC}(s)] = \marking_2'$.
  Since $T_\sigma$ is memoryless, the unique successors of the nodes $m_i$ must be labeled with the same vertex $( \marking_1', c, \{ \lambda(s) \})$.
  Because all steps between the respective successor and $n_i$ correspond to edges of type (E'2), the commitment $c$ is preserved and $c_1 = c = c_2$.
  By Lemma~\ref{lem:lkcpast}, $\mathit{LKC}(s) \subseteq \past{\cut}$ holds, and thus, by Lemma~\ref{lem:cutsreachable}, we can reach $\cut$ from $\mathit{LKC}(s)$ by firing a sequence of transitions.
  All these transitions must be purely environmental because $s$ is preserved in the sequence of cuts.
  From $m_1$ onwards, we follow the corresponding sequence of nodes and finally reach a node labeled with $\left( \lambda[\cut], c, R\right)$, mapped to $\cut$.
  Then, both $\lambda(t_1)$ and $\lambda(t_2)$ lie in $c$ and are enabled in $\lambda[\cut]$.
  If $\lambda(t_1)$ and $\lambda(t_2)$ were distinct, this node would be labeled with a losing vertex of type (X'2a), which contradicts the assumption that $T_\sigma$ is a winning strategy.
  Thus, $\lambda(t_1) = \lambda(t_2)$.
  If $\pre{t_1} \neq \pre{t_2}$, there would be a place $p \in \pre{t_1} - \pre{t_2}$.
  Then, $0 < \pre{\lambda(t)}(\lambda(p)) < \lambda[\cut](\lambda(p))$.
  Therefore, the node would be labeled with a losing vertex of type (X'2b), contradiction.
  Thus, $\lambda(t_1) = \lambda(t_2)$ and $\pre{t_1} = \pre{t_2}$ and, because $\beta_\sigma$ is a branching process, $t_1 = t_2$.
\end{proof}

\begin{lemma}[deadlock avoidance]
  \label{lem:deadlock}
  Let $\cut$ be a cut in $\net^\sigma$ and $\innet{t} \in \transitions$ such that $\pre{\innet{t}} \subseteq \lambda[\cut]$. Then, there exists $t' \in \transitions^\sigma$ such that $\pre{t'} \subseteq \cut$.
  (Note that we do not require $\lambda(t') = \innet{t}$.)
\end{lemma}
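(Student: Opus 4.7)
The plan is to mirror the safety argument: first locate a node $n'$ in $T_\sigma$ that is associated with $\cut$ and labeled by a player-1 vertex $(\lambda[\cut], c, R)$, then exploit the fact that $n'$ cannot be a bad vertex to guarantee that some useful transition is enabled. From this I will extract the desired transition $t' \in \transitions^\sigma$ by distinguishing whether the enabled transition is purely environmental or a system transition.

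First I would set $s \coloneqq s_\cut$ and consider the player-0 node $n_0$ under which $s$ was introduced into $\net^\sigma$ (either the root of $T_\sigma$, or the target of the (E'3) step that created $s$). By Lemma~\ref{lem:associatedlkc}, $n_0$ is associated with $\mathit{LKC}(s)$. Lemma~\ref{lem:lkcpast} gives $\mathit{LKC}(s) \subseteq \past{\cut}$, and Lemma~\ref{lem:cutsreachable} yields a sequence of transitions from $\mathit{LKC}(s)$ to $\cut$. Since $s$ lies in both cuts, every such transition is purely environmental. Starting from the unique (E'1)-successor $n_1$ of $n_0$ and following the corresponding chain of (E'2) edges in $T_\sigma$, I reach a node $n'$ associated with $\cut$ and labeled $(\lambda[\cut], c, R)$ for the commitment $c$ chosen at $n_0$.

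Since $T_\sigma$ is winning, $n'$ is not a bad vertex. As $\innet{t}$ is enabled in $\lambda[\cut]$, the negation of the (X'3) condition gives one of two cases. In Case~A, some purely environmental transition $\innet{u}$ is enabled in $\lambda[\cut]$; picking any $B \subseteq \cut$ with $\lambda[B] = \pre{\innet{u}}$ and invoking Lemma~\ref{lem:justenv} produces $t' \in \transitions^\sigma$ with $\pre{t'} = B \subseteq \cut$. In Case~B, some system transition $\innet{u} \in c$ is enabled in $\lambda[\cut]$; pick $S \subseteq \cut$ with $\lambda[S] = \pre{\innet{u}}$, noting that $s \in S$ because $\innet{s_\marking} \in \pre{\innet{u}}$ has $s$ as its only preimage in $\cut$. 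To invoke Lemma~\ref{lem:justsys} I first need a transition in $\post{s}$ labeled $\innet{u}$: applying Lemma~\ref{lemselect} at $n_1$ with the set $S$ produces a node $n''$ reachable from $n_1$ by (E'2) edges, with associated cut $\cut'' \supseteq S$ and responsibility $R'' \subseteq \{\lambda(s)\} + \pre{\innet{u}}$. Because $\cut''$ preserves $s$, $\lambda[\cut'']$ contains only one copy of $\lambda(s)$, so combining this with $R'' \subseteq \lambda[\cut'']$ sharpens the bound to $R'' \subseteq \pre{\innet{u}}$. Hence the (E'3) edge firing $\innet{u}$ exists at $n''$, is followed in $T_\sigma$, and the construction inserts a transition $t'' \in \transitions^\sigma$ with $\lambda(t'') = \innet{u}$ and $s \in \pre{t''}$. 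Lemma~\ref{lem:justsys} then delivers the required $t' \in \transitions^\sigma$ with $\pre{t'} = S \subseteq \cut$.

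The main obstacle is Case~B: the responsibility multiset can block a direct (E'3) edge at $n'$, so I must use Lemma~\ref{lemselect} to reach a node where $R''$ is small enough. The single-system-player assumption is crucial here to eliminate the spurious extra copy of $\lambda(s)$ from the bound provided by the lemma. Case~A and the initial setup are essentially a rerun of the safety argument.
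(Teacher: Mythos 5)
Your proof is correct and follows essentially the same route as the paper's: locate a player-1 node associated with $\cut$ via Lemma~\ref{lem:associatedlkc}, Lemma~\ref{lem:lkcpast} and Lemma~\ref{lem:cutsreachable}, use the absence of an (X'3) bad vertex to obtain either an enabled purely environmental transition (handled by Lemma~\ref{lem:justenv}) or an enabled system transition in $c$, and in the latter case apply Lemma~\ref{lemselect} with the sharpening $R'' \subseteq \pre{\innet{u}}$ to force the (E'3) edge and hence the insertion of the preimage. The only cosmetic differences are that the paper dispatches the purely environmental case up front rather than via the (X'3) case split, and that your final appeal to Lemma~\ref{lem:justsys} is not strictly needed since the construction already inserts a preimage with precondition $S$ at that step.
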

\begin{proof}
  If $\innet{t}$ is purely environmental, this directly follows from Lemma~\ref{lem:justenv}.
  We can thus assume that no purely environmental transition is enabled in $\lambda[\cut]$.
  $s_\cut$ was added to the strategy for a node labeled with a vertex in $\mathcal{V}_0'$, which has a unique successor $n$ labeled $(\lambda[\mathit{LKC}(s_\cut)], c, \{\lambda(s_\cut)\})$.
  Both have as an associated cut $\text{LKC}(s_\cut)$ by Lemma~\ref{lem:associatedlkc}.
  From $n$, we can reach a node that is mapped to the cut $\cut$ and is labeled with $\left( \lambda[\cut], c, R \right)$, only using edges of type (E'2).
  Since this node cannot be labeled with a bad vertex of type~(X'3) and since $\innet{t}$ is enabled in the marking $\lambda[\cut]$, but no purely environmental transition is enabled by assumption, there has to be a transition $\innet{t'} \in \transitions$ such that $\lambda(s_\cut) \in \pre{\innet{t'}} \subseteq \lambda[\cut]$ and $\innet{t'} \in c$.
  By choosing $S \subseteq \cut$ such that $\lambda[S] = \pre{\innet{t'}}$ and by applying Lemma~\ref{lemselect}, we can reach another node $n'$ from $n$ that is labeled $(\marking', c, R')$ with an associated cut $\cut'$ such that $S \subseteq \cut'$ and $R' \subseteq \{\lambda(s_\cut)\} + \lambda[S]$.
  As argued in Lemma~\ref{lem:justsys}, it even holds that $R' \subseteq \lambda[S]$.
  $\innet{t'}$ is enabled in $\lambda[\cut'] = \marking'$, and $R' \subseteq \pre{\innet{t'}}$.
  From $n'$, the environment can make a move corresponding to the execution of $\innet{t'}$.
  Thus, a preimage $t'$ of $\innet{t'}$ must have been inserted into $\transitions^\sigma$ at the latest when visiting this step, and $\pre{t'} \subseteq \cut$ holds as claimed.
\end{proof}

\subsection{Theorem~\ref{thm:exphardness}: EXPTIME-hardness}
\label{sec:exphardnessproof}

\begin{figure}[tb]
  \centering
  \begin{subfigure}[b]{.38\textwidth}
    \centering
    \includegraphics[width=\textwidth]{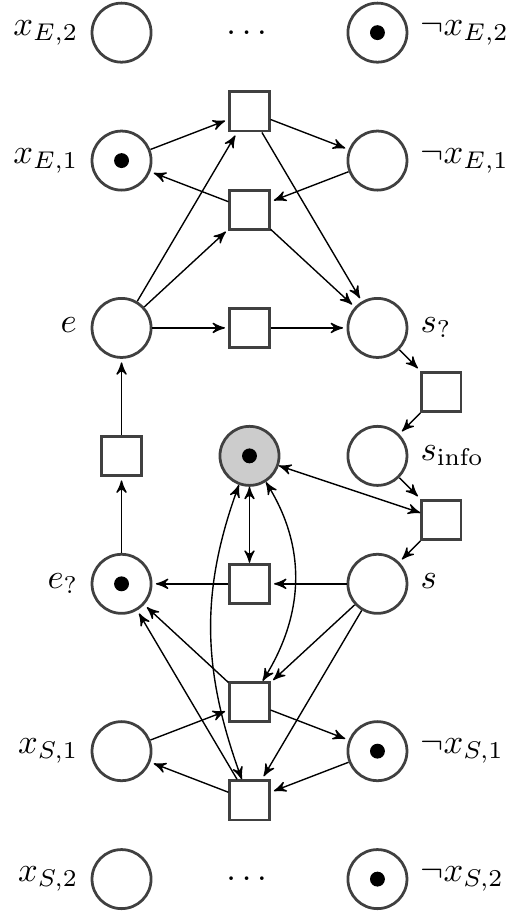}
    \caption{First half of the game.}
    \label{fig:1vnhardness1}
  \end{subfigure}
  \quad
  \begin{subfigure}[b]{.57\textwidth}
    \centering
    \includegraphics[width=\textwidth]{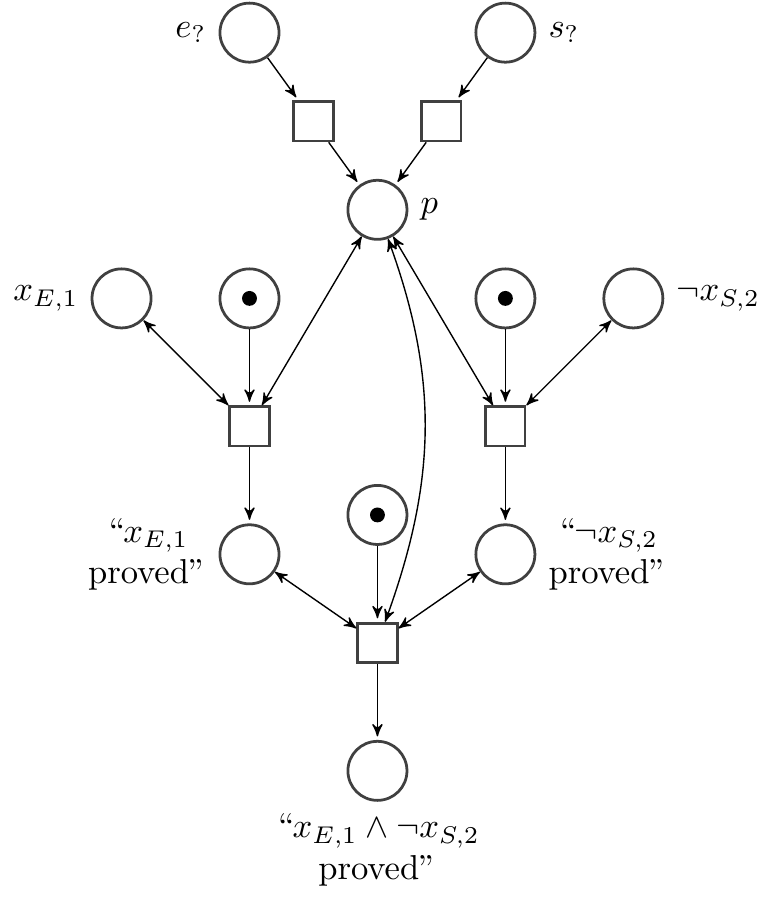}
    \caption{Second half.}
    \label{fig:1vnhardness2}
  \end{subfigure}
  \caption{Example of the EXPTIME-hardness reduction from $G_5$.}
  \label{fig:1vnhardness}
\end{figure}

\exphardness*
\begin{proof}
  The combinatorial game $G_5$ has already been introduced in the proof sketch.

  Conceptually, the constructed game can be divided into two parts, illustrated in Fig.~\ref{fig:1vnhardness1} and \ref{fig:1vnhardness2}.
  The most important token is an environment token, which we will call the turn counter.
  In the first half of the game, it cycles between five places: $e_?$, $e$, $s_?$, $s_\text{info}$ and $s$.
  If $P_S$ is the first to move, this token starts in $s_?$; otherwise, it starts in $e_?$.
  The place $s$ ($e$) signals that the next fired transition corresponds to a move by $P_S$ ($P_E$).
  $s_?$ (or $e_?$) means that the environment can choose to advance to $s$ (or $e$) or can instead end the first part of the game and prove that $\phi$ currently holds.
  Finally, $s_\text{info}$ is an intermediary place between $s_?$ and $s$.
  If the turn counter is there, the environment has already chosen to let $P_S$ make her next move, but must still inform her of the last move made by her opponent.

  Each boolean variable is represented by an environment token located in one of two places, representing its possible truth values.
  In the beginning, the positions of these tokens correspond to the initial configuration of the variables.
  If it is $P_E$'s turn, i.e., if the turn counter lies in $e$, the environment can toggle the place of one variable in $X_E$ and move the turn counter to $s_?$ simultaneously using a joint transition or simply move the turn counter in order to pass.
  The single system player is located in one single system place and is responsible for choosing the moves made by $P_S$.
  She gets informed about previous moves via a transition leading from $s_\text{info}$ to $s$ and from the system place back to itself.
  Then, the system player chooses her move by allowing one of the transitions, each of which leave the system player in her current place, move the turn counter from $s$ to $e_?$ and toggle one variable in $X_S$.
  Again, there is one more transition that does not toggle any variable, which is used for passing.

  Any time the turn counter is in $s_?$ or $e_?$, the environment can choose to allow the next move by going to $s_\text{info}$ or $e$.
  Alternatively, the environment can choose to prove that the formula is now satisfied and thereby win the game.
  In order to achieve this, the turn counter token can permanently leave the places mentioned so far and go to a fresh place $p$.
  This starts the second half of the game.
  From then on, the variables do not change any more because every such transition requires the turn counter to lie in either $e$ or $s$.
  For every subformula of $\phi$, we introduce one more environment token with two places.
  All such tokens start out in a place that represents that the environment has not yet proved the respective subformula to be true.
  The turn counter can now move them to the second place to prove the subformula.
  Literals can only be proved if the corresponding variable had the right value.
  In order to prove conjunctions, the counter token takes a joint transition that moves the subformula token and additionally synchronizes with the tokens corresponding to both conjuncts, which must have been proved before.
  Disjunctions can be proved with one of two transitions, which only require synchronization with one of the disjuncts, which must have been previously proved.
  The place corresponding to having proved the entire formula is marked as a bad place, i.e., the bad markings are exactly the markings containing this place.

  This means that the environment can win iff $P_E$ wins the $G_5$ game by satisfying the formula.
  The system player is forced to keep playing the game because she would otherwise cause a deadlock.
  She can base her decisions on the current state of all variables.
  For this reason, winning strategies between the two games can be translated in both directions.
\end{proof}

\subsection{Theorem~\ref{thm:nphardness}: NP-hardness for $p \geq 3$ environment players}
\label{sec:nphardnessproof}
\nphardness*
\begin{proof}
The upper bound has been shown at the beginning of Section~\ref{sec:sparse}.

We show hardness by a reduction from the canonical NP-complete problem 3SAT.
Assume that we are given a formula $\phi$ over the variables $X=x_1,\dots,x_r$ of the following shape:
\[ \bigwedge_{1 \leq i \leq n}{\underbrace{C_{i,1} \lor C_{i,2} \lor C_{i,3}}_{C_i}}\text{,} \]
where all $C_{i,j}$ are literals over the variables in $X$.

We begin constructing our Petri game by adding initial places
for the system and the three environment players.
For every clause $C_i$, we add a joint transition for the
three environment tokens to three fresh places.
  We will later refer to these transitions as \emph{clause transitions}.
Furthermore, we label the places that the transitions lead to
with the literals of the clause.
It is important that we add distinct nodes for different
appearances of the same literal. From each
of these places, we offer a transition together with the
system token at its initial place to sink places for
both tokens.
After doing this for all clauses, we pick all pairs of places
labeled with complementary literals, i.e., where one literal
is the negation of the other.
For each such pair, we add a new transition from the three
initial environment places to the pair of places as well
as one additional state with a self-loop transition.
Because these transitions will be used by the environment
to find a contradiction in an invalid assignment, we call them
\emph{contradiction transitions}.
The game does not have any bad markings.

\begin{figure}
  \centering
  \includegraphics[width=.8\textwidth]{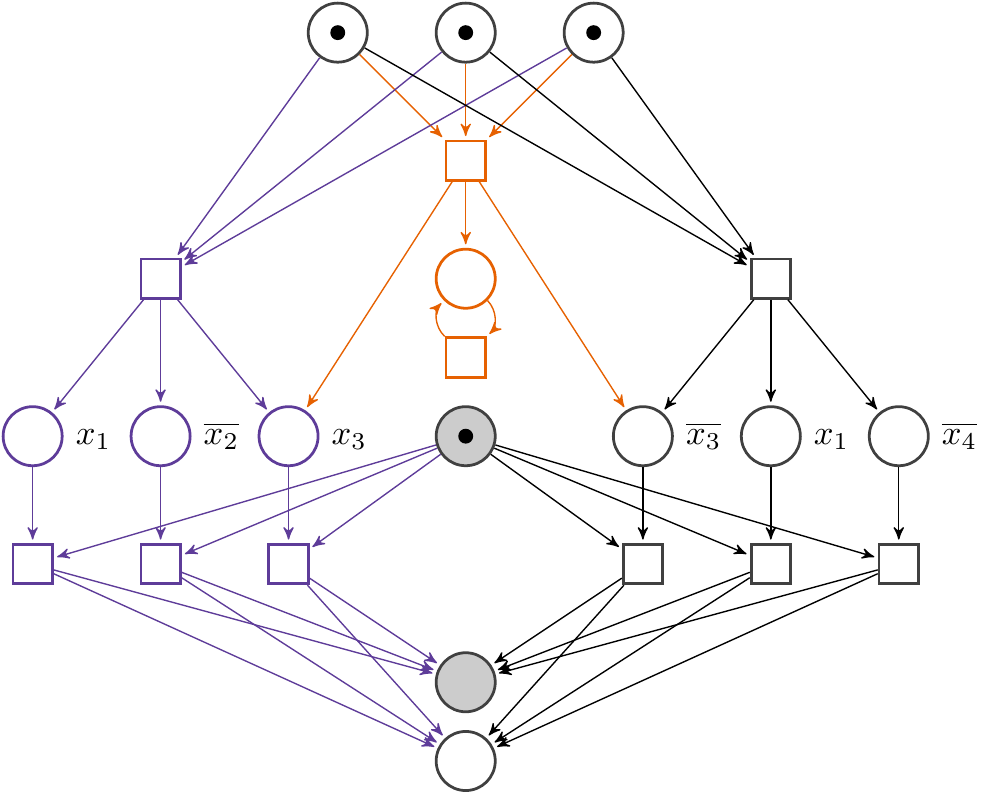}

  \caption{Example reduction for the formula $(x_1 \lor \neg x_2 \lor x_3) \land (\neg x_3 \lor x_1 \lor \neg x_4)$.}
  \label{fig:3satreduc}
\end{figure}

Figure~\ref{fig:3satreduc} gives an example of the reduction for the
SAT instance $(x_1 \lor \neg x_2 \lor x_3) \land (\neg x_3 \lor x_1 \lor \neg x_4)$.
  The parts of the game that are added for the first of the two clauses are marked in violet, while all parts added for the pair of complementary literals $x_3$ and $\neg x_3$ are colored in orange.

  It remains to prove that, for every instance $\phi$ of 3SAT, the constructed game has a strategy iff $\phi$ was satisfiable:

  \begin{description}
  \item[``$\Leftarrow$''] Assume that $\phi$ has a satisfying assignment.
  Consider an arbitrary clause of the formula.
  In this clause, the assignment satisfies at least one literal.
  In the construction of the game, we added one place per literal and from each of these places, there is one joint transition with the system.
  The Petri game strategy will allow exactly one of these transitions, which must correspond to an arbitrarily chosen literal satisfied in the assignment, and forbid the other two--even if they are satisfied as well.
  Doing this for every clause uniquely characterizes a strategy.

  While the strategy would trivially satisfy the safety condition because there are no bad markings at all, it is not clear that this strategy is deterministic and deadlock avoiding.
  We will show that this is the case regardless of how the environment acts:
  Assume that the environment takes a clause transition and thus the three environment tokens are in the three places that correspond to the three literals of the clause.
  The only enabled transitions are the three transitions for the literal places.
  Because our strategy allows exactly one of the three, there is neither a deadlock nor nondeterminism in this marking.\footnote{This is also the reason why we need distinct places for the same literal occurring in different clauses. Otherwise, we might be forced to allow multiple transitions corresponding to literals in the same clause. This then would cause nondeterminism in the initial system place.}
  After taking the chosen transition, the system cannot move anymore.
  Thus, the system does not actively provoke a deadlock and there is no nondeterminism involved.
  Assume now that the environment chooses a contradiction transition instead.
  Then, two environment tokens lie in places corresponding to complementary literals and one token lies in the place with the self-loop transition.
  Because this last token can move infinitely often by itself, deadlock is no longer possible.
  There now are two enabled transitions in which the system can participate.
  Because we only allowed transitions whose literals are satisfied in the assignment, they cannot be both allowed at the same time and thus the strategy is deterministic in the system place.

  \item[``$\Rightarrow$''] Assume now that there is a Petri game strategy for the system.
  From this, we will iteratively build a set of literals, which can be completed to form a satisfying assignment for $\phi$.
  For every clause, consider the case where the environment took the corresponding clause transition.
  Three transitions are enabled and all involve the system.
  Because the strategy is deadlock avoiding, at least one of the transitions belonging to the literals must be allowed.
  We add all such literals to our set and continue with the next clause.
  The resulting set contains at least one literal of every clause.
  Furthermore, it contains no pair of complementary literals:
  If the contrary were the case, the strategy would allow two transitions belonging to two complementary literals.
  Then, the environment could take the matching contradiction transition and thus enable both of these transitions at the same time.
  Both transitions involve the initial system place and would be allowed by the strategy.
  Then, the strategy would not be deterministic in this place, which is a contradiction.
  We can thus conclude that there is an assignment of boolean values that satisfies all literals in the collected set.
  Every such assignment satisfies every single clause and thus the whole formula.
  This proves that $\phi$ is satisfiable. \qedhere
  \end{description}
\end{proof}

\subsection{Theorem~\ref{thm:polynomialtwo}: Polynomial algorithm for $p \leq 2$ environment players}
\label{sec:polynomialtwoproof}
\polynomialtwo*
\begin{proof}
In order to get a polynomial-time algorithm, it suffices to replace solving the SAT instances generated in the algorithm presented in Appendix~\ref{sec:algodec} by polynomial-time computations.
For this, we optimize the generation of the SAT instance to generate only clauses with at most two literals.
Then, the resulting 2SAT instance can be decided and a solution be constructed in linear time as described in \cite{aspvall1979linear}.

Assume that we are currently at a vertex $(\marking, \top)$ and that we are looking for a commitment such that Player~1 cannot reach a bad vertex or a vertex in $\mathit{Attr}_i$ without steps by her opponent.
In a preprocessing step, we check whether a bad vertex is reachable, and if so, no adequate commitment exists and we return.
In another preprocessing step, we collect all system transitions that can be used by Player~1 to reach $\mathit{Attr}_i$.
Since they cannot end up in the commitment anyway, we eliminate these transitions for the further analysis.
Similarly, we remove all transitions that cannot end up in the commitment because they would lead to bad vertices of type~(X2b).
This corresponds to all constraints $\neg \innet{t}$ added in line~\ref{x2bend} of Algorithm~\ref{alg}.
Without loss of generality, we then only need to consider the reachability of bad vertices of types (X2a) and (X3).
In other words, we only consider deadlock constraints and nondeterminism constraints for distinct transitions (as opposed to nondeterminism of the same transition firing from different preconditions in the marking).

First, we observe that we do not have to distinguish between multiple transitions with the same precondition.
Choosing two transitions with the same precondition is unnecessary:
Because they are enabled in exactly the same markings, either one would suffice to avoid deadlock and choosing only one of them can only reduce nondeterminism, not increase it.
Moreover, if choosing one transition satisfies deadlock avoidance and determinism, any other transition with the same precondition would do so as well.
Thus, we can optimize the algorithm by considering only one arbitrary transition per precondition.
None of the other transitions will be added to the commitment.

If any system transition enabled in $\marking$ has no environment places in its precondition, we can commit to just this transition in order to satisfy determinism and deadlock avoidance.
Otherwise, we consider all markings reachable via purely environmental transitions.
Call such a marking $\marking' = \{\innet{s_\marking}, \innet{e_1}, \innet{e_2}\}$.
In $\marking'$, there are only three possible preconditions for enabled system transitions: $\{\innet{s_\marking}, \innet{e_1}\}$, $\{\innet{s_\marking}, \innet{e_2}\}$ or $\{\innet{s_\marking}, \innet{e_1}, \innet{e_2}\}$.
As described above, we only need to consider one transition per precondition.
If they exist, we call them $\innet{t_1}$, $\innet{t_2}$ and $\innet{t_{1,2}}$, respectively.

We cannot directly add the constraints as in the main algorithm since deadlock could still add 3-clauses of the shape $\innet{t_1} \lor \innet{t_2} \lor \innet{t_{1,2}}$ to our formula.
This only happens in the case where all three types of transitions are available and where no purely environmental transitions are enabled (else, no deadlock constraint is added).
However, this case is easy to deal with:
Since $\innet{t_{1,2}}$ is only enabled in this single marking, enabling or disabling it does not influence deadlock and nondeterminism in other reachable markings.
Because of this, we do not use $\innet{t_{1,2}}$ in the SAT formula and do not add any constraint for deadlock in $\marking'$.
For nondeterminism, we only add the constraint $\neg \innet{t_1} \lor \neg \innet{t_2}$.
If the final SAT instance is unsatisfiable, it would not be satisfiable with the original, stronger constraints either.
Else, we select $\innet{t_{1,2}}$ for the commitment iff both $\innet{t_1}$ and $\innet{t_2}$ are disabled in the satisfying assignment.
If one of them is selected already, the deadlock constraint is satisfied and disabling $\innet{t_{1,2}}$ ensures determinism.
Else, adding $\innet{t_{1,2}}$ to the commitment avoids deadlock in $\marking'$ and cannot lead to nondeterminism since no other transition in the commitment is enabled in $\marking'$.

These steps generate a 2SAT formula, which can be solved in linear time.
We then either report the existence of such a strategy or complete the satisfying assignment by the left out three-player transitions as described above to get an adequate commitment for building the strategy.
\end{proof}

\section{Evaluating commitments symbolically in the graph game}
\label{sec:algorithm}

\subsection{Description}
\label{sec:algodec}
\begin{algorithm}
  \DontPrintSemicolon
  \KwData{A bounded Petri game $\game$ with one system player.}
  \KwResult{If $\game$ has a strategy, return $\top$. Else, return $\bot$.}
  $\mathit{Attr}_0 \coloneqq \emptyset$\;
  \For(\tcp*[f]{$\leq \left| \reachablemarkings{\net} \right| + 1$ iterations}){$i \coloneqq 0$ \KwTo $\infty$}{
    $\mathit{Attr}_{i+1} \coloneqq \mathit{Attr}_i$\;
    \ForAll{$\marking \in \reachablemarkings{\net} \setminus \mathit{Attr}_i$}{
      $\mathit{sat} \coloneqq \top$\;
      $\mathit{reachable} \coloneqq $ markings reachable from $\marking$ via purely environmental transitions\; \nllabel{reachablesearch}
      \ForAll{$\marking' \in \mathit{reachable}$}{ \nllabel{reachableforbegin}
        \If{$\marking' \in \badmarkings$}{
          $\mathit{sat} \coloneqq \bot$\;
          \textbf{break}\;
        }
        \ForAll{$\innet{t}, \innet{t'} \in \transitions$}{ \nllabel{x2begin}
          \If{$\innet{t} \neq \innet{t'} \land \text{both are system transitions and enabled in $\marking'$}$}{ \nllabel{x2mid}
            $\mathit{sat} \coloneqq \mathit{sat} \land (\neg \innet{t} \lor \neg \innet{t'})$ \nllabel{x2end} \;
          }
        }
        \ForAll{$\innet{t} \in \transitions$}{ \nllabel{x2bbegin}
          \If{$\innet{t} \text{ enabled in } \marking' \land 0 < \pre{\innet{t}}(\innet{p}) < \marking'(\innet{p}) \text{ for some } \innet{p} \in \places$}{ \nllabel{x2bmid}
            $\mathit{sat} \coloneqq \mathit{sat} \land \neg \innet{t}$ \nllabel{x2bend} \;
          }
        }
        $\mathit{enabled} \coloneqq \text{enabled transitions in $\marking'$}$\;
        \If{$\mathit{enabled} \neq \emptyset \land \mathit{enabled}$ only contains system transitions}{
          $\mathit{sat} \coloneqq \mathit{sat} \land \bigvee_{\innet{t} \in \mathit{enabled}} \innet{t}$\;
        }
        \ForAll{$\innet{t} \in \transitions$}{
          \If{$\innet{t} \text{ enabled system transition in $\marking'$} \land \marking' \markingstep{\innet{t}} \marking'' \land \marking'' \in \mathit{Attr}_i$}{
            $\mathit{sat} \coloneqq \mathit{sat} \land \neg \innet{t}$ \nllabel{reachableforend} \;
          }
        }
      }
      \If{$\mathit{sat}$ is unsatisfiable}{
        $\mathit{Attr}_{i+1} \coloneqq \mathit{Attr}_{i+1} \cup \{\marking\}$\;
        \If{$\marking = \initialmarking$}{
          \textbf{return} $\bot$\;
        }
      }
    }
    \If{$\left| \mathit{Attr}_{i+1} \right| = \left| \mathit{Attr}_i \right|$}{
      \textbf{return} $\top$\;\nllabel{posreturn}
    }
  }
  \caption{Deciding Petri games}
  \label{alg}
\end{algorithm}

To speed up solving $\mathit{Graph}(\game)$ in practice, we evaluate commitments symbolically.
Pseudocode for the algorithm is presented in Algorithm~\ref{alg}.

The algorithm computes an increasing sequence $\mathit{Attr}_0 \subsetneq \mathit{Attr}_1 \subsetneq \mathit{Attr}_2 \dots$ of subsets of $\mathcal{V}_0$ until some $\mathit{Attr}_i$ contains the initial vertex $\mathcal{I}$ or until a fixed point is reached.
Because the second component of every vertex is $\top$, the pseudocode represents every vertex $(\marking, \top)$ by $\marking$, whereas we will speak about sets of vertices to stress the connection to the graph game.
For every $i$, $\mathit{Attr}_i$ contains exactly the vertices of Player~0 from which Player~1 can enforce to reach a bad vertex within at most $i$ steps by Player~0.

We initialize $\mathit{Attr}_0$ with the empty set since $\mathcal{X} \subseteq \mathcal{V}_1$ and thus no bad marking can be reached without a move by Player~0.

In every subsequent step, we set $\mathit{Attr}_{i+1}$ to contain all vertices in $\mathit{Attr}_i$ and additionally all vertices $(\marking, \top)$ such that, for all commitments $c$, Player~0 can reach a bad vertex or a vertex in $\mathit{Attr}_i$ from $(\marking, c)$ using only transitions of types (E2) and (E3).
To find out whether $(\marking, \top)$ should be added, we explore all markings reachable from $\marking$ via purely environmental transitions.
For every such reachable marking $\marking'$, we add constraints on the commitment $c$ expressing that $(\marking', c)$ is not a bad vertex and that $(\marking', c)$ does not have a type-(E3) successor in $\mathit{Attr}_i$.
These constraints can be expressed as propositional formulas in conjunctive normal form over the variables $\innet{t} \in \post{s_\marking}$ where variable $\innet{t}$ represents the proposition that $\innet{t} \in c$:

\begin{itemize}
  \item $(\marking', c)$ is a bad vertex of type (X1) iff $\marking' \in \badmarkings$, regardless of the commitment. If such a vertex is reachable, no commitment can prevent Player~1 from winning. Because of this, we add the constraint $\bot$ and stop searching for additional clauses, which would not change the satisfiability of the formula.
  \item $(\marking', c)$ is a bad vertex of type (X2a) iff there are two different transitions $\innet{t_1}, \innet{t_2} \in c$ that are both enabled in $\marking'$. Because of this, we add a conjunct of the following form, where $\innet{t}$ and $\innet{t'}$ range over those transitions in $\post{\innet{s_\marking}}$ that are enabled in $\marking'$:
    \[ \bigwedge_{\substack{\innet{t}, \innet{t'} \\ \innet{t} \neq \innet{t'}}}{\neg \innet{t} \lor \neg \innet{t'}} \]
  \item $(\marking', c)$ is a bad vertex of type (X2b) iff there is an enabled transition $\innet{t} \in c$ and a place $\innet{p}$ such that $0 < \pre{\innet{t}}(\innet{p}) < \marking(\innet{p})$.
    If this is the case, we add a conjunct of the shape $\neg \innet{t}$.
  \item $(\marking', c)$ is a bad vertex of type (X3) iff only system transitions $\innet{t} \notin c$ are enabled in $\marking'$ and at least one such transition is enabled. If any purely environmental transition is enabled in $\marking'$ or if no transition is enabled at all, do not add any constraints. Otherwise, we require one enabled transition to be in $c$, which we express by $\bigvee_{\innet{t}}{\innet{t}}$ where $\innet{t}$ ranges over the enabled transitions in $\marking'$.
  \item $(\marking', c)$ has a successor in $\mathit{Attr}_i \subseteq \mathcal{V}_0$ iff, for any $(\marking_S, \top) \in \mathit{Attr}_i$, there is a $\innet{t} \in c$ such that $\marking' \markingstep{\innet{t}} \marking_S$. Thus, we add the conjunct $\neg \innet{t}$ for any such $\innet{t}$.
\end{itemize}

We then use a SAT solver to decide the satisfiability of the conjunction of all these constraints.
If there is a satisfying assignment, it describes a commitment that can avoid both directly losing and entering $\mathit{Attr}_i$.
If no such assignment exists, Player~1 can force her opponent into a bad vertex within at most $i + 1$ steps by Player~0, which is why we add the vertex to $\mathit{Attr}_{i+1}$.

If, at any point, the initial vertex is added to some $\mathit{Attr}_i$, there is no winning strategy for $\mathit{Graph}(\game)$ and thus no strategy for $\game$.
Else, we iteratively compute the described step until a fixed point $\mathit{Attr}_i = \mathit{Attr}_{i+1}$ is reached.
Since the sets grow in any previous step, this will be the case after at most $\left| \reachablemarkings{\net} \right| + 1$ iterations.
For every vertex outside the fixed point, choose the outgoing edge corresponding to the commitment given by the satisfying assignment of the SAT formula.
This describes a memoryless graph game strategy that never enters a bad vertex and is thus winning.

By executing the reductions in Section~\ref{sec:graphpetri}, we obtain a strategy for $\game$.
This means that, from a system place $s$, the system allows exactly those outgoing transitions whose labels are elements of the commitment chosen from $\lambda[\mathit{LKC}(s)]$.
This uniquely describes a strategy.

\subsection{Asymptotic runtime}
\label{sec:algocomp}
We will now give an upper bound on the runtime of the algorithm.
We want to give this upper bound as a function of three parameters of a game $\game$, namely the number of reachable markings $r$, the number of transitions $t$ and the maximum number of players $p \coloneqq \max{\left\{|\marking| \,\middle|\, \marking \in \reachablemarkings{\net} \right\}}$.

Our runtime analysis does not capture the time for parsing the description of the game, since it might grow arbitrarily large for fixed parameters $r,t,p$.
In particular, the number of bad markings can technically have a great influence on the runtime since it can grow exponentially large in the size of the underlying net.
If the bad markings are given individually, just reading them in might take a long time even though the actual number of bad markings plays no role in the algorithm itself.
In practice, one might want to use a more concise encoding to reduce parsing time.
We simply assume that it is possible to check whether a given marking is bad in $\mathcal{O}(t^2\,p)$ since the total running time will then be dominated by other computations.
For example, recognizing a bad marking can be achieved in just $\mathcal{O}(p)$ using a hash set for individual markings or using a hash set for individual places if the bad markings are specified through a set of bad places.
The time for parsing the input and building suitable data structures must be added to our time bound and might be important, for example if the number of bad markings is much larger than the number of reachable markings or if there are large numbers of unreachable places.
For reasonable inputs however, this term will be vastly dominated by the runtime of the algorithm.
In addition, we will (falsely) assume that the identifiers of transitions and places can be read, written and manipulated in constant time, ignoring the logarithmic factors accounting for the increasing size of identifiers.

The body of the loop between line~\ref{reachableforbegin} and \ref{reachableforend} is dominated by the generation of clauses for bad vertices of type~(X2a) in lines~\ref{x2begin} to \ref{x2end}, taking time $\mathcal{O}(t^2\,p)$.
This leads to a runtime of $\mathcal{O}(r\,t^2\,p)$ for the loop as a whole.
This term dominates the graph search in line~\ref{reachablesearch}, which only takes $\mathcal{O}(r\,t\,p)$ time.\footnote{The implicit graph that is explored has the $r$ many markings as its nodes and every node has $\leq t$ outgoing edges, corresponding to the enabled transitions in a marking. Checking whether a transition is enabled and computing the successor marking after firing a transition can be done in $\mathcal{O}(p)$ if markings, preconditions and postconditions are represented as sorted lists of places.}

The size of each generated SAT formula is bounded by a function in $\mathcal{O}(r\,t^2)$ and the number of its variables by $t$.
A naïve enumeration of assignments and checking whether the formula holds for any of them takes $\mathcal{O}(r\,2^t\,t^2)$.
We expect an off-the-shelf SAT solver to give much better performance for typical games.

The number of iterations of both outer loops is linearly bounded in $r$.
Even though the outermost loop has no explicit exit condition, the size of $\reachablemarkings{\net} \setminus \mathit{Attr}_i$ decreases in every iteration until the algorithm returns in line~\ref{posreturn}.
In the worst case, we add one marking per iteration and detect the fixed point only in the $(r + 1)$th iteration.
By taking into account the two outer loops, we obtain a runtime in $\mathcal{O}(r^2\,(r\,t^2\,p + r\,2^t\,t^2)) = \mathcal{O}(r^3\,t^2(2^t + p))$.

Since the number of reachable markings is bounded by $(k + 1)^{|\places|}$ and since the maximum number of players is bounded by $k\cdot|\places|$, the algorithm is still exponential in the size of the underlying net.

\end{document}